\title{Instruction Sequences and\\ Non-uniform Complexity Theory%
       \thanks{This research was partly carried out in the framework of
               the  Jacquard-project Symbiosis, which is funded by the
               Netherlands Organisation for Scientific Research (NWO).}}
\author{J.A. Bergstra \and C.A. Middelburg}
\institute{Informatics Institute, Faculty of Science,
           University of Amsterdam, \\
           Science Park~904, 1098~XH Amsterdam, the Netherlands \\
           \email{J.A.Bergstra@uva.nl,C.A.Middelburg@uva.nl}}
\begin{document}

\maketitle

\begin{abstract}
We develop theory concerning non-uniform complexity in a setting in
which the notion of single-pass instruction sequence considered in
program algebra is the central notion.
We define counterparts of the complexity classes \PTpoly\ and \NPTpoly\
and formulate a counterpart of the complexity theoretic conjecture that
$\NPT \not\subseteq \PTpoly$.
In addition, we define a notion of completeness for the counterpart of
\NPTpoly\ using a non-uniform reducibility relation and formulate
complexity hypotheses which concern restrictions on the instruction
sequences used for computation.
We think that the theory developed opens up an additional way of
investigating issues concerning non-uniform complexity.
\begin{keywords}
single-pass instruction sequence, non-uniform complexity, \linebreak
non-uniform super-polynomial complexity hypothesis,
super-polynomial feature elimination complexity
\end{keywords}%
\begin{classcode}
F.1.1, F.1.3.
\end{classcode}
\end{abstract}

\section{Introduction}
\label{sect-intro}

In this paper, we develop theory about non-uniform complexity in a
setting in which the notion of single-pass instruction sequence
considered in program algebra is the central notion.

In the first place, we define a counterpart of the classical non-uniform
complexity class \PTpoly\ and formulate a counterpart of a well-known
complexity theoretic conjecture.
The conjecture in question is the conjecture that
$\NPT \not\subseteq \PTpoly$.
Some evidence for this conjecture is the Karp-Lipton
theorem~\cite{KL80a}, which says that the polynomial time hierarchy
collapses to the second level if $\NPT \subseteq \PTpoly$.
If the conjecture is right, then the conjecture that $\PT \neq \NPT$ is
right as well.
The counterpart of the former conjecture introduced in this paper is
called the non-uniform super-polynomial complexity hypothesis.
It is called a hypothesis instead of a conjecture because it is
primarily interesting for its consequences.

Over and above that, we define a counterpart of the non-uniform
complexity class \NPTpoly, introduce a notion of completeness for this
complexity class using a non-uniform reducibility relation, and
formulate three complexity hypotheses which concern restrictions on the
instruction sequences used for computation.
These three hypotheses are called super-polynomial feature elimination
complexity hypotheses.
The first of them is equivalent to the hypothesis that
$\NPTpoly \not\subseteq \PTpoly$.
We do not know whether there are equivalent hypotheses for the other two
hypotheses in well-known settings such as Turing machines with advice
and Boolean circuits.
All three hypotheses are intuitively appealing in the setting of
single-pass instruction sequences.

We show among other things that \PTpoly\ and \NPTpoly\ coincide with
their counterparts in the setting of single-pass instruction sequences
as defined in this paper and that a problem closely related to \iiiSAT\
is \NPT-complete as well as complete for the counterpart of \NPTpoly.

The work presented in this paper is part of a research program which is
concerned with different subjects from the theory of computation and the
area of computer architectures where we come across the relevancy of the
notion of instruction sequence.
The working hypothesis of this research program is that this notion is a
central notion of computer science.
It is clear that instruction sequence is a key concept in practice, but
strangely enough it has as yet not come prominently into the picture in
theoretical circles.

As part of this research program, issues concerning the following
subjects from the theory of computation have been investigated from the
viewpoint that a program is an instruction sequence: semantics of
programming languages~\cite{BM07e,BM07f},\linebreak[3]
expressiveness of programming languages~\cite{BM08h,BM07g}, and
computability~\cite{BM09m,BP04a}.
Performance related matters of instruction sequences have also been
investigated in the spirit of the theory of
computation~\cite{BM09i,BM07f}.
In the area of computer architectures, basic techniques aimed at
increasing processor performance have been studied as part of this
research program (see e.g.~\cite{BM06d}).

The above-mentioned work provides evidence for our hypothesis that the
notion of instruction sequence is a central notion of computer science.
To say the least, it shows that instruction sequences are relevant to
diverse subjects.
In addition, it is to be expected that the emerging developments with
respect to techniques for high-performance program execution on
classical or non-classical computers require that programs are
considered at the level of instruction sequences.
All this has motivated us to continue the above-mentioned research
program with the work on computational complexity presented in this
paper.

Program algebra~\cite{BL02a}, which is intended as a setting suited for
developing theory from the above-mentioned working hypothesis, is taken
for the basis of the development of theory under the research program.
Program algebra is not intended to provide a notation for programs that
is suited for actual programming.
With program algebra we have in view contemplation on programs rather
than construction of programs.

The starting-point of program algebra is the perception of a program as
a single-pass instruction sequence, i.e.\ a finite or infinite sequence
of instructions of which each instruction is executed at most once and
can be dropped after it has been executed or jumped over.
This perception is simple, appealing, and links up with practice.
The concepts underlying the primitives of program algebra are common in
programming, but the particular form of the primitives is not common.
The predominant concern in the design of program algebra has been to
achieve simple syntax and semantics, while maintaining the expressive
power of arbitrary finite control.
We do not need the whole of program algebra in this paper, but
nevertheless we will present the whole.
The purpose of that is to keep the grounds of its design recognizable.

A single-pass instruction sequence under execution is considered to
produce a behaviour to be controlled by some execution environment.
Threads as considered in basic thread algebra~\cite{BL02a} model such
behaviours:
upon each action performed by a thread, a reply from the execution
environment determines how the thread proceeds.
A thread may make use of services, i.e.\ components of the execution
environment.
Once introduced into threads and services, it is rather obvious that
each Turing machine can be simulated by means of a thread that makes use
of a service.
The thread and service correspond to the finite control and tape of the
Turing machine.
Simulation by means of a thread that makes use of a service is also
possible for other machines that have been proposed as a computational
model, such as register machines or multi-stack machines.

The threads that correspond to the finite controls of Turing machines
are examples of regular threads, i.e.\ threads that can only be in a
finite number of states.
The behaviours of all instruction sequences considered in program
algebra are regular threads and each regular thread is produced by some
instruction sequence.
This implies, for instance, that program algebra can be used to program
the finite control of any Turing machine.

In our study of non-uniform computational complexity, we are concerned
with functions that can be computed by finite instruction sequences
whose behaviours make use of services that make up Boolean registers.
The instruction sequences considered in program algebra are sufficient
to define a counterpart of \PTpoly, but not to define a counterpart of
\NPTpoly.
For a counterpart of \NPTpoly, we introduce an extension of program
algebra that allows for single-pass instruction sequences to split and
an extension of basic thread algebra with a behavioural counterpart of
instruction sequence splitting that is reminiscent of thread forking.

The approach to complexity followed in this paper is not suited to
uniform complexity.
This is not considered a great drawback.
Non-uniform complexity is the relevant notion of complexity when
studying what looks to be the major complexity issue in practice: the
scale-dependence of what is an efficient solution for a computational
problem.

This paper is organized as follows.
First, we review basic thread algebra and program algebra
(Sections~\ref{sect-BTA} and~\ref{sect-PGA}).
Next, we present mechanisms for interaction of threads with services and
give a description of Boolean register services
(Sections~\ref{sect-TSI} and~\ref{sect-Boolean-register}).
Then, we introduce the complexity class corresponding to \PTpoly\ and
formulate the non-uniform super-polynomial complexity hypothesis
(Sections~\ref{sect-PLIS} and~\ref{sect-hypothesis}).
After that, we present extensions of program algebra and basic thread
algebra needed in the subsequent sections
(Section~\ref{sect-splitting}).
Following this, we introduce the complexity class corresponding to
\NPTpoly\ and formulate the super-polynomial feature elimination
complexity hypotheses
(Sections~\ref{sect-PLSIS} and~\ref{sect-feature-elim}).
Finally, we make some concluding remarks (Section~\ref{sect-concl}).

Some familiarity with complexity theory is assumed.
The definitions of the complexity theoretic notions that are assumed
known can be found in many textbooks on computational complexity.
We mention~\cite{AB09a,BDG88a,Gol08a} as examples of textbooks in which
all the notions in question are introduced.

\section{Basic Thread Algebra}
\label{sect-BTA}

In this section, we review \BTA\ (Basic Thread Algebra), a form of
process algebra which is tailored to the description and analysis of the
behaviours of sequential programs under execution.
The behaviours concerned are called \emph{threads}.

In \BTA, it is assumed that a fixed but arbitrary set $\BAct$ of
\emph{basic actions}, with $\Tau \not\in \BAct$, has been given.
We write $\BActTau$ for $\BAct \union \set{\Tau}$.
The members of $\BActTau$ are referred to as \emph{actions}.

Threads proceed by performing actions in a sequential fashion.
Each basic action performed by a thread is taken as a command to be
processed by some service provided by the execution environment of the
thread.
The processing of a command may involve a change of state of the service
concerned.
At completion of the processing of the command, the service produces a
reply value.
This reply is either $\True$ or $\False$ and is returned to the thread
concerned.
Performing the action $\Tau$ will never lead to a state change and
always lead to the reply $\True$, but notwithstanding that its presence
matters.%
\footnote
{The action $\Tau$ reminds of the action $\tau$ used in process algebra.
 The Greek letter is not used here because the characteristic equations
 of the latter action are not~implied.}

\BTA\ has one sort: the sort $\Thr$ of \emph{threads}.
To build terms of sort $\Thr$, \BTA\ has the following constants and
operators:
\begin{iteml}
\item
the \emph{deadlock} constant $\const{\DeadEnd}{\Thr}$;
\item
the \emph{termination} constant $\const{\Stop}{\Thr}$;
\item
for each $a \in \BActTau$, the binary \emph{postconditional composition}
operator $\funct{\pcc{\ph}{a}{\ph}}{\linebreak[2]\Thr \x \Thr}{\Thr}$.
\end{iteml}
Terms of sort $\Thr$ are built as usual (see e.g.~\cite{ST99a,Wir90a}).
Throughout the paper, we assume that there is a countably infinite set
of variables of sort $\Thr$ which includes $x,y,z$.

We use infix notation for postconditional composition.
We introduce \emph{action prefixing} as an abbreviation: $a \bapf p$,
where $p$ is a term of sort $\Thr$, abbreviates $\pcc{p}{a}{p}$.

Let $p$ and $q$ be closed terms of sort $\Thr$ and $a \in \BActTau$.
Then $\pcc{p}{a}{q}$ will perform action $a$, and after that proceed as
$p$ if the processing of $a$ leads to the reply $\True$ (called a
positive reply), and proceed as $q$ if the processing of $a$ leads to
the reply $\False$ (called a negative reply).

\BTA\ has only one axiom.
This axiom is given in Table~\ref{axioms-BTA}.%
\begin{table}[!tb]
\caption{Axiom of \BTA}
\label{axioms-BTA}
\begin{eqntbl}
\begin{axcol}
\pcc{x}{\Tau}{y} = \pcc{x}{\Tau}{x}                    & \axiom{T1}
\end{axcol}
\end{eqntbl}
\end{table}
Using the abbreviation introduced above, axiom T1 can be written as
follows: $\pcc{x}{\Tau}{y} = \Tau \bapf x$.

Each closed \BTA\ term of sort $\Thr$ denotes a finite thread, i.e.\ a
thread of which the length of the sequences of actions that it can
perform is bounded.
Infinite threads can be defined by means of a set of recursion equations
(see e.g.~\cite{BM06a,BM07a}).
Regular threads, i.e.\ threads that can only be in a finite number of
states, can be defined by means of a finite set of recursion equations.

\section{Program Algebra}
\label{sect-PGA}

In this section, we review \PGA\ (ProGram Algebra).
The starting-point of \PGA\ is the perception of a program as a
single-pass instruction sequence, i.e.\ a finite or infinite sequence of
instructions of which each instruction is executed at most once and can
be dropped after it has been executed or jumped over.

In \PGA, it is assumed that there is a fixed but arbitrary set $\BInstr$
of \emph{basic instructions}.
\PGA\ has the following \emph{primitive instructions}:
\begin{iteml}
\item
for each $a \in \BInstr$, a \emph{plain basic instruction} $a$;
\item
for each $a \in \BInstr$, a \emph{positive test instruction} $\ptst{a}$;
\item
for each $a \in \BInstr$, a \emph{negative test instruction} $\ntst{a}$;
\item
for each $l \in \Nat$, a \emph{forward jump instruction} $\fjmp{l}$;
\item
a \emph{termination instruction} $\halt$.
\end{iteml}
We write $\PInstr$ for the set of all primitive instructions.

The intuition is that the execution of a basic instruction $a$ may
modify a state and produces $\True$ or $\False$ at its completion.
In the case of a positive test instruction $\ptst{a}$, basic instruction
$a$ is executed and execution proceeds with the next primitive
instruction if $\True$ is produced and otherwise the next primitive
instruction is skipped and execution proceeds with the primitive
instruction following the skipped one.
In the case where $\True$ is produced and there is not at least one
subsequent primitive instruction and in the case where $\False$ is
produced and there are not at least two subsequent primitive
instructions, deadlock occurs.
In the case of a negative test instruction $\ntst{a}$, the role of the
value produced is reversed.
In the case of a plain basic instruction $a$, the value produced is
disregarded: execution always proceeds as if $\True$ is produced.
The effect of a forward jump instruction $\fjmp{l}$ is that execution
proceeds with the $l$-th next instruction of the instruction sequence
concerned.
If $l$ equals $0$ or the $l$-th next instruction does not exist, then
$\fjmp{l}$ results in deadlock.
The effect of the termination instruction $\halt$ is that execution
terminates.

\PGA\ has the following constants and operators:
\begin{iteml}
\item
for each $u \in \PInstr$, an \emph{instruction} constant $u$\,;
\item
the binary \emph{concatenation} operator $\ph \conc \ph$\,;
\item
the unary \emph{repetition} operator $\ph\rep$\,.
\end{iteml}
Terms are built as usual.
Throughout the paper, we assume that there is a countably infinite set
of variables which includes $x,y,z$.

We use infix notation for concatenation and postfix notation for
repetition.

A closed \PGA\ term is considered to denote a non-empty, finite or
eventually periodic infinite sequence of primitive instructions.%
\footnote
{An eventually periodic infinite sequence is an infinite sequence with
 only finitely many distinct suffixes.}
Closed \PGA\ terms are considered equal if they represent the same
instruction sequence.
The axioms for instruction sequence equivalence are given in
Table~\ref{axioms-PGA}.%
\begin{table}[!tb]
\caption{Axioms of \PGA}
\label{axioms-PGA}
\begin{eqntbl}
\begin{axcol}
(x \conc y) \conc z = x \conc (y \conc z)              & \axiom{PGA1} \\
(x^n)\rep = x\rep                                      & \axiom{PGA2} \\
x\rep \conc y = x\rep                                  & \axiom{PGA3} \\
(x \conc y)\rep = x \conc (y \conc x)\rep              & \axiom{PGA4}
\end{axcol}
\end{eqntbl}
\end{table}
In this table, $n$ stands for an arbitrary natural number greater than
$0$.
For each $n > 0$, the term $x^n$ is defined by induction on $n$ as
follows: $x^1 = x$ and $x^{n+1} = x \conc x^n$.
The \emph{unfolding} equation $x\rep = x \conc x\rep$ is
derivable.
Each closed \PGA\ term is derivably equal to a term in
\emph{canonical form}, i.e.\ a term of the form $P$ or $P \conc Q\rep$,
where $P$ and $Q$ are closed \PGA\ terms in which the repetition
operator does not occur.

The members of the domain of an initial model of \PGA\ are called
\emph{instruction sequences}.
This is justified by the fact that one of the initial models of \PGA\ is
the model in which:
\begin{iteml}
\item
the domain is the set of all finite and eventually periodic infinite
sequences over the set $\PInstr$ of primitive instructions;
\item
the operation associated with ${} \conc {}$ is concatenation;
\item
the operation associated with ${}\rep$ is the operation ${}\srep$
defined as follows:
\begin{iteml}
\item
if $X$ is a finite sequence, then $X\srep$ is the unique eventually
periodic infinite sequence $Y$ such that $X$ concatenated $n$ times with
itself is a proper prefix of $Y$ for each $n \in \Nat$;
\item
if $X$ is an eventually periodic infinite sequence, then $X\srep$ is
$X$.
\end{iteml}
\end{iteml}
To simplify matters, we confine ourselves to this initial model of \PGA\
for the interpretation of \PGA\ terms.

The behaviours of the instruction sequences denoted by closed \PGA\
terms are considered to be regular threads, with the basic instructions
taken for basic actions.
Moreover, all regular threads in which $\Tau$ is absent are behaviours
of instruction sequences that can be denoted by closed \PGA\ terms
(see~\cite{PZ06a}, Proposition~2).
Closed \PGA\ terms in which the repetition operator does not occur
correspond to finite threads.

In the remainder of this paper, we consider instruction sequences that
can be denoted by closed \PGA\ terms in which the repetition operator
does not occur.
The \emph{thread extraction} operation $\extr{\ph}$ defined by the
equations given in Table~\ref{axioms-thread-extr} (for $a \in \BInstr$,
$l \in \Nat$ and $u \in \PInstr$) gives, for each closed \PGA\ term $P$
in which the repetition operator does not occur, a closed \BTA\ term
that denotes the behaviour of the instruction sequence denoted by $P$.%
\begin{table}[!tb]
\caption{Defining equations for thread extraction operation}
\label{axioms-thread-extr}
\begin{eqntbl}
\begin{eqncol}
\extr{a} = a \bapf \DeadEnd \\
\extr{a \conc x} = a \bapf \extr{x} \\
\extr{\ptst{a}} = a \bapf \DeadEnd \\
\extr{\ptst{a} \conc x} =
\pcc{\extr{x}}{a}{\extr{\fjmp{2} \conc x}} \\
\extr{\ntst{a}} = a \bapf \DeadEnd \\
\extr{\ntst{a} \conc x} =
\pcc{\extr{\fjmp{2} \conc x}}{a}{\extr{x}}
\end{eqncol}
\qquad
\begin{eqncol}
\extr{\fjmp{l}} = \DeadEnd \\
\extr{\fjmp{0} \conc x} = \DeadEnd \\
\extr{\fjmp{1} \conc x} = \extr{x} \\
\extr{\fjmp{l+2} \conc u} = \DeadEnd \\
\extr{\fjmp{l+2} \conc u \conc x} = \extr{\fjmp{l+1} \conc x} \\
\extr{\halt} = \Stop \\
\extr{\halt \conc x} = \Stop
\end{eqncol}
\end{eqntbl}
\end{table}

Henceforth, we will write \PGAfin\ for \PGA\ without the repetition
operator and axioms PGA2--PGA4, and we will write $\FIS$ for the set of
all instruction sequences that can be denoted by closed \PGAfin\ terms.
Moreover, we will write $\psize(X)$, where $X \in \FIS$, for the length
of $X$.

In addition to instruction sequence congruence, two coarser congruences
are introduced in~\cite{BL02a}.
We give additional axioms for those congruences, relating to the
instruction sequences that are considered in our study of non-uniform
computational complexity, in Appendix~\ref{app-beyond-is-congr}.

The use of a closed \PGAfin\ term is sometimes preferable to the use of
the corresponding closed \BTA\ term because thread extraction can give
rise to a combinatorial explosion.
For instance, suppose that $p$ is a closed \BTA\ term such that
\begin{ldispl}
p =
\extr{\overbrace{\ptst{a} \conc \ptst{b} \conc \ldots \conc
                 \ptst{a} \conc \ptst{b}}^{k\; \mathrm{times}} {} \conc
                 c \conc \halt}\;.
\end{ldispl}%
Then the size of $p$ is greater than $2^{k/2}$.
In Appendix~\ref{app-expl-subst}, we show that such combinatorial
explosions can be eliminated if we add explicit substitution to thread
algebra.

\section{Interaction of Threads with Services}
\label{sect-TSI}

A thread may make use of services.
That is, a thread may perform an action for the purpose of interacting
with a service that takes the action as a command to be processed.
The processing of an action may involve a change of state of the service
and at completion of the processing of the action the service returns a
reply value to the thread.
In this section, we introduce the use mechanism and the apply mechanism,
which are concerned with this kind of interaction between threads and
services.
The difference between the use mechanism and the apply mechanism is a
matter of perspective: the former is concerned with the effect of
services on threads and therefore produces threads, whereas the latter
is concerned with the effect of threads on services and therefore
produces services.

It is assumed that a fixed but arbitrary set $\Foci$ of \emph{foci} and
a fixed but arbitrary set $\Meth$ of \emph{methods} have been given.
Each focus plays the role of a name of some service provided by an
execution environment that can be requested to process a command.
Each method plays the role of a command proper.
For the set $\BAct$ of actions, we take the set
$\set{f.m \where f \in \Foci, m \in \Meth}$.
Performing an action $f.m$ is taken as making a request to the
service named $f$ to process command~$m$.

A \emph{service} $H$ consists of
\begin{iteml}
\item
a set $S$ of \emph{states};
\item
an \emph{effect} function $\funct{\eff}{\Meth \x S}{S}$;
\item
a \emph{yield} function
$\funct{\yld}{\Meth \x S}{\set{\True,\False,\Blocked}}$;
\item
an \emph{initial state} $s_0 \in S$;
\end{iteml}
satisfying the following condition:
\begin{ldispl}
\Forall{m \in \Meth, s \in S}
{(\yld(m,s) = \Blocked \Implies
  \Forall{m' \in \Meth}{\yld(m',\eff(m,s)) = \Blocked})}\;.
\end{ldispl}%
The set $S$ contains the states in which the service may be, and the
functions $\eff$ and $\yld$ give, for each method $m$ and state $s$, the
state and reply, respectively, that result from processing $m$ in state
$s$.

Given a service $H = \tup{S,\eff,\yld,s_0}$ and a method $m \in \Meth$:
\begin{iteml}
\item
the \emph{derived service} of $H$ after processing $m$, written
$\derive{m}H$, is the service $\tup{S,\eff,\yld,\eff(m,s_0)}$;
\item
the \emph{reply} of $H$ after processing $m$, written $H(m)$, is
$\yld(m,s_0)$.
\end{iteml}

A service $H$ can be understood as follows:
\begin{iteml}
\item
if a thread makes a request to the service to process $m$ and
$H(m) \neq \Blocked$, then the request is accepted, the reply is $H(m)$,
and the service proceeds as $\derive{m}H$;
\item
if a thread makes a request to the service to process $m$ and
$H(m) = \Blocked$, then the request is rejected and the service proceeds
as a service that rejects any request.
\end{iteml}
A service $H$ is called \emph{divergent} if $H(m) = \Blocked$ for all
$m \in \Meth$.
The effect of different divergent services on a thread is the same.
Therefore, all divergent services are identified.

We introduce the additional sort $\Serv$ of \emph{services} and the
following additional constant and operators:
\begin{iteml}
\item
the \emph{divergent service} constant $\const{\DivServ}{\Serv}$;
\item
for each $f \in \Foci$, the binary \emph{use} operator
$\funct{\use{\ph}{f}{\ph}}{\Thr \x \Serv}{\Thr}$;
\item
for each $f \in \Foci$, the binary \emph{apply} operator
$\funct{\apply{\ph}{f}{\ph}}{\Thr \x \Serv}{\Serv}$.
\end{iteml}
We use infix notation for the use and apply operators.

$\DivServ$ is a fixed but arbitrary divergent service.
The operators $\use{\ph}{f}{\ph}$ and $\apply{\ph}{f}{\ph}$ are
complementary.
Intuitively, $\use{p}{f}{H}$ is the thread that results from processing
all actions performed by thread $p$ that are of the form $f.m$ by
service $H$.
When an action of the form $f.m$ performed by thread $p$ is processed by
service $H$, that action is turned into the internal action $\Tau$ and
postconditional composition is removed in favour of action prefixing on
the basis of the reply value produced.
Intuitively, $\apply{p}{f}{H}$ is the service that results from
processing all basic actions performed by thread $p$ that are of the
form $f.m$ by service $H$.
When an action of the form $f.m$ performed by thread $p$ is processed by
service $H$, that service is changed into $\derive{m}H$.

The axioms for the use and apply operators are given in
Tables~\ref{axioms-use} and~\ref{axioms-apply}.%
\begin{table}[!tb]
\caption{Axioms for use operators}
\label{axioms-use}
\begin{eqntbl}
\begin{saxcol}
\use{\Stop}{f}{H} = \Stop                            & & \axiom{TSU1} \\
\use{\DeadEnd}{f}{H} = \DeadEnd                      & & \axiom{TSU2} \\
\use{(\Tau \bapf x)}{f}{H} =
                          \Tau \bapf (\use{x}{f}{H}) & & \axiom{TSU3} \\
\use{(\pcc{x}{g.m}{y})}{f}{H} =
\pcc{(\use{x}{f}{H})}{g.m}{(\use{y}{f}{H})}
 & \mif f \neq g                                       & \axiom{TSU4} \\
\use{(\pcc{x}{f.m}{y})}{f}{H} =
\Tau \bapf (\use{x}{f}{\derive{m}H})
                                & \mif H(m) = \True    & \axiom{TSU5} \\
\use{(\pcc{x}{f.m}{y})}{f}{H} =
\Tau \bapf (\use{y}{f}{\derive{m}H})
                                & \mif H(m) = \False   & \axiom{TSU6} \\
\use{(\pcc{x}{f.m}{y})}{f}{H} = \DeadEnd
                                & \mif H(m) = \Blocked & \axiom{TSU7}  \\
\use{(\pcc{x}{f.m}{y})}{f}{\DivServ} = \DeadEnd      & & \axiom{TSU8}
\end{saxcol}
\end{eqntbl}
\end{table}
\begin{table}[!tb]
\caption{Axioms for apply operators}
\label{axioms-apply}
\begin{eqntbl}
\begin{saxcol}
\apply{\Stop}{f}{H} = H                              & & \axiom{TSA1} \\
\apply{\DeadEnd}{f}{H} = \DivServ                    & & \axiom{TSA2} \\
\apply{(\Tau \bapf x)}{f}{H} = \apply{x}{f}{H}       & & \axiom{TSA3} \\
\apply{(\pcc{x}{g.m}{y})}{f}{H} = \DivServ
                                       & \mif f \neq g & \axiom{TSA4} \\
\apply{(\pcc{x}{f.m}{y})}{f}{H} = \apply{x}{f}{\derive{m}H}
                                & \mif H(m) = \True    & \axiom{TSA5} \\
\apply{(\pcc{x}{f.m}{y})}{f}{H} = \apply{y}{f}{\derive{m}H}
                                & \mif H(m) = \False   & \axiom{TSA6} \\
\apply{(\pcc{x}{f.m}{y})}{f}{H} = \DivServ
                                & \mif H(m) = \Blocked & \axiom{TSA7} \\
\apply{(\pcc{x}{f.m}{y})}{f}{\DivServ} = \DivServ    & & \axiom{TSA8}
\end{saxcol}
\end{eqntbl}
\end{table}
In these tables, $f$ and $g$ stand for arbitrary foci from $\Foci$, $m$
stands for an arbitrary method from $\Meth$, and $H$ is a variable of
sort $\Serv$.
Axioms TSU3 and TSU4 express that the action $\Tau$ and actions of
the form $g.m$, where $f \neq g$, are not processed.
Axioms TSU5 and TSU6 express that a thread is affected by a service
as described above when an action of the form $f.m$ performed by the
thread is processed by the service.
Axiom TSU7 expresses that deadlock takes place when an action to be
processed is not accepted.
Axiom TSU8 expresses that the divergent service does not accept any
action.
Axiom TSA3 expresses that a service is not affected by a thread when the
action $\Tau$ is performed by the thread and axiom TSA4 expresses that a
service is turned into the divergent service when an action of the form
$g.m$, where $f \neq g$,  is performed by the thread.
Axioms TSA5 and TSA6 express that a service is affected by a thread as
described above when an action of the form $f.m$ performed by the thread
is processed by the service.
Axiom TSA7 expresses that a service is turned into the divergent service
when an action performed by the thread is not accepted.
Axiom TSA8 expresses that the divergent service is not affected by a
thread when an action of the form $f.m$ is performed by the thread.

\section{Instruction Sequences Acting on Boolean Registers}
\label{sect-Boolean-register}

Our study of computational complexity is concerned with instruction
sequences that act on Boolean registers.
In this section, we describe services that make up Boolean registers.
We also introduce special foci that serve as names of Boolean register
services.

The Boolean register services accept the following methods:
\begin{itemize}
\item
a \emph{set to true method} $\setbr{\True}$;
\item
a \emph{set to false method} $\setbr{\False}$;
\item
a \emph{get method} $\getbr$.
\end{itemize}
We write $\Methbr$ for the set
$\set{\setbr{\True},\setbr{\False},\getbr}$.
It is assumed that $\Methbr \subseteq \Meth$.

The methods accepted by Boolean register services can be explained as
follows:
\begin{itemize}
\item
$\setbr{\True}$\,:
the contents of the Boolean register becomes $\True$ and the reply is
$\True$;
\item
$\setbr{\False}$\,:
the contents of the Boolean register becomes $\False$ and the reply is
$\False$;
\item
$\getbr$\,:
nothing changes and the reply is the contents of the Boolean register.
\end{itemize}

Let $s \in \set{\True,\False,\Blocked}$.
Then the \emph{Boolean register service} with initial state $s$, written
$\BR_s$, is the service $\tup{\set{\True,\False,\Blocked},\eff,\eff,s}$,
where the function $\eff$ is defined as follows
($b \in \set{\True,\False}$):
\begin{ldispl}
\begin{geqns}
\eff(\setbr{\True},b) = \True\;,\;
\\
\eff(\setbr{\False},b) = \False\;,
\\
\eff(\getbr,b) = b\;,
\end{geqns}
\qquad\qquad
\begin{geqns}
\eff(m,b) = \Blocked & \mif m \not\in \Methbr\;,
\\
\eff(m,\Blocked) = \Blocked\;.
\end{geqns}
\end{ldispl}%
Notice that the effect and yield functions of a Boolean register service
are the same.
This means that at completion of the processing of a method the state
that results from the processing is returned as the reply.

In the instruction sequences which concern us in the remainder of this
paper, a number of Boolean registers is used as input registers, a
number of Boolean registers is used as auxiliary registers, and one
Boolean register is used as output register.

It is assumed that $\inbr{1},\inbr{2},\ldots \in \Foci$,
$\auxbr{1},\auxbr{2},\ldots \in \Foci$, and $\outbr \in \Foci$.
These foci play special roles:
\begin{iteml}
\item
for each $i \in \Natpos$,%
\footnote
{We write $\Natpos$ for the set $\set{n \in \Nat \where n > 0}$.}
$\inbr{i}$ serves as the name of the Boolean register that is used as
$i$-th input register in instruction sequences;
\item
for each $i \in \Natpos$, $\auxbr{i}$ serves as the name of the Boolean
register that is used as $i$-th auxiliary register in instruction
sequences;
\item
$\outbr$ serves as the name of the Boolean register that is used as
output register in instruction sequences.
\end{iteml}

Henceforth, we will write
$\Fociin$ for $\set{\inbr{i} \where i \in \Natpos}$ and
$\Fociaux$ for $\set{\auxbr{i} \where i \in \Natpos}$.
Moreover, we will write $\ISbr$ for the set of all instruction sequences
from $\FIS$ in which all primitive instructions, with the exception of
jump instructions and the termination instruction, contain only basic
instructions from the set
\begin{ldispl}
\set{f.\getbr \where f \in \Fociin \union \Fociaux} \union
\set{f.\setbr{b} \where f \in \Fociaux \union \set{\outbr} \And
                        b \in \set{\True,\False}}
\end{ldispl}%
and $\ISbrna$ for the set of all instruction sequences from $\FIS$ in
which all primitive instructions, with the exception of jump
instructions and the termination instruction, contain only basic
instructions from the set
\begin{ldispl}
\set{f.\getbr \where f \in \Fociin} \union
\set{\outbr.\setbr{b} \where b \in \set{\True,\False}}\;.
\end{ldispl}%
$\ISbrna$ is the set of all instruction sequences from $\ISbr$ in which
no auxiliary registers are used.
$\ISbr$ is the set of all instruction sequences from $\FIS$ that matter
to the complexity class \PLIS\ which will be introduced in
Section~\ref{sect-PLIS}.

\section{The Complexity Class \PLIS}
\label{sect-PLIS}

In the field of computational complexity, it is quite common to study
the complexity of computing functions on finite strings over a binary
alphabet.
Since strings over an alphabet of any fixed size can be efficiently
encoded as strings over a binary alphabet, it is sufficient to consider
only a binary alphabet.
We adopt the set $\Bool = \set{\True,\False}$ as preferred binary
alphabet.

An important special case of functions on finite strings over a binary
alphabet is the case where the value of functions is restricted to
strings of length $1$.
Such a function is often identified with the set of strings of which it
is the characteristic function.
The set in question is usually called a language or a decision problem.
The identification mentioned above allows of looking at the problem of
computing a function $\funct{f}{\seqof{\Bool}}{\Bool}$ as the problem of
deciding membership of the set
$\set{w \in \seqof{\Bool} \where f(w) = \True}$.

With each function $\funct{f}{\seqof{\Bool}}{\Bool}$, we can associate
an infinite sequence $\indfam{f_n}{n \in \Nat}$ of functions, with
$\funct{f_n}{\Bool^n}{\Bool}$ for every $n \in \Nat$, such that $f_n$ is
the restriction of $f$ to $\Bool^n$  for each $n \in \Nat$.
The complexity of computing such sequences of functions, which we call
Boolean function families, is studied in the remainder of this paper.
In the current section, we introduce the class \PLIS\ of all Boolean
function families that can be computed by polynomial-length instruction
sequences from $\ISbr$.

An \emph{$n$-ary Boolean function} is a function
$\funct{f}{\Bool^n}{\Bool}$.
Let $\phi$ be a Boolean formula containing the variables
$v_1,\ldots,v_n$.
Then $\phi$ induces an $n$-ary Boolean function $f_n$ such that
$f_n(b_1,\ldots,b_n) = \True$ iff $\phi$ is satisfied by the assignment
$\sigma$ to the variables $v_1,\ldots,v_n$ defined by
$\sigma(v_1) = b_1$, \ldots, $\sigma(v_n) = b_n$.
The Boolean function in question is called the Boolean function
\emph{induced} by $\phi$.

A \emph{Boolean function family} is an infinite sequence
$\indfam{f_n}{n \in \Nat}$ of functions, where $f_n$ is an $n$-ary
Boolean function for each $n \in \Nat$.
A Boolean function family $\indfam{f_n}{n \in \Nat}$ can be identified
with the unique function $\funct{f}{\seqof{\Bool}}{\Bool}$ such that for
each $n \in \Nat$, for each $w \in \Bool^n$, $f(w) = f_n(w)$.
In this paper, we are concerned with non-uniform complexity.
Considering sets of Boolean function families as complexity classes
looks to be most natural when studying non-uniform complexity.
We will make the identification mentioned above only where connections
with well-known complexity classes are made.

Let $n \in \Nat$, let $\funct{f}{\Bool^n}{\Bool}$, and let
$X \in \ISbr$.
Then $X$ \emph{computes} $f$ if there exists an $l \in \Nat$ such that
for all $b_1,\ldots,b_n \in \Bool$:
\begin{ldispl}
( \ldots
 (( \ldots
   (\extr{X}
     \useop{\auxbr{1}} \BR_\False) \ldots \useop{\auxbr{l}} \BR_\False)
   \useop{\inbr{1}} \BR_{b_1}) \ldots \useop{\inbr{n}} \BR_{b_n})
 \applyop{\outbr} \BR_\False
\\ \quad {} = \BR_{f(b_1,\ldots,b_n)}\;.
\end{ldispl}%

$\PLIS$ is the class of all Boolean function families
$\indfam{f_n}{n \in \Nat}$ that satisfy:
\begin{quote}
there exists a polynomial function $\funct{h}{\Nat}{\Nat}$ such that
for all $n \in \Nat$ there exists an $X \in \ISbr$ such that
$X$ computes $f_n$ and $\psize(X) \leq h(n)$.
\end{quote}

The question arises whether all $n$-ary Boolean functions can be
computed by an instruction sequence from $\ISbr$. This question can
answered in the affirmative.
\begin{theorem}
\label{theorem-comput-boolfunc}
For each $n \in \Nat$, for each $n$-ary Boolean function
$\funct{f_n}{\Bool^n}{\Bool}$, there exists an $X \in \ISbrna$ in which
no other jump instruction than $\fjmp{2}$ occurs such that $X$ computes
$f_n$ and $\psize(X) = O(n \mul 2^n)$.
\end{theorem}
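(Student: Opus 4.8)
The plan is to realize $f_n$ by a direct table-lookup instruction sequence. I would enumerate the $2^n$ assignments $\alpha = (b_1,\ldots,b_n) \in \Bool^n$ in some fixed order and concatenate, for each $\alpha$, a block $B_\alpha$ of length $2n + O(1)$ that tests whether the contents of the input registers equal $\alpha$; if so it sets the output register to $f_n(\alpha)$ and terminates, and otherwise it passes control to the block of the next assignment. Since there are $2^n$ blocks, each of size $O(n)$, the total length is $O(n \mul 2^n)$, and the only basic instructions occurring are of the form $\inbr{i}.\getbr$ and $\outbr.\setbr{b}$, so that the resulting $X$ lies in $\ISbrna$.

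Inside a block I would interleave the $n$ bit tests with $n$ single-step escape jumps. At position $2i-1$ I place the test of the $i$-th input, namely $\ntst{\inbr{i}.\getbr}$ when $b_i = \True$ and $\ptst{\inbr{i}.\getbr}$ when $b_i = \False$, and at position $2i$ I place a $\fjmp{2}$. These tests are chosen so that a \emph{matching} reply causes the adjacent $\fjmp{2}$ to be skipped and the computation to fall through to the next test, whereas a \emph{non-matching} reply lands the computation exactly on that $\fjmp{2}$. After the last test, at position $2n+1$, I place the output instruction, itself realized as a test: $\ntst{\outbr.\setbr{\True}}$ if $f_n(\alpha)=\True$ and $\ptst{\outbr.\setbr{\False}}$ if $f_n(\alpha)=\False$. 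Because set-to-true yields the reply $\True$ and set-to-false the reply $\False$, this instruction sets the output register and then skips its adjacent $\fjmp{2}$ at position $2n+2$, reaching the $\halt$ at position $2n+3$. Using the defining equations for $\extr{\ph}$ together with axioms TSU5 and TSU6 and the definition of $\BR_{b_i}$, I would check that the run through the unique block $B_{\alpha^*}$, where $\alpha^*$ is the actual input, reads the bits correctly; the concluding apply over $\outbr$ then yields $\BR_{f_n(\alpha^*)}$ by TSA5, TSA6 and TSA1.

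The crux is routing the mismatch path to the \emph{next} block while using only $\fjmp{2}$, since that target is $\Theta(n)$ instructions away. This is exactly where the interleaving pays off: every $\fjmp{2}$ sits at an even position and jumps over the intervening odd-position test onto the next even position, which is again a $\fjmp{2}$. Hence, once a non-matching reply drops the computation onto the escape jump at position $2i$, the $\fjmp{2}$'s chain it through positions $2i, 2i+2, \ldots$, stepping over all remaining tests; the extra $\fjmp{2}$ at position $2n+2$ lets the chain step over the output instruction (jumped from $2n$) and the $\halt$ (jumped from $2n+2$), so that it lands precisely on the first instruction of the next block. I expect the parity bookkeeping of this escape chain---ensuring the matching path stops at $\halt$ after setting the output while the mismatching path slips past both the output instruction and the $\halt$---to be the only delicate point; everything else is routine.

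Finally I would dispose of the boundary cases. The last block has no successor, but since every assignment of the input registers equals some $\alpha$, the last block is reached only when its own tests must all match, so its escape chain is never entered and may be left to run off the end of the sequence (or simply omitted). The base case $n=0$ is the constant sequence $\outbr.\setbr{f_0} \conc \halt$. Collecting the blocks yields the required $X \in \ISbrna$, in which $\fjmp{2}$ is the only jump instruction occurring, $X$ computes $f_n$, and $\psize(X) = O(n \mul 2^n)$.
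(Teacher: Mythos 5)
Your proof is correct, and it rests on exactly the same core gadget as the paper's proof, but the decomposition around that gadget is genuinely different. You compile the truth table of $f_n$ directly: one block per assignment $(b_1,\ldots,b_n) \in \Bool^n$, each block recognizing its assignment via interleaved test/$\fjmp{2}$ pairs, writing $f_n(b_1,\ldots,b_n)$ and halting on a full match, with any mismatch escaping along the chain of $\fjmp{2}$'s to the next block. The paper instead first invokes the standard fact that every $n$-ary Boolean function is induced by a \CNF-formula of size $n \mul 2^n$, and then exhibits a compiler $\inseqcnf$ that turns an arbitrary \CNF-formula into an instruction sequence of length linear in the formula's size: literals become the same interleaved test/$\fjmp{2}$ pairs, a satisfied literal escapes along the $\fjmp{2}$ chain past the clause exit $\ptst{\outbr.\setbr{\False}} \conc \fjmp{2} \conc \halt$ to the next clause, and a trailing $\ptst{\outbr.\setbr{\True}} \conc \halt$ covers the case that all clauses are satisfied. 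So the parity bookkeeping you single out as the only delicate point---jumps at even positions chaining over the odd-position tests---is precisely the paper's mechanism too; what differs is what the chain means (clause satisfied versus assignment mismatched) and what is being compiled (a \CNF\ representation versus the raw truth table). The paper's detour through \CNF\ buys a reusable intermediate result, namely linear-length compilation of \CNF-formulas with $\fjmp{2}$ as the only jump, which the paper then mines: the remark following the theorem inspects the structure of $\inseqcnf$'s output to obtain Corollary~\ref{corollary-comput-boolfunc}, and Theorem~\ref{theorem-comput-boolform} is presented as the generalization of that compilation from \CNF-formulas to arbitrary basic Boolean formulas. Your table-lookup argument buys self-containedness---it needs no representation theorem for Boolean functions---and gives the bound $(2n+3) \mul 2^n = O(n \mul 2^n)$ just as directly; it also supports the same jump-elimination remark, since every escape chain eventually reaches the unique matching block, which writes the output register explicitly.
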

\begin{proof}
The following is well-known (see e.g.~\cite{AB09a}, Claim~2.14): for
each $n$-ary Boolean function $\funct{f_n}{\Bool^n}{\Bool}$, there is a
\CNF-formula $\phi$ containing $n$ variables such that
$\funct{f_n}{\Bool^n}{\Bool}$ is the Boolean function induced by $\phi$
and the size of $\phi$ is $n \mul 2^n$.
Therefore, it is sufficient to show that, for each \CNF-formula $\phi$
containing the variables $v_1,\ldots,v_n$, there exists an
$X \in \ISbrna$ in which no other jump instruction than $\fjmp{2}$
occurs such that $X$ computes the Boolean function induced by $\phi$ and
$\psize(X)$ is linear in the size of $\phi$.

Let $\inseqcnf$ be the function from the set of all \CNF-formulas
containing the variables $v_1,\ldots,v_n$ to $\ISbrna$ as follows:
\begin{ldispl}
\inseqcnf(\AND{i \in [1,m]} \OR{j \in [1,n_i]} \xi_{ij}) = {}
\\ \quad
\inseqcnf'(\xi_{11}) \conc \ldots \conc \inseqcnf'(\xi_{1n_1}) \conc
\ptst{\outbr.\setbr{\False}} \conc \fjmp{2} \conc \halt \conc
\\ \qquad \vdots \\ \quad
\inseqcnf'(\xi_{m1}) \conc \ldots \conc \inseqcnf'(\xi_{mn_m}) \conc
\ptst{\outbr.\setbr{\False}} \conc \fjmp{2} \conc \halt \conc
\ptst{\outbr.\setbr{\True}} \conc \halt\;,
\end{ldispl}%
where
\begin{ldispl}
\begin{aeqns}
\inseqcnf'(v_k)     & = & \ptst{\inbr{k}.\getbr} \conc \fjmp{2}\;,
\\
\inseqcnf'(\Not v_k)& = & \ntst{\inbr{k}.\getbr} \conc \fjmp{2}\;.
\end{aeqns}
\end{ldispl}%
Recall that a disjunction is satisfied if one of its disjuncts is
satisfied and a conjunction is satisfied if each of its conjuncts is
satisfied.
Using these facts, it is easy to prove by induction on the number of
clauses in a \CNF-formula, and in the basis step by induction on the
number of literals in a clause, that no other jump instruction than
$\fjmp{2}$ occurs in $\inseqcnf(\phi)$ and that $\inseqcnf(\phi)$
computes the Boolean function induced by $\phi$.
Moreover, it is easy to see that $\psize(\inseqcnf(\phi))$ is linear in
the size of $\phi$.
\qed
\end{proof}

In the proof of Theorem~\ref{theorem-comput-boolfunc}, it is shown that
the Boolean function induced by a \CNF-formula can be computed, without
using auxiliary Boolean registers, by an instruction sequence from
$\ISbrna$ that contains no other jump instructions than $\fjmp{2}$ and
whose length is linear in the size of the formula.
If we permit arbitrary jump instructions, this result generalizes from
\CNF-formulas to arbitrary basic Boolean formulas, i.e.\ Boolean
formulas in which no other connectives than $\Not$, $\Or$ and $\And$
occur.
\begin{theorem}
\label{theorem-comput-boolform}
For each basic Boolean formula $\phi$, there exists an $X \in \ISbrna$
in which the basic instruction $\outbr.\setbr{\False}$ does not occur
such that $X$ computes the Boolean function induced by $\phi$ and
$\psize(X)$ is linear in the size of $\phi$.
\end{theorem}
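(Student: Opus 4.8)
The plan is to build $X$ by a recursive translation on the structure of the basic Boolean formula $\phi$, exploiting the fact that in the definition of ``computes'' the output register is initialised with $\BR_\False$. Hence it suffices to produce an instruction sequence that executes $\outbr.\setbr{\True}$ exactly when $\phi$ is satisfied and then halts, and otherwise halts without ever touching the output register; this automatically guarantees that $\outbr.\setbr{\False}$ does not occur. The previous theorem is recovered as the special case of CNF-formulas; what makes the general case work is that we are now permitted to use arbitrary forward jumps, not only $\fjmp{2}$.

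Concretely, I would associate with each basic Boolean formula $\phi$ a code fragment $T(\phi)$ over the instructions allowed in $\ISbrna$ satisfying the following two-exit invariant: when control enters $T(\phi)$ at its first instruction with the input registers set to some assignment, control leaves $T(\phi)$ without deadlock, reaching the instruction immediately following $T(\phi)$ if $\phi$ is true under that assignment and the second instruction following $T(\phi)$ if $\phi$ is false, provided at least two instructions follow. The whole sequence is then $X = T(\phi) \conc \outbr.\setbr{\True} \conc \halt \conc \halt$: if $\phi$ is true, control falls through to $\outbr.\setbr{\True}$ and halts with output $\True$; if $\phi$ is false, control skips $\outbr.\setbr{\True}$ and halts with the register still at $\False$.

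The base case is a variable $v_k$, for which I take $T(v_k) = \ptst{\inbr{k}.\getbr}$, a positive test that falls through on reply $\True$ and skips one instruction on reply $\False$, which is exactly the invariant. For the connectives I keep the two exits fixed and append only jumps, so that the relative forward-jump offsets inside the sub-fragments stay correct under concatenation. For negation I swap the two exits by $T(\Not\phi) = T(\phi) \conc \fjmp{3} \conc \fjmp{1}$. For conjunction I short-circuit on a false left conjunct, taking $T(\phi \And \psi) = T(\phi) \conc \fjmp{2} \conc \fjmp{l} \conc T(\psi)$, where $l = \psize(T(\psi)) + 2$ routes the ``$\phi$ false'' exit to the common false-exit of the whole fragment while the $\fjmp{2}$ lets the ``$\phi$ true'' path enter $T(\psi)$. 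For disjunction I short-circuit on a true left disjunct, taking $T(\phi \Or \psi) = T(\phi) \conc \fjmp{l'} \conc T(\psi)$ with $l' = \psize(T(\psi)) + 1$. Each step adds only a constant number of jump instructions, so $\psize(T(\phi))$, and hence $\psize(X)$, is linear in the size of $\phi$.

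Finally I would verify correctness by structural induction on $\phi$, checking in each inductive case, by tracing the thread-extraction and jump semantics, that the two-exit invariant is preserved; the appended halts guarantee that the required one or two subsequent instructions always exist, so no deadlock arises at the exits. The main obstacle is precisely this bookkeeping: one must confirm that embedding a correct sub-fragment into a larger context leaves its internal forward jumps untouched and that the computed offsets $l$ and $l'$ send the short-circuit and negation exits to the intended common continuation. Once the invariant is established for $T(\phi)$, the top-level behaviour of $X$ follows at once, and the facts that $X \in \ISbrna$ and that $\outbr.\setbr{\False}$ is absent hold by construction, since the only basic instructions used are $\inbr{k}.\getbr$ and $\outbr.\setbr{\True}$.
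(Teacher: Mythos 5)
Your proposal is correct and follows essentially the same route as the paper's proof: the paper's function $\inseq'$ realizes exactly your two-exit invariant (fall through on true, skip one instruction on false), with the identical base case and the identical jump layouts for $\Or$ and $\And$; your only deviations are cosmetic (using $\fjmp{3}\conc\fjmp{1}$ for negation where the paper uses the single instruction $\fjmp{2}$, and terminating with $\outbr.\setbr{\True}\conc\halt\conc\halt$ where the paper uses $\ptst{\outbr.\setbr{\True}}\conc\halt$). Both yield an instruction sequence in $\ISbrna$ without $\outbr.\setbr{\False}$ and of length linear in the size of $\phi$, so the argument goes through as you outline it.
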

\begin{proof}
Let $\inseq$ be the function from the set of all basic Boolean formulas
containing the variables $v_1,\ldots,v_n$ to $\ISbrna$ as follows:
\begin{ldispl}
\inseq(\phi) =
\inseq'(\phi) \conc \ptst{\outbr.\setbr{\True}} \conc \halt\;,
\end{ldispl}%
where
\begin{ldispl}
\begin{aeqns}
\inseq'(v_k) & = & \ptst{\inbr{k}.\getbr}\;,
\\
\inseq'(\Not \phi) & = & \inseq'(\phi) \conc \fjmp{2}\;,
\\
\inseq'(\phi \Or \psi) & = &
\inseq'(\phi) \conc
\fjmp{\psize(\inseq'(\psi)) + 1} \conc \inseq'(\psi)\;,
\\
\inseq'(\phi \And \psi) & = &
\inseq'(\phi) \conc \fjmp{2} \conc
\fjmp{\psize(\inseq'(\psi)) + 2} \conc \inseq'(\psi)\;.
\end{aeqns}
\end{ldispl}%
Using the same facts about disjunctions and conjunctions as in the proof
of Theorem~\ref{theorem-comput-boolfunc}, it is easy to prove by
induction on the structure of $\phi$ that $\inseq(\phi)$ computes the
Boolean function induced by $\phi$.
Moreover, it is easy to see that $\psize(\inseq(\phi))$ is linear in the
size of $\phi$.
\qed
\end{proof}

We consider the proof of Theorem~\ref{theorem-comput-boolfunc} once
again.
The instruction sequences yielded by the function $\inseqcnf$ contain
the jump instruction $\fjmp{2}$.
Each occurrence of $\fjmp{2}$ belongs to a jump chain ending in
the instruction sequence $\ptst{\outbr.\setbr{\True}} \conc \halt$.
Therefore, each occurrence of $\fjmp{2}$ can safely be replaced by the
instruction $\ptst{\outbr.\setbr{\False}}$, which like $\fjmp{2}$ skips
the next instruction.
Moreover, the occurrences of the instruction sequence
$\ptst{\outbr.\setbr{\False}} \conc \fjmp{2} \conc \halt$ can be
replaced by the instruction $\halt$ because the content of the Boolean
register concerned is initially $\False$.
The former point gives rise to the following interesting corollary.
\begin{corollary}
\label{corollary-comput-boolfunc}
For each $n \in \Nat$, for each $n$-ary Boolean function
$\funct{f_n}{\Bool^n}{\Bool}$, there exists an $X \in \ISbrna$ in which
jump instructions do not occur such that $X$ computes $f_n$ and
$\psize(X) = O(n \mul 2^n)$.
\end{corollary}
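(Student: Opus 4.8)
The plan is to reuse the construction from the proof of Theorem~\ref{theorem-comput-boolfunc} and then carry out the syntactic substitution announced in the paragraph preceding this corollary: replace every occurrence of $\fjmp{2}$ by $\ptst{\outbr.\setbr{\False}}$. First I would fix, as in that proof, a \CNF-formula $\phi$ of size $n \mul 2^n$ that induces $f_n$, so that $\inseqcnf(\phi) \in \ISbrna$ computes $f_n$, has length linear in the size of $\phi$, and contains no jump instruction other than $\fjmp{2}$. Let $X$ be the instruction sequence obtained from $\inseqcnf(\phi)$ by replacing each $\fjmp{2}$ by $\ptst{\outbr.\setbr{\False}}$.

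The syntactic claims are then immediate. As the substitution replaces one primitive instruction by one primitive instruction, $\psize(X) = \psize(\inseqcnf(\phi)) = O(n \mul 2^n)$. As $\fjmp{2}$ was the only jump instruction occurring in $\inseqcnf(\phi)$, no jump instruction occurs in $X$. As $\outbr.\setbr{\False}$ is one of the basic instructions admitted in the definition of $\ISbrna$, we still have $X \in \ISbrna$.

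The substance of the proof is to show that $X$ computes the same function as $\inseqcnf(\phi)$. I would base this on two observations. First, $\fjmp{2}$ and $\ptst{\outbr.\setbr{\False}}$ are interchangeable as far as control flow is concerned: the latter always yields the reply $\False$, hence skips exactly one subsequent instruction and deadlocks under exactly the same condition --- fewer than two subsequent instructions --- as $\fjmp{2}$. Since the substitution preserves the position of every instruction, the flow of control through $X$ is identical to that through $\inseqcnf(\phi)$ on every input. Second, the register named $\outbr$ is written but never read in these instruction sequences --- only the methods $\setbr{\True}$ and $\setbr{\False}$ are sent to it, never $\getbr$ --- so its intermediate contents affect neither the control flow nor any other register, and only its contents at termination matter.

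The main obstacle is then the bookkeeping of that final content, which I would settle by the jump-chain analysis sketched before the corollary. The only extra effect of each replaced instruction is to set $\outbr$ to $\False$. The single occurrence of $\ptst{\outbr.\setbr{\True}}$ sits at the very end of the sequence, is reached exactly when all clauses have been satisfied, and is immediately followed by $\halt$; so in the satisfied case the terminating content of $\outbr$ is $\True$ for both $X$ and $\inseqcnf(\phi)$. In the unsatisfied case execution halts within the block of the first unsatisfied clause, just after an $\ptst{\outbr.\setbr{\False}}$ instruction, so the terminating content is $\False$ for both sequences, irrespective of the spurious $\False$-writes performed by the replaced jumps along the way. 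Since $\outbr$ is initialized to $\False$, the terminating contents coincide on every input, whence $X$ computes $f_n$. This last argument can be made precise either by an induction on the number of clauses paralleling the one in the proof of Theorem~\ref{theorem-comput-boolfunc}, or by equational reasoning with the thread-extraction, use and apply axioms that tracks the state of the $\outbr$ register.
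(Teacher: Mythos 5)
Your proposal is correct and takes essentially the same route as the paper: the paper obtains this corollary precisely by replacing each occurrence of $\fjmp{2}$ in $\inseqcnf(\phi)$ by $\ptst{\outbr.\setbr{\False}}$, observing that the latter, like $\fjmp{2}$, always skips the next instruction. Your two observations (control flow is unchanged, and the write-only register $\outbr$ receives $\True$ last exactly in the satisfied case and $\False$ last otherwise) simply spell out, in more detail than the paper's brief jump-chain remark, why the spurious $\False$-writes are harmless.
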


In Corollary~\ref{corollary-comput-boolfunc}, the instruction sequences
in question contain no jump instructions.
However, they contain multiple termination instructions and both
$\outbr.\setbr{\True}$ and $\outbr.\setbr{\False}$.
This raises the question whether further restrictions are possible.
We have a negative result.
\begin{theorem}
\label{theorem-comput-negative}
Let $\phi$ be the Boolean formula $v_1 \And v_2 \And v_3$.
Then there does not exist an $X \in \ISbrna$ in which jump instructions
do not occur, multiple termination instructions do not occur and the
basic instruction $\outbr.\setbr{\False}$ does not occur such that $X$
computes the Boolean function induced by $\phi$.
\end{theorem}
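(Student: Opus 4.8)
The plan is to argue directly about the execution paths of a hypothetical solution, exploiting how rigid a jump-free instruction sequence with a single termination instruction is. Suppose, towards a contradiction, that some $X \in \ISbrna$ meets the three restrictions and computes the Boolean function induced by $\phi = v_1 \And v_2 \And v_3$. Write $u_1, u_2, \ldots$ for the successive primitive instructions of $X$. I would first record three structural facts. First, since $X$ contains no jump instructions, from instruction $u_k$ execution always proceeds either to $u_{k+1}$ or to $u_{k+2}$ (the latter exactly when a test instruction skips); hence on every input the execution visits a strictly increasing sequence of positions with consecutive gaps in $\set{1,2}$. Second, by the defining equations for the apply operator a deadlocked thread yields $\DivServ$, which is never a Boolean register; so for $X$ to compute a total function every input must drive execution to a $\halt$ instruction, and by the no-multiple-termination restriction at most one $\halt$ occurs, hence (as at least one is needed) exactly one, say $u_H = \halt$. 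Third, because $\outbr.\setbr{\False}$ is forbidden and the output register is started in state $\False$, the contents of the output register are monotone: only an occurrence of $\outbr.\setbr{\True}$ (in plain, positive-test, or negative-test form) can turn them to $\True$, and once turned they stay $\True$. I will call a position carrying such an instruction an \emph{output-setting} position.

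Next I would isolate the decisive position. Since $\phi$ evaluates to $\True$ only on the all-ones input, the execution on the all-true input must visit an output-setting position; let $p$ be the first one it visits. Because the all-true execution also has to reach the unique $\halt$ (otherwise it deadlocks and fails) and positions strictly increase, we get $H > p$; in particular no $\halt$ occurs at any position $\le p$. The seven remaining inputs must each produce output $\False$, so each of them has to avoid every output-setting position, in particular $p$, while still reaching $H > p$. The combinatorial core is the following observation, forced by the step sizes $1$ and $2$: whenever an execution first reaches a position strictly greater than $p$, one step earlier it stood at $p-1$ or at $p$. Consequently, to get past $p$ an input must either visit $p$ itself or stand at $p-1$ and take a step of size $2$ there.

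The heart of the argument is then a case analysis on $u_{p-1}$, which is not $\halt$ since $H > p$. If $u_{p-1}$ always steps by one --- a plain basic instruction or $\ptst{\outbr.\setbr{\True}}$ --- then no input can skip $p$ from $p-1$, so each of the seven inputs that gets past $p$ must actually visit $p$ and hence set the output, a contradiction. If $u_{p-1}$ is $\ntst{\outbr.\setbr{\True}}$, then $u_{p-1}$ is itself output-setting, so each of the seven inputs sets the output whether it reaches $p$ directly or skips $p$ from $p-1$; again a contradiction. The remaining possibility is that $u_{p-1}$ is a test on an input register, $\ptst{\inbr{i}.\getbr}$ or $\ntst{\inbr{i}.\getbr}$ for some $i$; such a test steps by $2$ for one value of $b_i$ and by $1$ for the other. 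Letting $c \in \Bool$ be the value of $b_i$ forcing a step of size $1$, I would pick a non-all-true input $J$ with $b_i = c$ --- possible because $\phi$ has three variables, so for $c = \False$ any input with $b_i = \False$ works and for $c = \True$ one may set another variable to $\False$. On $J$ the instruction $u_{p-1}$ steps to $p$, so $J$ cannot skip $p$; hence $J$ either visits $p$, wrongly producing output $\True$, or never gets past $p$, so fails to reach $H$ and deadlocks. In both cases $X$ errs on $J$, the final contradiction, and the theorem follows. The degenerate cases --- $X$ has no output-setting position, or $p = 1$ --- are dismissed at once, since then all eight inputs receive the same reply.

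I expect the main obstacle to be not the counting, which is short, but the careful grounding of the two semantic claims in the formal machinery: that, after using the input registers and applying the output register as in the definition of ``computes'', the output register's final contents really are $\True$ exactly when the input-selected execution path meets an output-setting instruction (this rests on the thread-extraction equations of Table~\ref{axioms-thread-extr} together with axioms TSU5, TSU6 and TSA1--TSA8), and that ``reaching $H$ without visiting $p$'' is genuinely equivalent to the $p-1$-with-step-$2$ routing. Once these are in place, the case analysis on $u_{p-1}$ closes the argument.
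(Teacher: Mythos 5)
Your proof is correct, but it takes a genuinely different route from the paper's. The paper argues symbolically: it takes the \emph{textually} first output-setting instruction $u_l$ (one of $\outbr.\setbr{\True}$, $\ptst{\outbr.\setbr{\True}}$, $\ntst{\outbr.\setbr{\True}}$), observes that every earlier instruction is a read of an input register, and inductively attaches to each position a reachability formula $\psi_i$ satisfying $\psi_i \Iff (\psi_{i-1} \Implies \chi_i)$, where $\chi_i$ is $\True$, $v_j$ or $\Not v_j$ according to the form of $u_i$; since $\chi_i \Implies \psi_i$, the implication $\psi_{l-1} \Implies \phi$ could only be a tautology if $\True \Implies \phi$, $v_j \Implies \phi$ or $\Not v_j \Implies \phi$ were, and none is, so some input falsifying $\phi$ reaches $u_l$ and wrongly sets the output. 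You instead work locally and combinatorially: you take $p$ to be the first output-setting position \emph{visited by the all-true run}, pin the unique $\halt$ strictly beyond $p$, note that the seven rejecting inputs must get past $p$ while avoiding every output setter, hence must skip over $p$ from $p-1$, and then close the argument by a case analysis on the single ``gate'' instruction $u_{p-1}$: the skip is either impossible, or itself output-setting, or governed by one input bit, in which case some rejecting input is forced into $p$. Both proofs rest on the same underlying fact --- in a jump-free sequence, avoidance of the first output setter is decided by at most one literal --- but your decomposition replaces the paper's global induction over the whole prefix by a local analysis at $p-1$, and it makes explicit something the paper leaves implicit, namely why the unique termination instruction cannot occur before the first output setter; the paper's machinery, in exchange, characterizes reachability of every position up to $u_l$, which is what makes its tautology argument available uniformly. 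One small point to patch in your case analysis (a point the paper also glosses over): you implicitly assume the register tested at $u_{p-1}$ is one of $\inbr{1},\inbr{2},\inbr{3}$. If it were $\inbr{i}$ with $i>3$, then by axiom TSU4 the action passes through all use operators and axiom TSA4 turns the apply into $\DivServ$, so any input visiting $p-1$ already errs, while inputs avoiding $p-1$ must visit $p$ to get past it --- the contradiction persists, but this subcase should be stated.
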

\begin{proof}
Suppose that $X = u_1 \conc \ldots \conc u_k$ is an instruction sequence
from $\ISbrna$ satisfying the restrictions and computing the Boolean
function induced by $\phi$.
Consider the smallest $l \in [1,k]$ such that $u_l$ is either
$\outbr.\setbr{\True}$, $\ptst{\outbr.\setbr{\True}}$ or
$\ntst{\outbr.\setbr{\True}}$ (there must be such an $l$).
Because $\phi$ is not satisfied by all assignments to the variables
$v_1,v_2,v_3$, it cannot be the case that $l = 1$.
In the case where $l > 1$, for each $i \in [1,l-1]$, $u_i$ is either
$\inbr{j}.\getbr$, $\ptst{\inbr{j}.\getbr}$ or $\ntst{\inbr{j}.\getbr}$
for some $j \in \set{1,2,3}$.
This implies that, for each $i \in [0,l-1]$, there exists a basic
Boolean formula $\psi_i$ over the variables $v_1,v_2,v_3$ that is unique
up to logical equivalence such that, for each $b_1,b_2,b_3 \in \Bool$,
if the initial states of the Boolean registers named $\inbr{1}$,
$\inbr{2}$ and $\inbr{3}$ are $b_1$, $b_2$ and $b_3$, respectively, then
$u_{i+1}$ will be executed iff $\psi_i$ is satisfied by the assignment
$\sigma$ to the variables $v_1,v_2,v_3$ defined by
$\sigma(v_1) = b_1$, $\sigma(v_2) = b_2$ and $\sigma(v_3) = b_2$.
We have that $\psi_0 \Iff \True$ and,
for each $i \in [1,l-1]$,
$\psi_i \Iff (\psi_{i-1} \Implies \True)$ if
 $u_i \equiv \inbr{j}.\getbr$,
$\psi_i \Iff (\psi_{i-1} \Implies v_j)$ if
 $u_i \equiv \ptst{\inbr{j}.\getbr}$, and
$\psi_i \Iff (\psi_{i-1} \Implies \Not v_j)$ if
 $u_i \equiv \ntst{\inbr{j}.\getbr}$.
Hence, for each $i \in [0,l-1]$, $\psi_i \Implies \phi$ implies
$\True \Implies \phi$ or $v_j \Implies \phi$ or $\Not v_j \Implies \phi$
for some $j \in \set{1,2,3}$.
Because the latter three Boolean formulas are no tautologies,
$\psi_i \Implies \phi$ is no tautology either.
This means that, for each $i \in [1,l-1]$,
$\psi_i \Implies \phi$ is not satisfied by all assignments to the
variables $v_1,v_2,v_3$.
Hence, $X$ cannot exist.
\qed
\end{proof}

Because the content of the Boolean register concerned is initially
$\False$, the question arises whether $\outbr.\setbr{\False}$ is
essential in instruction sequences computing Boolean functions.
This question can be answered in the affirmative if we permit the use of
auxiliary Boolean registers.
\begin{theorem}
\label{theorem-comput-output}
Let $n \in \Nat$, let $\funct{f}{\Bool^n}{\Bool}$, and let $X \in \ISbr$
be such that $X$ computes $f$.
Then there exists an $Y \in \ISbr$ in which the basic instruction
$\outbr.\setbr{\False}$ does not occur such that $Y$ computes $f$ and
$\psize(Y)$ is linear in $\psize(X)$.
\end{theorem}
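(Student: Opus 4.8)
The plan is to simulate the output register throughout $X$ by a fresh auxiliary register, and to perform the single unavoidable write to $\outbr$ only at the very end, where it can always be a write of $\True$ because $\outbr$ starts out holding $\False$. Suppose $X = u_1 \conc \ldots \conc u_k$ uses the auxiliary registers $\auxbr{1},\ldots,\auxbr{l}$, and let $\auxbr{l+1}$ be a focus not occurring in $X$. First I would form $X'$ from $X$ by replacing, in every primitive instruction, the focus $\outbr$ by $\auxbr{l+1}$. Since $\ISbr$ admits no instruction $\outbr.\getbr$, the register $\outbr$ is only ever written in $X$, so $X'$ drives $\auxbr{l+1}$ exactly as $X$ drove $\outbr$ while leaving $\outbr$ untouched; and because $\auxbr{l+1}$, like $\outbr$, is initialised to $\False$, after running $X'$ the register $\auxbr{l+1}$ holds the value that $\outbr$ would have held after running $X$, namely $f(b_1,\ldots,b_n)$. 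This substitution is one-for-one, so it changes neither $\psize$ nor the target of any jump instruction, and it also preserves control flow, as the reply of a test on $\outbr.\setbr{b}$ equals that of the corresponding test on $\auxbr{l+1}.\setbr{b}$.

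Next I would redirect every termination to a common epilogue: replace the instruction $u_i$ at position $i$ by $\fjmp{k+1-i}$ whenever $u_i$ is $\halt$. This is again a one-for-one replacement, so all positions and hence all remaining jump targets are preserved, and each substituted jump points to position $k+1$, just past the end of the modified sequence. Call the result $X''$, and set $Y = X'' \conc E$, where
\begin{ldispl}
E = \ptst{\auxbr{l+1}.\getbr} \conc \outbr.\setbr{\True} \conc \halt\;.
\end{ldispl}
A short thread-extraction computation shows that $\extr{E}$ reads $\auxbr{l+1}$ and, on reply $\True$, sets $\outbr$ to $\True$ and terminates, while on reply $\False$ it terminates at once, leaving $\outbr$ at its initial value $\False$. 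Thus the final content of $\outbr$ equals the content of $\auxbr{l+1}$ on entry to $E$, which is $f(b_1,\ldots,b_n)$. Since $\outbr.\setbr{\False}$ occurs neither in $X''$ (its $\outbr$ operations were all redirected to $\auxbr{l+1}$) nor in $E$, and $\psize(Y) = \psize(X) + 3$, this would establish the claim once correctness is verified.

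The one point needing care — and the main obstacle — is that appending $E$ lengthens the sequence, so a jump in $X$ that formerly ran off the end of $X$ (causing deadlock) might now land inside $E$ and alter the behaviour. I would dispose of this as follows. Since $X$ computes $f$, on every input its behaviour reaches $\Stop$ rather than $\DeadEnd$, for otherwise the apply would yield $\DivServ \neq \BR_{f(\ldots)}$ by axiom TSA2; hence $X$ never deadlocks, and in particular no jump whose target lies beyond position $k$ is ever executed. The control flow of $X''$ agrees with that of $X$ instruction for instruction up to the step at which $X$ reaches a $\halt$, where $X''$ instead jumps into $E$ and never returns to the body. Consequently exactly the same instructions among positions $1,\ldots,k$ are executed on $X''$ as on $X$, so no off-the-end jump is executed in $X''$ either, and $E$ can be entered only through the substituted jumps. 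Making this reachability argument precise, together with a routine induction confirming that $X'$ tracks $\outbr$ by $\auxbr{l+1}$, completes the proof.
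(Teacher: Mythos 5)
Your proposal is correct, and its first half coincides exactly with the paper's proof: both rename $\outbr$ throughout $X$ to a fresh auxiliary focus, and both use the same epilogue $\ptst{\auxbr{o}.\getbr} \conc \outbr.\setbr{\True} \conc \halt$ to copy the tracked value into $\outbr$ at termination. Where you genuinely diverge is in how the epilogue is wired in. The paper replaces each $\halt$ of the renamed sequence \emph{in place} by the three-instruction epilogue and then renumbers every jump that crosses the replacement point (adding $2$ to its counter); this transformation is purely syntactic and preserves behaviour unconditionally --- in particular, off-the-end jumps remain off the end --- so no semantic property of $X$ is invoked. You instead keep all positions fixed, replace each $\halt$ at position $i$ by $\fjmp{k+1-i}$, and append a single shared copy of the epilogue. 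This buys you two things: no jump renumbering at all, and the sharper bound $\psize(Y) = \psize(X) + 3$ versus the paper's $\psize(Y) < 3 \mul \psize(X)$. The cost is that your transformation is only behaviour-preserving for sequences that never deadlock, so you need the semantic lemma that a sequence computing $f$ cannot deadlock on any input, which you correctly derive from axiom TSA2 (deadlock would make the apply yield $\DivServ$, never a Boolean register service). One point to tighten: control can run off the end of $X$ not only through a jump but also through sequential fall-through of a plain basic instruction at position $k$, or through a test near the end whose skip lands past position $k$; all of these deadlock in $X$ but would silently enter $E$ in $Y$. Your non-deadlock lemma covers these cases just as well, but your explicit reasoning mentions only jumps, so the claim that $E$ can be entered only through the substituted jumps should be stated for every form of control transfer past position $k$, not just for jump instructions.
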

\begin{proof}
Let $o \in \Natpos$ be such that the basic instructions
$\auxbr{o}.\setbr{\True}$, $\auxbr{o}.\setbr{\False}$, and
$\auxbr{o}.\getbr$ do not occur in $X$.
Let $X'$ be obtained from $X$ by replacing each occurrence of the focus
$\outbr$ by $\auxbr{o}$.
Suppose that $X' = u_1 \conc \ldots \conc u_k$.
Let $Y$ be obtained from $u_1 \conc \ldots \conc u_k$ as follows:
\begin{enumerate}
\item
stop if $u_1 \equiv \halt$;
\item
stop if there exists no $j \in [2,k]$ such that
$u_{j-1} \not\equiv \outbr.\setbr{\True}$ and $u_j \equiv \halt$;
\item
find the least $j \in [2,k]$ such that
$u_{j-1} \not\equiv \outbr.\setbr{\True}$ and $u_j \equiv \halt$;
\item
replace $u_j$ by
$\ptst{\auxbr{o}.\getbr} \conc \outbr.\setbr{\True} \conc \halt$,
\item
for each $i \in [1,k]$, replace $u_i$ by $\fjmp{l + 2}$ if
$u_i \equiv \fjmp{l}$ and $i < j < i + l$;
\item
repeat the preceding steps for the resulting instruction sequence.
\end{enumerate}
It is easy to prove by induction on $k$ that the Boolean function
computed by $X$ and Boolean function computed by $Y$ are the same.
Moreover, it is easy to see that $\psize(Y) < 3 \mul \psize(X)$.
Hence, $\psize(Y)$ is linear in $\psize(X)$.
\qed
\end{proof}

Because Boolean formulas can be looked upon as Boolean circuits in which
all gates have out-degree $1$, the question arises whether
Theorem~\ref{theorem-comput-boolform} generalizes from Boolean formulas
to Boolean circuits.
This question can be answered in the affirmative if we permit the use of
auxiliary Boolean registers.
\begin{theorem}
\label{theorem-comput-boolcirc}
For each Boolean circuit $C$ containing no other gates than
$\Not$-gates, $\Or$-gates and $\And$-gates, there exists an
$X \in \ISbr$ in which the basic instruction $\outbr.\setbr{\False}$
does not occur such that $X$ computes the Boolean function induced by
$C$ and $\psize(X)$ is linear in the size of $C$.
\end{theorem}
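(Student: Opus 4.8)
The plan is to evaluate the gates of $C$ one by one in a topological order, storing the output of each gate in its own auxiliary Boolean register. This is precisely where circuits depart from the Boolean formulas handled in Theorem~\ref{theorem-comput-boolform}: a gate of $C$ may have out-degree greater than $1$, so the inlining used for formulas would recompute shared subcircuits and could make the length grow exponentially. Giving each gate a dedicated auxiliary register lets every shared value be computed once and read back as often as needed, and it is exactly this that forces the construction into $\ISbr$ rather than $\ISbrna$. I would fix a topological ordering $g_1,\ldots,g_m$ of the gates, so that each input of $g_t$ is either a primitive input of $C$ or a gate $g_{t'}$ with $t' < t$, and so that the output gate is $g_m$; with $g_t$ I associate the register $\auxbr{t}$, which the definition of computation initialises to $\False$.

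For each gate I would emit a short block $B_t$ that writes the value of $g_t$ into $\auxbr{t}$, writing $s.\getbr$ for the get instruction on a source $s$ (a primitive input $\inbr{j}$ or a predecessor register $\auxbr{t'}$); since $\getbr$ does not change a register, a value may be read arbitrarily often, which is what handles fan-out. Each block exploits that $\auxbr{t}$ starts at $\False$, so only a set-to-true instruction is ever required. A $\Not$-gate with source $s$ is realised by $\ntst{s.\getbr} \conc \auxbr{t}.\setbr{\True}$, which sets $\auxbr{t}$ to $\True$ exactly when $s$ reads $\False$. An $\And$-gate with sources $s_1,s_2$ is realised by $\ntst{s_1.\getbr} \conc \fjmp{4} \conc \ntst{s_2.\getbr} \conc \fjmp{2} \conc \auxbr{t}.\setbr{\True}$, where both jumps land on the first instruction of the next block and the set instruction is reached only when both gets reply $\True$. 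An $\Or$-gate with sources $s_1,s_2$ is realised by $\ptst{s_1.\getbr} \conc \fjmp{2} \conc \ptst{s_2.\getbr} \conc \auxbr{t}.\setbr{\True}$, where the set instruction is reached as soon as one get replies $\True$. For unbounded fan-in one would replace these by chains of length $O(d)$ for a gate of fan-in $d$, keeping the total linear in the number of wires.

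I would then take $X = B_1 \conc \ldots \conc B_m \conc \ptst{\auxbr{m}.\getbr} \conc \outbr.\setbr{\True} \conc \halt$. The trailing three instructions copy the value of the output gate from $\auxbr{m}$ to $\outbr$: because $\outbr$ is initialised to $\False$, it suffices to set it to $\True$ when $\auxbr{m}$ reads $\True$, so the forbidden instruction $\outbr.\setbr{\False}$ never occurs. Every jump inside a block is self-contained, being a skip over one instruction or a forward jump onto the first instruction of the following block, so concatenation leaves the control flow intact and execution runs through $B_1,\ldots,B_m$ in order. Correctness follows by induction on $t$ along the topological order: after $B_t$ each register $\auxbr{t'}$ with $t' \leq t$ holds the value of $g_{t'}$ on the given input, the step being the per-gate case analysis above together with the fact that all sources of $g_t$ are already evaluated. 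Hence $\outbr$ finishes holding the value of $g_m$, i.e.\ the Boolean function induced by $C$, and since each block has bounded length, $\psize(X) = O(m)$ is linear in the size of $C$.

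The conceptual core is just the decision to store each gate value once; after that, the only delicate points are bookkeeping. I expect the main thing to get right to be the jump offsets inside the $\And$- and $\Or$-blocks, namely checking that on every branch control arrives at the first instruction of the next block, so that the blocks compose cleanly under concatenation; and the observation that reading a register through $\getbr$ is non-destructive, which is what makes fan-out greater than $1$ harmless. Neither is hard, but both must be verified before the linear size bound can be claimed.
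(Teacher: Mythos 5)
Your proposal is correct and takes essentially the same approach as the paper's proof: the paper likewise assigns one auxiliary register per gate, emits a short test-and-set block for each gate in a topological order (exploiting that registers start at $\False$ so only set-to-true instructions are needed), and finishes with a conditional copy of the last gate's register into the output register. The only differences are cosmetic (your jump offsets and use of negative tests versus the paper's positive-test blocks, and your induction along the topological order versus the paper's induction on circuit depth).
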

\begin{proof}
Let $\inseqc$ be the function from the set of all Boolean circuits with
input nodes $\inode{1},\ldots,\inode{n}$ and gates
$\gate{1},\ldots,\gate{m}$ to $\ISbrna$ as follows:
\begin{ldispl}
\inseqc(C) =
\inseqc'(\gate{1}) \conc \ldots \conc \inseqc'(\gate{m}) \conc
\ptst{\auxbr{m}.\getbr} \conc
\ptst{\outbr.\setbr{\True}} \conc \halt\;,
\end{ldispl}%
where
\begin{ldispl}
\inseqc'(\gate{k}) = {} \\ \quad
\inseqc''(\pnode) \conc \fjmp{2} \conc \ptst{\auxbr{k}.\setbr{\True}}
\\ \quad
\mbox
 {if $\gate{k}$ is a $\Not$-gate with direct preceding node $\pnode$}\;,
\\
\inseqc'(\gate{k}) = {} \\ \quad
\inseqc''(\pnode) \conc \fjmp{2} \conc
\inseqc''(\pnode') \conc \ptst{\auxbr{k}.\setbr{\True}}
\\ \quad
\mbox
 {if $\gate{k}$ is a $\Or$-gate with direct preceding nodes $\pnode$
  and $\pnode'$}\;,
\\
\inseqc'(\gate{k}) = {} \\ \quad
\inseqc''(\pnode) \conc \fjmp{2} \conc \fjmp{3} \conc
\inseqc''(\pnode') \conc \ptst{\auxbr{k}.\setbr{\True}}
\\ \quad
\mbox
 {if $\gate{k}$ is a $\And$-gate with direct preceding nodes $\pnode$
  and $\pnode'$}\;,
\end{ldispl}%
and
\begin{ldispl}
\begin{aceqns}
\inseqc''(\inode{k}) & = & \ptst{\inbr{k}.\getbr}\;,
\\
\inseqc''(\gate{k}) & = & \ptst{\auxbr{k}.\getbr}\;.
\end{aceqns}
\end{ldispl}%
Using the same facts about disjunctions and conjunctions as in the
proofs of Theorems~\ref{theorem-comput-boolfunc}
and~\ref{theorem-comput-boolform}, it is easy to prove by induction on
the depth of $C$ that $\inseqc(C)$ computes the Boolean function induced
by $C$ if $\gate{1},\ldots,\gate{m}$ is a topological sorting of the
gates of $C$.
Moreover, it is easy to see that $\psize(\inseqc(C))$ is linear in the
size of $C$.
\qed
\end{proof}

Henceforth, we write $\phi(b_1,\ldots,b_n)$, where $\phi$ is a Boolean
formula containing the variables $v_1,\ldots,v_n$ and
$b_1,\ldots,b_n \in \Bool$, to indicate that $\phi$ is satisfied by the
assignment $\sigma$ to the variables $v_1,\ldots,v_n$ defined by
$\sigma(v_1) = b_1$, \ldots, $\sigma(v_n) = b_n$.

\PLIS\ includes Boolean function families that correspond to
uncomputable functions from $\seqof{\Bool}$ to $\Bool$.
Take an undecidable set $N \subseteq \Nat$ and consider the Boolean
function family $\indfam{f_n}{n \in \Nat}$ with, for each $n \in \Nat$,
$\funct{f_n}{\Bool^n}{\Bool}$ defined by
\begin{ldispl}
\begin{gceqns}
f_n(b_1,\ldots,b_n) = \True  & \mif n \in N\;,
\\
f_n(b_1,\ldots,b_n) = \False & \mif n \not\in N\;.
\end{gceqns}
\end{ldispl}%
For each $n \in N$, $f_n$ is computed by the instruction sequence
$\outbr.\setbr{\True} \conc \halt$.
For each $n \not\in N$, $f_n$ is computed by the instruction sequence
$\outbr.\setbr{\False} \conc \halt$.
The length of these instruction sequences is constant in $n$.
Hence, $\indfam{f_n}{n \in \Nat}$ is in \PLIS.
However, the corresponding function $\funct{f}{\seqof{\Bool}}{\Bool}$ is
clearly uncomputable.
This reminds of the fact that \PTpoly\ includes uncomputable functions
from $\seqof{\Bool}$ to $\Bool$.

It happens that \PLIS\ and \PTpoly\ coincide, provided that we identify
each Boolean function family $\indfam{f_n}{n \in \Nat}$ with the unique
function $\funct{f}{\seqof{\Bool}}{\Bool}$ such that for each
$n \in \Nat$, for each $w \in \Bool^n$, $f(w) = f_n(w)$.
\begin{theorem}
\label{theorem-PLIS-is-PTpoly}
$\PLIS = \PTpoly$.
\end{theorem}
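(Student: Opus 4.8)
The plan is to establish the two inclusions $\PTpoly \subseteq \PLIS$ and $\PLIS \subseteq \PTpoly$ separately, exploiting the two standard characterizations of $\PTpoly$: on the one hand as the class of functions computed by polynomial-size families of Boolean circuits, and on the other hand as the class of functions computed by polynomial-time Turing machines taking polynomial-length advice. Throughout, a Boolean function family $\indfam{f_n}{n \in \Nat}$ is identified with the corresponding function $\funct{f}{\seqof{\Bool}}{\Bool}$ as in the statement, and both characterizations of $\PTpoly$ are treated as known.

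For the inclusion $\PTpoly \subseteq \PLIS$, I would start from the circuit characterization. Given $f \in \PTpoly$, take a polynomial-size family $\indfam{C_n}{n \in \Nat}$ of Boolean circuits computing $f$, where $C_n$ induces $f_n$. After rewriting each $C_n$ over the basis $\set{\Not,\Or,\And}$ at the cost of at most a constant factor in size, Theorem~\ref{theorem-comput-boolcirc} yields for each $n$ an instruction sequence $X_n \in \ISbr$ that computes $f_n$ with $\psize(X_n)$ linear in the size of $C_n$. Since the circuit sizes are bounded by a polynomial in $n$, so are the lengths $\psize(X_n)$, and hence $\indfam{f_n}{n \in \Nat} \in \PLIS$.

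For the converse inclusion $\PLIS \subseteq \PTpoly$, I would use the advice characterization. Suppose $\indfam{f_n}{n \in \Nat} \in \PLIS$, witnessed by a polynomial $h$ and instruction sequences $X_n \in \ISbr$ with $X_n$ computing $f_n$ and $\psize(X_n) \leq h(n)$. I would take as advice for inputs of length $n$ a binary encoding of $X_n$. This encoding has polynomial length: the number of instructions is at most $\psize(X_n) \leq h(n)$; any input-register index occurring in $X_n$ is at most $n$, since a reference to $\inbr{j}$ with $j > n$ would deadlock rather than compute $f_n$; and, because all auxiliary registers start at $\False$, the auxiliary-register indices may be assumed, after a harmless renumbering, to lie in $[1,\psize(X_n)]$. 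Given an input $w$ of length $n$ together with this advice, the Turing machine then simulates the behaviour $\extr{X_n}$ executing against the Boolean registers: it keeps the contents of the auxiliary and output registers on its work tape (all initially $\False$), reads the bit $w_i$ whenever a get instruction on the input register named $\inbr{i}$ is processed, and follows the effect of the test, jump, set and termination instructions dictated by the thread extraction equations and the use axioms. Because an instruction sequence in $\FIS$ is single-pass and contains only forward jumps, the position in the sequence strictly increases at each step, so the simulation halts after at most $\psize(X_n)$ steps, each costing polynomial time; the final contents of the output register then equals $f_n(w)$ by the assumption that $X_n$ computes $f_n$. This places $f$ in $\PTpoly$.

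I expect the main obstacle to lie in the second inclusion, namely in verifying rigorously that the step-by-step simulation faithfully reproduces the semantics defined by the use and apply operators and that the whole computation, including the book-keeping for the polynomially many auxiliary registers, runs in polynomial time. The single-pass nature of the instruction sequences and the fact that only forward jumps occur are precisely what make this manageable: they guarantee termination within $\psize(X_n)$ steps and bound the number of registers that can be touched, so no unbounded search or loop detection is needed. The remaining details, namely fixing the advice encoding and matching each of the axioms TSU1--TSU8 and TSA1--TSA8 to a simulation step, are routine.
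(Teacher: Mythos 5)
Your proposal is correct and follows essentially the same route as the paper's own proof: the inclusion $\PTpoly \subseteq \PLIS$ via the circuit characterization and Theorem~\ref{theorem-comput-boolcirc}, and the inclusion $\PLIS \subseteq \PTpoly$ via a Turing machine that takes a binary description of $X_n$ as advice and simulates it, with the same kind of without-loss-of-generality bound on the register and jump indices occurring in $X_n$ to keep the advice length and simulation time polynomial. The only difference is that you spell out the simulation and index-bounding details somewhat more explicitly than the paper, which dismisses them as routine.
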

\begin{proof}
We will prove the inclusion $\PTpoly \subseteq \PLIS$ using the
definition of $\PTpoly$ in terms of Boolean circuits and we will prove
the inclusion $\PLIS \subseteq \PTpoly$ using the definition of
$\PTpoly$ in terms of Turing machines that take advice.

$\PTpoly \subseteq \PLIS$:
Suppose that $\indfam{f_n}{n \in \Nat}$ in $\PTpoly$.
Then, for all $n \in \Nat$, there exists a Boolean circuit $C$ such that
$C$ computes $f_n$ and the size of $C$ is polynomial in $n$.
For each $n \in \Nat$, let $C_n$ be such a $C$.
From Theorem~\ref{theorem-comput-boolcirc} and the fact that linear in
the size of $C_n$ implies polynomial in $n$, it follows that each
Boolean function family in $\PTpoly$ is also in $\PLIS$.

$\PLIS \subseteq \PTpoly$:
Suppose that $\indfam{f_n}{n \in \Nat}$ is in $\PLIS$.
Then, for all $n \in \Nat$, there exists an $X \in \ISbr$ such that $X$
computes $f_n$ and $\psize(X)$ is polynomial in $n$.
For each $n \in \Nat$, let $X_n$ be such an $X$.
Then $f$ can be computed by a Turing machine that, on an input of size
$n$, takes a binary description of $X_n$ as advice and then just
simulates the execution of $X_n$.
It is easy to see that, under the assumption that instructions
$\auxbr{i}.m$, $\ptst{\auxbr{i}.m}$, $\ntst{\auxbr{i}.m}$ and $\fjmp{i}$
with $i > \psize(X_n)$ do not occur in $X_n$, the size of the
description of $X_n$ and the number of steps that it takes to simulate
the execution of $X_n$ are both polynomial in $n$.
It is obvious that we can make the assumption without loss of
generality.
Hence, each Boolean function family in $\PLIS$ is also in $\PTpoly$.
\qed
\end{proof}

We do not know whether there are restrictions on the number of auxiliary
Boolean registers in the definition of \PLIS\ that lead to a class
different from \PLIS.
In particular, it is unknown to us whether the restriction to zero
auxiliary Boolean registers leads to a class different from \PLIS.

\section{The Non-uniform Super-polynomial Complexity Hypothesis}
\label{sect-hypothesis}

In this section, we introduce a complexity hypothesis which is a
counterpart of the classical complexity theoretic conjecture that
$\NPT \not\subseteq \PTpoly$ in the current setting.
The counterpart in question corresponds to the conjecture that
$\iiiSAT \notin \PTpoly$.
By the \NPT-completeness of \iiiSAT, these conjectures are equivalent.
If they are right, then the conjecture that $\NPT \neq \PT$ is right as
well.
We talk here about a hypothesis instead of a conjecture because we are
primarily interested in its consequences.

To formulate the hypothesis, we need a Boolean function family
$\indfam{\iiiSATC_n}{n \in \Nat}$ that corresponds to \iiiSAT.
We obtain this Boolean function family by encoding \iiiCNF-formulas as
sequences of Boolean values.

We write $\ndisj(k)$ for
$\binom{2k}{1} + \binom{2k}{2} + \binom{2k}{3}$.
$\ndisj(k)$ is the number of combinations of at most $3$
elements from a set with $2k$ elements.
Notice that $\ndisj(k) = (4k^3 + 5k) / 3$.

It is assumed that a countably infinite set $\set{v_1,v_2,\ldots}$ of
propositional variables has been given.
Moreover, it is assumed that a family of bijections
\begin{ldispl}
\indfam
 {\funct{\alpha_k}{\left[1,\ndisj(k)\right]}
        {\set{L \subseteq \set{v_1,\Not v_1,\ldots,v_k,\Not v_k} \where
              1 \leq \card(L) \leq 3}}}
 {k \in \Nat}
\end{ldispl}%
has been given that satisfies the following two conditions:
\begin{ldispl}
\Forall{i \in \Nat}
 {\Forall{j \in [1,\ndisj(i)]}{{\alpha_i}^{-1}(\alpha_{i+1}(j)) = j}}\;,
\\
\alpha\; \mbox{is polynomial-time computable}\;,
\end{ldispl}
\par
where
$\funct{\alpha}{\Natpos}
       {\set{L \subseteq \set{v_1,\Not v_1,v_2,\Not v_2,\ldots} \where
             1 \leq \card(L) \leq 3}}$
is defined by
\begin{ldispl}
\mbox{} \hspace*{\leftmargin}
\alpha(i) = \alpha_{\min \set{j \where i \in [1,\ndisj(j)]}}(i)\;.
\end{ldispl}%
The function $\alpha$ is well-defined owing to the first condition
on $\indfam{\alpha_k}{k \in \Nat}$.
The second condition is satisfiable, but it is not satisfied by all
$\indfam{\alpha_k}{k \in \Nat}$ satisfying the first condition.

The basic idea underlying the encoding of \iiiCNF-formulas as sequences
of Boolean values is as follows:
\begin{iteml}
\item
if $n = \ndisj(k)$ for some $k \in \Nat$, then the input of $\iiiSATC_n$
consists of one Boolean value for each disjunction of at most three
literals from the set $\set{v_1,\Not v_1,\ldots,v_k,\Not v_k}$;
\item
each Boolean value indicates whether the corresponding disjunction
occurs in the encoded \iiiCNF-formula;
\item
if $\ndisj(k) < n < \ndisj(k+1)$ for some $k \in \Nat$, then only the
first $\ndisj(k)$ Boolean values form part of the encoding.
\end{iteml}

For each $n \in \Nat$, $\funct{\iiiSATC_n}{\Bool^n}{\Bool}$ is
defined as follows:
\begin{iteml}
\item
if $n = \ndisj(k)$ for some $k \in \Nat$:
\begin{ldispl}
\iiiSATC_n\left(b_1,\ldots,b_n\right) = \True
\;\;\;\mathrm{iff}\;\;\;
\displaystyle
\AND{i \in \left[1,n\right] \;\mathrm{s.t.}\; b_i = \True}
 \OR{} \alpha_k(i)
\;\;\mathrm{is\;satisfiable}\;,
\end{ldispl}%
where $k$ is such that $n = \ndisj(k)$;
\item
if $\ndisj(k) < n < \ndisj(k+1)$ for some $k \in \Nat$:
\begin{ldispl}
\iiiSATC_n\left(b_1,\ldots,b_n\right) =
\iiiSATC_{\ndisj(k)}\left(b_1,\ldots,b_{\ndisj(k)}\right)\;,
\end{ldispl}%
where $k$ is such that $\ndisj(k) < n < \ndisj(k+1)$.
\end{iteml}

Because $\indfam{\alpha_k}{k \in \Nat}$ satisfies the condition that
${\alpha_i}^{-1}(\alpha_{i+1}(j)) = j$ for all $i \in \Nat$ and
$j \in [1,\ndisj(i)]$, we have for each $n \in \Nat$, for all
$b_1,\ldots,b_n \in \Bool$:
\begin{ldispl}
\iiiSATC_n\left(b_1,\ldots,b_n\right) =
\iiiSATC_{n+1}\left(b_1,\ldots,b_n,\False\right)\;.
\end{ldispl}%
In other words, for each $n \in \Nat$, $\iiiSATC_{n+1}$ can in essence
handle all inputs that $\iiiSATC_n$ can handle.
This means that $\indfam{\iiiSATC_n}{n \in \Nat}$ converges to the
unique function $\funct{\iiiSATC}{\seqof{\Bool}}{\Bool}$ such that for
each $n \in \Nat$, for each $w \in \Bool^n$,
$\iiiSATC(w) = \iiiSATC_n(w)$.

\iiiSATC\ is meant to correspond to \iiiSAT.
Therefore, the following theorem does not come as a surprise.
Notice that we identify in this theorem the Boolean function family
$\iiiSATC = \indfam{\iiiSATC_n}{n \in \Nat}$ with the unique function
$\funct{\iiiSATC}{\seqof{\Bool}}{\Bool}$ such that for each
$n \in \Nat$, for each $w \in \Bool^n$, $\iiiSATC(w) = \iiiSATC_n(w)$.
\begin{theorem}
\label{theorem-3SATC-is-NP-compl}
$\iiiSATC$ is \NPT-complete.
\end{theorem}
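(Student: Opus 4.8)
The plan is to verify the two defining properties of $\NPT$-completeness separately: membership $\iiiSATC \in \NPT$, and $\NPT$-hardness. For hardness I would give a polynomial-time many-one reduction from the ordinary problem $\iiiSAT$, whose $\NPT$-completeness (Cook--Levin together with the standard reduction to $\iiiCNF$-form) I take as known, and then invoke transitivity of polynomial-time reducibility.

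For membership I would describe a nondeterministic polynomial-time verifier. On an input $w = b_1 \cdots b_n$, it first determines the $k$ with $n = \ndisj(k)$ or $\ndisj(k) < n < \ndisj(k+1)$; since $\ndisj(k) = (4k^3 + 5k)/3$, this $k$ is of order $n^{1/3}$ and can be located in polynomial time by evaluating $\ndisj$ at successive arguments. The verifier then reconstructs the encoded $\iiiCNF$-formula by computing $\alpha_k(i)$ for each $i \in [1,\ndisj(k)]$ with $b_i = \True$, which is polynomial in $n$ because $\alpha$ is polynomial-time computable and $\ndisj(k) \leq n$. It guesses a truth assignment to $v_1,\ldots,v_k$ (only $k$ bits) and checks in polynomial time that every decoded clause is satisfied. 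Some guess is accepted exactly when the decoded formula is satisfiable, i.e.\ exactly when $\iiiSATC(w) = \True$, so $\iiiSATC \in \NPT$.

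For hardness I would use the map essentially inverse to the encoding. Given a $\iiiCNF$-formula $\psi$, first normalize it by relabelling its distinct variables as $v_1,\ldots,v_k$, where $k$ is their number; this relabelling is polynomial-time and preserves satisfiability. Put $n = \ndisj(k)$ and output the string $b_1 \cdots b_n$ with $b_i = \True$ iff the literal set $\alpha_k(i)$ is a clause of $\psi$. By the first case of the definition of $\iiiSATC_n$ we then have $\iiiSATC_n(b_1,\ldots,b_n) = \True$ iff $\psi$ is satisfiable, and since $\iiiSATC$ agrees with $\iiiSATC_n$ on $\Bool^n$ this gives $\iiiSATC(w) = \True$ iff $\psi$ is satisfiable. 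The output is produced by evaluating $\alpha_k$ at each of the $n$ positions and testing membership in the clause set of $\psi$.

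The step that needs care, and that I regard as the main obstacle, is confirming that this reduction runs in polynomial time. Two facts combine: the assumed polynomial-time computability of $\alpha$ (hence of each $\alpha_k$), and the cubic bound $\ndisj(k) = O(k^3)$, which keeps both the output length and the number of evaluations of $\alpha_k$ polynomial in $k$. The normalization step is precisely what forces $k$ to be linear in the size of $\psi$: after relabelling, every variable $v_1,\ldots,v_k$ occurs in $\psi$, so $\psi$ must contain at least $k/3$ clauses and its size is therefore at least of order $k$. Without this normalization a sparse formula over high-index variables could demand a super-polynomial output, so this bookkeeping is exactly what makes the reduction legitimate; the remaining verifications are routine.
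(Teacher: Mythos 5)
Your proposal is correct and rests on the same two transformations as the paper's proof: both directions come down to the encoding/decoding maps between Boolean strings and \iiiCNF-formulas, with the polynomial-time computability of $\alpha$ and the cubic bound $\ndisj(k) = O(k^3)$ doing the real work, and your hardness reduction (relabel variables, set $n = \ndisj(k)$, mark the positions whose clause occurs in $\psi$) is exactly the paper's map $g$ made explicit. The only difference is cosmetic: for membership the paper exhibits a polynomial-time Karp reduction from \iiiSATC\ to \iiiSAT\ and appeals to the \NPT-completeness of \iiiSAT\ (closure of \NPT\ under such reductions), whereas you inline that same decoding map into a direct guess-and-check verifier.
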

\begin{proof}
$\iiiSATC$ is \NPT-complete iff \iiiSATC\ is in \NPT\ and \iiiSATC\ is
\NPT-hard.
Because \iiiSAT\ is \NPT-complete, it is sufficient to prove that
\iiiSATC\ is polynomial-time Karp reducible to \iiiSAT\ and
\iiiSAT\ is polynomial-time Karp reducible to \iiiSATC, respectively.
In the rest of the proof, $\alpha$ is defined as above.

Take the function $f$ from $\seqof{\Bool}$ to the set of all
\iiiCNF-formulas containing the variables $v_1,\ldots,v_k$ for some
$k \in \Nat$ that is defined by
$f(b_1,\ldots,b_n) = \linebreak[2]
 \AND{i \in [1,\max \set{\ndisj(k) \where \ndisj(k) \leq n}]
   \;\mathrm{s.t.}\; b_i = \True}
  \OR{} \alpha(i)$.
Then we have that
$\iiiSATC(b_1,\ldots,b_n) = \iiiSAT(f(b_1,\ldots,b_n))$.
It remains to show that $f$ is polynomial-time computable.
To compute $f(b_1,\ldots,b_n)$, $\alpha$ has to be computed for a number
of times that is not greater than $n$ and $\alpha$ is computable in time
polynomial in $n$.
Hence, $f$ is polynomial-time computable.

Take the unique function $g$ from the set of all \iiiCNF-formulas
containing the variables $v_1,\ldots,v_k$ for some $k \in \Nat$ to
$\seqof{\Bool}$ such that for all \iiiCNF-formulas $\phi$ containing the
variables $v_1,\ldots,v_k$ for some $k \in \Nat$, $f(g(\phi)) = \phi$
and there exists no $w \in \seqof{\Bool}$ shorter than $g(\phi)$ such
that $f(w) = \phi$.
We have that $\iiiSAT(\phi) = \iiiSATC(g(\phi))$.
It remains to show that $g$ is polynomial-time computable.
To compute $g(\phi)$, where $l$ is the size of $\phi$, $\alpha$ has to
be computed for each clause a number of times that is not greater than
$\ndisj(l)$ and $\alpha$ is computable in time polynomial in
$\ndisj(l)$.
Moreover, $\phi$ contains at most $l$ clauses.
Hence, $g$ is polynomial-time computable.
\qed
\end{proof}

Before we turn to the non-uniform super-polynomial complexity
hypothesis, we touch lightly on the choice of the family of bijections
in the definition of $\iiiSATC$.
It is easy to see that the choice is not essential.
Let $\iiiSATC'$ be the same as \iiiSATC, but based on another family of
bijections, say $\indfam{\alpha'_n}{n \in \Nat}$, and let, for each
$i \in \Nat$, for each $j \in \left[1,\ndisj(i)\right]$,
$b'_j = b_{{\alpha_i}^{-1}(\alpha'_i(j))}$.
Then:
\begin{iteml}
\item
if $n = \ndisj(k)$ for some $k \in \Nat$:
\begin{ldispl}
\iiiSATC_n\left(b_1,\ldots,b_n\right) =
\iiiSATC'_n\left(b'_1,\ldots,b'_n\right)\;;
\end{ldispl}%
\item
if $\ndisj(k) < n < \ndisj(k+1)$ for some $k \in \Nat$:
\begin{ldispl}
\iiiSATC_n\left(b_1,\ldots,b_n\right) =
\iiiSATC'_n\left(b'_1,\ldots,b'_{\ndisj(k)},
                 b_{\ndisj(k)+1},\ldots,b_n\right)\;,
\end{ldispl}%
where $k$ is such that $\ndisj(k) < n < \ndisj(k+1)$.
\end{iteml}
This means that the only effect of another family of bijections is
another order of the relevant arguments.

The \emph{non-uniform super-polynomial complexity hypothesis} is the
following hypothesis:
\begin{hypothesis}
\label{hypothesis-basic}
$\iiiSATC \notin \PLIS$.
\end{hypothesis}

$\iiiSATC \notin \PLIS$ expresses in short that there does not exist a
polynomial function $\funct{h}{\Nat}{\Nat}$ such that for all
$n \in \Nat$ there exists an $X \in \ISbr$ such that $X$ computes
$\iiiSATC_n$ and $\psize(X) \leq h(n)$.
This corresponds with the following informal formulation of the
non-uniform super-polynomial complexity hypothesis:
\begin{quote}
the lengths of the shortest instruction sequences that compute the \\
Boolean functions $\iiiSATC_n$ are not bounded by a polynomial in $n$.
\end{quote}

The statement that Hypothesis~\ref{hypothesis-basic} is a counterpart of
the conjecture that $\iiiSAT \notin \PTpoly$ is made rigorous in the
following theorem.
\begin{theorem}
\label{theorem-equiv-hypotheses}
$\iiiSATC \notin \PLIS$ is equivalent to\, $\iiiSAT \notin \PTpoly$.
\end{theorem}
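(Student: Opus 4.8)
The plan is to prove the equivalence by exploiting two facts established earlier in the paper: that $\iiiSATC$ is \NPT-complete (Theorem~\ref{theorem-3SATC-is-NP-compl}) and that $\PLIS = \PTpoly$ (Theorem~\ref{theorem-PLIS-is-PTpoly}). The cleanest route is to reduce the claimed equivalence to a statement purely about classical complexity classes, and then invoke a standard robustness property of $\PTpoly$ under polynomial-time Karp reductions.

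First I would rewrite the two sides using Theorem~\ref{theorem-PLIS-is-PTpoly}. Under the identification of each Boolean function family with the corresponding function $\funct{f}{\seqof{\Bool}}{\Bool}$, that theorem gives $\PLIS = \PTpoly$ as sets. Hence $\iiiSATC \notin \PLIS$ holds if and only if $\iiiSATC \notin \PTpoly$. So the entire statement collapses to showing that
\begin{ldispl}
\iiiSATC \notin \PTpoly \quad\text{is equivalent to}\quad \iiiSAT \notin \PTpoly\;.
\end{ldispl}
In other words, once Theorem~\ref{theorem-PLIS-is-PTpoly} is applied, it remains only to prove that $\iiiSATC \in \PTpoly$ if and only if $\iiiSAT \in \PTpoly$.

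For this remaining biconditional I would use the fact that $\PTpoly$ is closed under polynomial-time Karp reductions in both directions: if $A$ is polynomial-time Karp reducible to $B$ and $B \in \PTpoly$, then $A \in \PTpoly$. The proof of Theorem~\ref{theorem-3SATC-is-NP-compl} already exhibits explicit polynomial-time Karp reductions $f$ (from \iiiSATC\ to \iiiSAT) and $g$ (from \iiiSAT\ to \iiiSATC). Using the reduction $f$, if $\iiiSAT \in \PTpoly$ then $\iiiSATC \in \PTpoly$; using the reduction $g$, if $\iiiSATC \in \PTpoly$ then $\iiiSAT \in \PTpoly$. Contraposing each implication yields exactly the equivalence of $\iiiSATC \notin \PTpoly$ with $\iiiSAT \notin \PTpoly$, which combined with the first step gives the theorem.

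The main obstacle is not the reduction itself, which is already available, but checking that $\PTpoly$ really is downward closed under polynomial-time Karp reductions in the precise sense needed here, since the reductions produced in Theorem~\ref{theorem-3SATC-is-NP-compl} may change input length. Concretely, a polynomial-length circuit family for $\iiiSAT$ composed with a polynomial-time-computable map $f$ of polynomially bounded output length yields a polynomial-length circuit family for $\iiiSATC$, because a polynomial-time function has polynomial-size circuits and the composition of polynomially many polynomially sized stages stays polynomial. I would therefore spend the bulk of the argument verifying the length bookkeeping: that $f$ and $g$ map inputs of size $n$ to outputs of size polynomial in $n$, so that the advice or circuit size transfers correctly. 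This is routine given the explicit definitions of $f$ and $g$ and the polynomial-time computability of $\alpha$ established earlier, so I expect the step to be straightforward once the closure property is stated carefully.
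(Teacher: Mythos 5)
Your proposal is correct and follows essentially the same route as the paper: the paper's proof likewise derives the equivalence from Theorem~\ref{theorem-PLIS-is-PTpoly} ($\PLIS = \PTpoly$) together with Theorem~\ref{theorem-3SATC-is-NP-compl} and the \NPT-completeness of \iiiSAT, which amounts to exactly the mutual polynomial-time Karp reducibility of \iiiSATC\ and \iiiSAT\ plus closure of \PTpoly\ under such reductions that you spell out. Your version merely makes explicit (including the length bookkeeping) what the paper compresses into a one-line citation of those two theorems.
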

\begin{proof}
This follows immediately from Theorems~\ref{theorem-PLIS-is-PTpoly}
and~\ref{theorem-3SATC-is-NP-compl} and the fact that \iiiSAT\ is
\NPT-complete.
\qed
\end{proof}

\section{Splitting of Instruction Sequences and Threads}
\label{sect-splitting}

The instruction sequences considered in \PGA\ are sufficient to define a
counterpart of \PTpoly, but not to define a counterpart of \NPTpoly.
For a counterpart of \NPTpoly, we introduce in this section an extension
of \PGA\ that allows for single-pass instruction sequences to split.
We also introduce an extension of \BTA\ with a behavioural counterpart
of instruction sequence splitting that is reminiscent of thread forking.
First, we extend \PGA\ with instruction sequence splitting.

It is assumed that a fixed but arbitrary countably infinite set
$\BoolPar$ of \emph{Boolean parameters} has been given.
Boolean parameters are used to set up a simple form of parameterization
for single-pass instruction sequences.

\PGAsplit\ is \PGA\ with built-in basic instructions for instruction
sequence splitting.
In \PGAsplit, the following basic instructions belong to $\BInstr$:
\begin{iteml}
\item
for each $\bp \in \BoolPar$, a \emph{splitting instruction}
$\split{\bp}$;
\item
for each $\bp \in \BoolPar$, a \emph{direct replying instruction}
$\reply{\bp}$.
\end{iteml}

On execution of the instruction sequence $\ptst{\split{\bp}} \conc X$,
the primitive instruction $\ptst{\split{\bp}}$ brings about concurrent
execution of the instruction sequence $X$ with the Boolean parameter
$\bp$ instantiated to $\True$ and the instruction sequence
$\fjmp{2} \conc X$ with the Boolean parameter $\bp$ instantiated to
$\False$.
The case where $\ptst{\split{\bp}}$ is replaced by $\ntst{\split{\bp}}$
differs in the obvious way, and likewise the case where
$\ptst{\split{\bp}}$ is replaced by $\split{\bp}$.

On execution of the instruction sequence $\ptst{\reply{\bp}} \conc X$,
the primitive instruction $\ptst{\reply{\bp}}$ brings about execution
of the instruction sequence $X$ if the value taken by the Boolean
parameter $\bp$ is $\True$ and execution of the instruction sequence
$\fjmp{2} \conc X$ if the value taken by the Boolean parameter $\bp$
is $\False$.
The case where $\ptst{\reply{\bp}}$ is replaced by $\ntst{\reply{\bp}}$
differs in the obvious way, and likewise the case where
$\ptst{\reply{\bp}}$ is replaced by $\reply{\bp}$.

The axioms of \PGAsplit\ are the same as the axioms of \PGA.
The thread extraction operation for closed \PGAsplit\ terms in which the
repetition operator does not occur is defined as for closed \PGA\ terms
in which the repetition operator does not occur.
However, in the presence of the additional instructions of \PGAsplit,
the intended behaviour of the instruction sequence denoted by a closed
term $P$ is not $\extr{P}$.
In the notation of the extension of \BTA\ introduced below, the intended
behaviour is described by $\csi{\seq{\extr{P}}}$.

Henceforth, we will write $\FSIS$ for the set of all instruction
sequences that can be denoted by closed \PGAsplit\ terms in which the
repetition operator does not occur.
Moreover, we will write $\SISbr$ for the set of all instruction
sequences from $\FSIS$ in which all primitive instructions, with the
exception of jump instructions and the termination instruction, contain
only basic instructions from the set
\begin{ldispl}
\set{f.\getbr \where f \in \Fociin} \union
\set{\outbr.\setbr{\True}} \union
\set{\split{\bp}, \reply{\bp} \where \bp \in \BoolPar}\;.
\end{ldispl}%
Notice that no auxiliary registers are used in instruction sequences
from $\SISbr$ and that the basic instruction $\outbr.\setbr{\False}$
does not occur in instruction sequences from $\SISbr$.

In the remainder of this section, we extend \BTA\ with a mechanism for
multi-threading that supports thread splitting, the behavioural
counterpart of instruction sequence splitting.
This extension is entirely tailored to the behaviours of the instruction
sequences that can be denoted by closed \PGAsplit\ terms.

It is assumed that the collection of threads to be interleaved takes
the form of a sequence of threads, called a \emph{thread vector}.

The interleaving of threads is based on the simplest deterministic
interleaving strategy treated in~\cite{BM04c}, namely cyclic
interleaving, but any other plausible deterministic interleaving
strategy would be appropriate for our purpose.%
\footnote
{Fairness of the strategy is not an issue because the behaviours of the
 instruction sequences that can be denoted by closed \PGAsplit\ terms
 are finite threads.
 However, deadlock of one thread in the thread vector should not prevent
 others to proceed.}
Cyclic interleaving basically operates as follows: at each stage of the
interleaving, the first thread in the thread vector gets a turn to
perform a basic action and then the thread vector undergoes cyclic
permutation.
We mean by cyclic permutation of a thread vector that the first thread
in the thread vector becomes the last one and all others move one
position to the left.
If one thread in the thread vector deadlocks, the whole does not
deadlock till all others have terminated or deadlocked.

We introduce the additional sort $\TV$ of \emph{thread vectors}.
To build terms of sort $\Thr$, we introduce the following additional
operators:
\begin{iteml}
\item
the unary \emph{cyclic interleaving} operator
$\funct{\csiop {}}{\TV}{\Thr}$;
\item
the unary \emph{deadlock at termination} operator
$\funct{\stdop}{\Thr}{\Thr}$;
\item
for each $\bp \in \BoolPar$ and $b \in \set{\True,\False}$,
the unary \emph{parameter instantiation} operator
$\funct{\instaop{\bp}{b}}{\Thr}{\Thr}$;
\item
for each $\bp \in \BoolPar$,
the two binary \emph{postconditional composition} operators
$\funct{\pcc{\ph}{\split{\bp}}{\ph}}{\Thr \x \Thr}{\Thr}$ and
$\funct{\pcc{\ph}{\reply{\bp}}{\ph}}{\Thr \x \Thr}{\Thr}$.
\end{iteml}
To build terms of sort $\TV$, we introduce the following constants and
operators:
\begin{itemize}
\item
the \emph{empty thread vector} constant $\const{\emptyseq}{\TV}$;
\item
the \emph{singleton thread vector} operator
$\funct{\seq{\ph}}{\Thr}{\TV}$;
\item
the \emph{thread vector concatenation} operator
$\funct{\ph \concat \ph}{\TV \x \TV}{\TV}$.
\end{itemize}
Throughout the paper, we assume that there are infinitely many variables
of sort $\TV$, including $\alpha$.

For an operational intuition, $\split{\bp}$ can be considered a thread
splitting action: when encountering $\pcc{p}{\split{\bp}}{q}$ at some
stage of interleaving, this thread is split into two threads, namely $p$
with the Boolean parameter $\bp$ instantiated to $\True$ and $q$ with
the Boolean parameter $\bp$ instantiated to $\False$.
For an operational intuition, $\reply{\bp}$ can be considered a direct
replying action: on performing $\reply{\bp}$ the value taken by the
Boolean parameter $\bp$ is returned as reply value without any further
processing.

Intuitively, $\csi{\alpha}$ is the thread that results from cyclic
interleaving of the threads in the thread vector $\alpha$, covering the
above-mentioned splitting of a thread in the thread vector into two
threads.
This splitting involves instantiation of Boolean parameters in threads.
Intuitively, $\insta{\bp}{b}{p}$ is the thread that results from
instantiating the Boolean parameter $\bp$ to $b$ in thread $p$.
In the event of deadlock of one thread in the thread vector, the whole
deadlocks only after all others have terminated or deadlocked.
The auxiliary operator $\stdop$ is introduced to describe this fully
precise.
Intuitively, $\std{p}$ is the thread that results from turning
termination into deadlock in $p$.

The axioms for cyclic interleaving with thread splitting, deadlock at
termination, and parameter instantiation are given in
Tables~\ref{axioms-csi-split}, \ref{axioms-std} and~\ref{axioms-bpar}.%
\begin{table}[!tb]
\caption{Axioms for cyclic interleaving with thread splitting}
\label{axioms-csi-split}
\begin{eqntbl}
\begin{axcol}
\csi{\emptyseq} = \Stop                                & \axiom{CSI1} \\
\csi{\seq{\Stop}\concat \alpha} = \csi{\alpha}         & \axiom{CSI2} \\
\csi{\seq{\DeadEnd} \concat \alpha} = \std{\csi{\alpha}}
                                                       & \axiom{CSI3} \\
\csi{\seq{\Tau \bapf x}\concat \alpha} =
\Tau \bapf \csi{\alpha \concat \seq{x}}                & \axiom{CSI4} \\
\csi{\seq{\pcc{x}{a}{y}}\concat \alpha} =
\pcc{\csi{\alpha \concat \seq{x}}}{a}
    {\csi{\alpha \concat \seq{y}}}                     & \axiom{CSI5} \\
\csi{\seq{\pcc{x}{\split{\bp}}{y}}\concat \alpha} =
\Tau \bapf \csi{\alpha \concat
                \seq{\insta{\bp}{\True}{x}} \concat
                \seq{\insta{\bp}{\False}{y}}}          & \axiom{CSI6} \\
\csi{\seq{\pcc{x}{\reply{\bp}}{y}}\concat \alpha} = \std{\csi{\alpha}}
                                                       & \axiom{CSI7}
\end{axcol}
\end{eqntbl}
\end{table}
\begin{table}[!tb]
\caption{Axioms for deadlock at termination}
\label{axioms-std}
\begin{eqntbl}
\begin{axcol}
\std{\Stop} = \DeadEnd                                 & \axiom{S2D1} \\
\std{\DeadEnd} = \DeadEnd                              & \axiom{S2D2} \\
\std{\Tau \bapf x} = \Tau \bapf \std{x}                & \axiom{S2D3} \\
\std{\pcc{x}{a}{y}} = \pcc{\std{x}}{a}{\std{y}}        & \axiom{S2D4} \\
\std{\pcc{x}{\split{\bp}}{y}} = \pcc{\std{x}}{\split{\bp}}{\std{y}}
                                                       & \axiom{S2D5} \\
\std{\pcc{x}{\reply{\bp}}{y}} = \pcc{\std{x}}{\reply{\bp}}{\std{y}}
                                                       & \axiom{S2D6}
\end{axcol}
\end{eqntbl}
\end{table}
\begin{table}[!tb]
\caption{Axioms for parameter instantiation}
\label{axioms-bpar}
\begin{eqntbl}
\begin{saxcol}
\insta{\bp}{b}{\Stop} = \Stop                        & & \axiom{BPI1} \\
\insta{\bp}{b}{\DeadEnd} = \DeadEnd                  & & \axiom{BPI2} \\
\insta{\bp}{b}{\Tau \bapf x} = \Tau \bapf \insta{\bp}{b}{x}
                                                     & & \axiom{BPI3} \\
\insta{\bp}{b}{\pcc{x}{a}{y}} =
\pcc{\insta{\bp}{b}{x}}{a}{\insta{\bp}{b}{y}}        & & \axiom{BPI4} \\
\insta{\bp}{b}{\pcc{x}{\split{\bp'}}{y}} =
\pcc{\insta{\bp}{b}{x}}{\split{\bp'}}{\insta{\bp}{b}{y}}
                                  & \mif \bp \neq \bp' & \axiom{BPI5} \\
\insta{\bp}{b}{\pcc{x}{\split{\bp}}{y}} = \DeadEnd   & & \axiom{BPI6} \\
\insta{\bp}{b}{\pcc{x}{\reply{\bp'}}{y}} =
\pcc{\insta{\bp}{b}{x}}{\reply{\bp'}}{\insta{\bp}{b}{y}}
                                  & \mif \bp \neq \bp' & \axiom{BPI7} \\
\insta{\bp}{\True}{\pcc{x}{\reply{\bp}}{y}} =
\Tau \bapf \insta{\bp}{\True}{x}                     & & \axiom{BPI8} \\
\insta{\bp}{\False}{\pcc{x}{\reply{\bp}}{y}} =
\Tau \bapf \insta{\bp}{\False}{y}                    & & \axiom{BPI9}
\end{saxcol}
\end{eqntbl}
\end{table}
In these tables, $a$ stands for an arbitrary action from $\BAct$.
With the exception of CSI7 and BPI6, the axioms simply formalize the
informal explanations given above.
Axiom CSI7 expresses that deadlock takes place when $\reply{\bp}$
ought to be performed next but $\bp$ is an uninstantiated Boolean
parameter.
Axiom BPI6 expresses that deadlock takes place when $\split{\bp}$
ought to be performed next but $\bp$ is an instantiated Boolean
parameter.
To be fully precise, we should give axioms concerning the constants and
operators to build terms of the sort $\TV$ as well.
We refrain from doing so because the constants and operators concerned
are the usual ones for sequences.

To simplify matters, we will henceforth take the set
$\set{\bpar{i} \where i \in \Natpos}$ for the set $\BoolPar$ of Boolean
parameters.

\section{The Complexity Class \PLSIS}
\label{sect-PLSIS}

In this section, we introduce the class \PLSIS\ of all Boolean function
families that can be computed by polynomial-length instruction sequences
from $\SISbr$.

Let $n \in \Nat$, let $\funct{f}{\Bool^n}{\Bool}$, and let
$X \in \SISbr$.
Then $X$ \emph{splitting computes} $f$ if for all
$b_1,\ldots,b_n \in \Bool$:
\begin{ldispl}
( \ldots
 (\csi{\seq{\extr{X}}}
   \useop{\inbr{1}} \BR_{b_1}) \ldots \useop{\inbr{n}} \BR_{b_n})
 \applyop{\outbr} \BR_\False =
\BR_{f(b_1,\ldots,b_n)}\;.
\end{ldispl}%

$\PLSIS$ is the class of all Boolean function families
$\indfam{f_n}{n \in \Nat}$ that satisfy:
\begin{quote}
there exists a polynomial function $\funct{h}{\Nat}{\Nat}$ such that for
all $n \in \Nat$ there exists an $X \in \SISbr$ such that $X$ splitting
computes $f_n$ and $\psize(X) \leq h(n)$.
\end{quote}

A question that arises is how \PLIS\ and \PLSIS\ are related.
It happens that \PLIS\ is included in \PLSIS.
\begin{theorem}
\label{theorem-PLIS-incl-PLSIS}
$\PLIS \subseteq \PLSIS$.
\end{theorem}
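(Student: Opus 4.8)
The plan is to route through the circuit characterisation of \PTpoly. By Theorem~\ref{theorem-PLIS-is-PTpoly} we have $\PLIS = \PTpoly$, so it suffices to show that every family in \PTpoly\ is in \PLSIS. Fix such a family $\indfam{f_n}{n \in \Nat}$ and, for each $n$, a Boolean circuit $C_n$ over $\Not$-, $\Or$- and $\And$-gates that computes $f_n$ and whose size is polynomial in $n$ (the basis used in Theorem~\ref{theorem-comput-boolcirc}). I would then build, for each $n$, an $X \in \SISbr$ of size linear in the size of $C_n$ that splitting computes $f_n$; this places the family in \PLSIS. Working from circuits rather than directly from an arbitrary $X \in \ISbr$ is what makes the argument clean: it avoids the bookkeeping of simulating read--write auxiliary registers under branching control flow, because in a circuit each gate value is produced exactly once and reused along its out-edges.

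The first thing I would pin down is the semantics of splitting computation, since the construction leans on it. Because \SISbr\ contains neither auxiliary registers nor $\outbr.\setbr{\False}$, and because the output register starts at $\BR_\False$ and can only be driven to $\True$, the final register in the defining equation of splitting computation is $\True$ exactly when at least one branch produced by the cyclic interleaving performs $\outbr.\setbr{\True}$ --- \emph{provided no branch deadlocks}. Deadlock must be excluded altogether: by CSI3 and S2D1 a single deadlocked branch turns the whole interleaving into $\DeadEnd$, whereupon the apply operator yields $\DivServ$ rather than a Boolean register. The design discipline is therefore that every branch must reach $\halt$, every $\reply{\bpar{k}}$ must act on an already-instantiated parameter (otherwise CSI7 deadlocks), and no parameter may be split upon twice (otherwise BPI6 deadlocks). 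I would record this as a short lemma proved directly from the axioms of Tables~\ref{axioms-csi-split}, \ref{axioms-std} and~\ref{axioms-bpar} together with the use and apply axioms.

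The construction itself \emph{guesses and verifies} the gate values. Topologically sort the gates $\gate{1},\ldots,\gate{m}$ of $C_n$ and attach a fresh Boolean parameter $\bpar{k}$ to $\gate{k}$. Then $X$ is a concatenation of one gadget per gate, followed by an output gadget. The gadget for $\gate{k}$ begins with the plain splitting instruction $\split{\bpar{k}}$ --- which by CSI6 spawns two copies of the continuation, one with $\bpar{k}$ instantiated to $\True$ and one to $\False$ --- followed by a constant-size block that reads the guess via $\reply{\bpar{k}}$, reads the direct predecessors of $\gate{k}$ (via $\reply{\bpar{j}}$ for a gate predecessor $\gate{j}$, already instantiated since $j$ precedes $k$, and via $\inbr{j}.\getbr$ for an input-node predecessor $\inode{j}$), and executes $\halt$ unless $\bpar{k}$ equals the gate's Boolean function applied to those predecessor values. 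The final gadget reads $\reply{\bpar{m}}$ for the output gate $\gate{m}$ and performs $\outbr.\setbr{\True} \conc \halt$ when it is $\True$ and $\halt$ otherwise. Only forward jumps, input gets, $\outbr.\setbr{\True}$, splits and replies occur, so $X \in \SISbr$; each gadget has constant size, so $\psize(X)$ is linear in the size of $C_n$ and hence polynomial in $n$.

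For correctness I would prove, by induction on the topological order, that a branch passes every verification iff it guesses, for each gate, the actual value of that gate on the given input $b_1,\ldots,b_n$: the input-node predecessors contribute the genuine bits $b_j$ through $\inbr{j}.\getbr$, so once all earlier guesses are forced correct, the check at $\gate{k}$ forces $\bpar{k}$ to the true value of $\gate{k}$ as well. Thus exactly one branch survives all checks, reaches the output gadget, and sets the output register to $\True$ precisely when $f_n(b_1,\ldots,b_n) = \True$, while every other branch halts cleanly at its first failed check, before the output gadget, and never touches the output register. Since moreover no branch deadlocks, the defining equation evaluates to $\BR_{f_n(b_1,\ldots,b_n)}$, so $X$ splitting computes $f_n$ and the family lies in \PLSIS. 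I expect the main obstacle to be exactly the semantic lemma of the second paragraph --- turning the intuitive ``output is the disjunction over terminating branches'' into a rigorous statement about $\csi{\seq{\extr{X}}}$ under the use and apply operators, and confirming that the guess-and-verify discipline genuinely rules out every source of deadlock; the remaining gadget count is routine.
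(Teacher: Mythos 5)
Your proposal is correct, but it takes a genuinely different route from the paper. The paper does not pass through circuits at all: it takes an arbitrary $X \in \ISbr$ computing $f_n$, first removes $\outbr.\setbr{\False}$ by Theorem~\ref{theorem-comput-output}, and then applies a direct syntactic transformation in which, working backwards through $X$, the last write to each auxiliary register is replaced by a split instruction on a fresh parameter --- one of the two spawned branches halts immediately, the other continues as if the register held the written value --- and all subsequent reads of that register are replaced by the corresponding reply instruction; iterating this eliminates all auxiliary registers and gives $\psize(Y) \leq 3 \mul \psize(X)$. Your route instead invokes $\PLIS = \PTpoly$ (Theorem~\ref{theorem-PLIS-is-PTpoly}), extracts polynomial-size circuits, and builds a guess-and-verify splitting instruction sequence with one fresh parameter per gate. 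This is sound: the appeal to Theorem~\ref{theorem-PLIS-is-PTpoly} is non-circular, your deadlock analysis (CSI3 with S2D1, CSI7, BPI6) identifies exactly the right semantic obligations, the constant-size verification gadgets are implementable with test and forward-jump instructions, and the construction indeed echoes the guess-and-verify style the paper itself uses in the proofs of Theorems~\ref{theorem-3SATC-in-PLSIS} and~\ref{theorem-3SATC-is-PLSIS-compl}. As for what each approach buys: the paper's argument is self-contained within the instruction-sequence setting and yields a stronger, instance-wise statement --- every individual $X \in \ISbr$ has a splitting counterpart of linear length --- which the paper reuses later (the proof of Theorem~\ref{theorem-PLSIS-is-NPTpoly} explicitly invokes this transformation); your argument is conceptually cleaner in that it sidesteps the bookkeeping for registers that are written several times, but it silently imports the classical equivalence of the advice and circuit characterizations of \PTpoly\ underlying Theorem~\ref{theorem-PLIS-is-PTpoly}, and it establishes only the class inclusion, with a polynomial rather than linear length bound relative to the original instruction sequence.
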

\begin{proof}
Suppose that $\indfam{f_n}{n \in \Nat}$ in $\PLIS$.
Let $n \in \Nat$, and let $X \in \ISbr$ be such that $X$ computes $f_n$
and $\psize(X)$ is polynomial in $n$.
Assume that the basic instruction $\outbr.\setbr{\False}$ does not occur
in $X$.
By Theorem~\ref{theorem-comput-output}, this assumption can be made
without loss of generality.
Then an $Y \in \SISbr$ such that $Y$ splitting computes $f_n$ and
$\psize(Y)$ is polynomial in $n$ can be obtained from $X$ as described
below.

Suppose that $X = u_1 \conc \ldots \conc u_k$.
Let $X' \in \ISbr$ be obtained from $u_1 \conc \ldots \conc u_k$ as
follows:
\begin{enuml}
\item
stop if there exists no $i \in [1,k]$ such that
$u_i \equiv \ntst{\auxbr{j}.\setbr{\True}}$ or
$u_i \equiv \ptst{\auxbr{j}.\setbr{\False}}$ for some $j \in \Natpos$;
\item
find the least $i \in [1,k]$ such that
$u_i \equiv \ntst{\auxbr{j}.\setbr{\True}}$ or
$u_i \equiv \ptst{\auxbr{j}.\setbr{\False}}$ for some $j \in \Natpos$;
\item
if $u_i \equiv \ntst{\auxbr{j}.\setbr{\True}}$ for some $j \in \Natpos$,
then replace $u_i$ by $\ptst{\auxbr{j}.\setbr{\True}} \conc \fjmp{2}$;
\item
if $u_i \equiv \ptst{\auxbr{j}.\setbr{\False}}$ for some
$j \in \Natpos$,\,
then replace $u_i$ by $\ntst{\auxbr{j}.\setbr{\False}} \conc \fjmp{2}$;
\item
for each $i' \in [1,k]$, replace $u_{i'}$ by $\fjmp{l + 1}$ if
$u_{i'} \equiv \fjmp{l}$ and $i' < i < i' + l$;
\item
repeat the preceding steps for the resulting instruction sequence.
\end{enuml}
Now, suppose that $X' = u'_1 \conc \ldots \conc u'_{k'}$.
Let $Y \in \SISbr$ be obtained from $u'_1 \conc \ldots \conc u'_{k'}$ as
follows:
\begin{enuml}
\item
stop if there exists no $i \in [1,k']$ such that
$u'_i \equiv \auxbr{j}.\setbr{b}$ or
$u'_i \equiv \ptst{\auxbr{j}.\setbr{\hsp{-.035}\True}}$ or
$u'_i \equiv \ntst{\auxbr{j}.\setbr{\False}}$
for some $j \in \Natpos$ and $b \in \Bool$;
\item
find the greatest $i \in [1,k']$ such that
$u'_i \equiv \auxbr{j}.\setbr{b}$ or
$u'_i \equiv \ptst{\auxbr{j}.\setbr{\True}}$\linebreak[2] or
$u'_i \equiv \ntst{\auxbr{j}.\setbr{\False}}$
for some $j \in \Natpos$ and $b \in \Bool$;
\item
find the unique $j \in \Natpos$ such that focus $\auxbr{j}$ occurs in
$u'_i$;
\item
find the least $j' \in \Natpos$ such that parameter $\bpar{j'}$ does not
occur in $u'_i \conc \ldots \conc u'_{k'}$;
\item
if $u'_i \equiv \auxbr{j}.\setbr{\True}$ or
$u'_i \equiv \ptst{\auxbr{j}.\setbr{\True}}$, then
replace $u'_i$ by $\ntst{\split{\bpar{j'}}} \conc \halt$;
\item
if $u'_i \equiv \auxbr{j}.\setbr{\False}$ or
$u'_i \equiv \ntst{\auxbr{j}.\setbr{\False}}$,\, then
replace $u'_i$ by $\ptst{\split{\bpar{j'}}} \conc \halt$;
\item
for each $i' \in [1,k']$, replace $u'_{i'}$ by $\fjmp{l+1}$ if
$u'_{i'} \equiv \fjmp{l}$ and $i' < i < i' + l$;
\item
for each $i' \in [i+1,k']$:
\begin{enuml}
\item
if $u'_{i'} \equiv \auxbr{j}.\getbr$, then replace $u'_{i'}$ by
$\reply{\bpar{j'}}$,
\item
if $u'_{i'} \equiv \ptst{\auxbr{j}.\getbr}$, then replace $u'_{i'}$ by
$\ptst{\reply{\bpar{j'}}}$,
\item
if $u'_{i'} \equiv \ntst{\auxbr{j}.\getbr}$, then replace $u'_{i'}$ by
$\ntst{\reply{\bpar{j'}}}$;
\end{enuml}
\item
repeat the preceding steps for the resulting instruction sequence.
\end{enuml}
It is easy to prove by induction on $k$ that the Boolean function
computed by $X$ and the Boolean function computed by $X'$ are the same,
and it is easy to prove by induction on $k'$ that the
Boolean function computed by $X'$ and the Boolean function splitting
computed by $Y$ are the same.
Moreover, it is easy to see that $\psize(Y) \leq 3 \mul \psize(X)$.
Hence, $\psize(Y)$ is also polynomial in $n$.
\qed
\end{proof}
The chances are that $\PLSIS \not\subseteq \PLIS$.
In Section~\ref{sect-feature-elim}, we will hypothesize this.

In Section~\ref{sect-hypothesis}, we have hypothesized that
$\iiiSATC \notin \PLIS$.
The question arises whether $\iiiSATC \in \PLSIS$.
This question can be answered in the affirmative.

\begin{theorem}
\label{theorem-3SATC-in-PLSIS}
$\iiiSATC \in \PLSIS$.
\end{theorem}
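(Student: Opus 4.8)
The plan is to realize the standard guess-and-check paradigm for satisfiability directly at the level of instruction sequences, letting the splitting instructions do the non-deterministic guessing and a deterministic sub-sequence, modelled on the function $\inseqcnf$ from the proof of Theorem~\ref{theorem-comput-boolfunc}, do the checking. It suffices to treat the case $n = \ndisj(k)$ for some $k \in \Nat$: for $\ndisj(k) < n < \ndisj(k+1)$ the function $\iiiSATC_n$ agrees with $\iiiSATC_{\ndisj(k)}$ on the first $\ndisj(k)$ arguments and ignores the rest, so the instruction sequence built for $\ndisj(k)$, which consults only the registers named $\inbr{1},\ldots,\inbr{\ndisj(k)}$ and ignores $\inbr{\ndisj(k)+1},\ldots,\inbr{n}$, splitting computes $\iiiSATC_n$ as well. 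Its length stays $O(\ndisj(k)) = O(n)$.

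For $n = \ndisj(k)$, I would take $X \in \SISbr$ of the form
\[
\split{\bpar{1}} \conc \ldots \conc \split{\bpar{k}} \conc
C_1 \conc \ldots \conc C_n \conc \ptst{\outbr.\setbr{\True}} \conc \halt ,
\]
where each $C_i$ is a bounded-size gadget associated with the disjunction $\alpha_k(i)$. The $k$ plain splitting instructions guess an assignment to $v_1,\ldots,v_k$: by axiom CSI6, together with the fact that parameter instantiation passes through splits on other parameters (axiom BPI5), they turn the single initial thread into $2^k$ threads, one for each instantiation of $\bpar{1},\ldots,\bpar{k}$, each continuing into the checking part with its parameters fixed. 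In the checking part the guessed value of $v_j$ is read, as often as needed, by the direct replying instruction $\reply{\bpar{j}}$: under axioms BPI8 and BPI9 a test $\ptst{\reply{\bpar{j}}}$ (resp.\ $\ntst{\reply{\bpar{j}}}$) proceeds exactly when the guessed value makes the literal $v_j$ (resp.\ $\Not v_j$) true. Each gadget $C_i$ first consults the input register via $\ptst{\inbr{i}.\getbr}$ to decide whether the disjunction $\alpha_k(i)$ is present; if it is absent the gadget jumps to $C_{i+1}$, and if it is present it runs a literal-test jump chain, built exactly as in $\inseqcnf$ but with the input tests replaced by the corresponding $\reply{\bpar{j}}$ tests, which jumps to $C_{i+1}$ as soon as some literal of $\alpha_k(i)$ is satisfied by the guess and otherwise falls through to a $\halt$. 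Thus a thread reaches the concluding $\ptst{\outbr.\setbr{\True}} \conc \halt$, and hence performs $\outbr.\setbr{\True}$, if and only if its guessed assignment satisfies every present disjunction.

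It then remains to read off the acceptance condition from the semantics of splitting computation. Since the only instruction acting on $\outbr$ is $\outbr.\setbr{\True}$ and $\outbr$ is initialised to $\False$, the operation $\applyop{\outbr}\BR_\False$ yields $\BR_\True$ precisely when at least one of the interleaved threads performs $\outbr.\setbr{\True}$, regardless of the interleaving order; the threads with non-satisfying guesses simply terminate via $\halt$ (axiom CSI2) without touching $\outbr$. Hence the value computed is $\True$ iff some assignment satisfies the conjunction of the present disjunctions, which is exactly $\iiiSATC_n(b_1,\ldots,b_n)$. As $X$ uses $k$ splitting instructions and a gadget of at most ten instructions (each disjunction having at most three literals) per index, $\psize(X) = O(n)$, which is polynomial in $n$. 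The step I expect to be the main obstacle is the rigorous verification of this acceptance condition: one must check, using the axioms for cyclic interleaving with thread splitting together with the use and apply axioms, that the net effect of interleaving the $2^k$ threads and applying the result on focus $\outbr$ is indeed the disjunction, over all guesses, of the individual success conditions, and that the jump offsets in the presence test and in the literal-test chains, as well as the propagation of parameter instantiations through the later splits, are all exactly right.
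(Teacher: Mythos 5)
Your proposal is correct and follows essentially the same route as the paper: the paper's proof also prepends $\split{\bpar{1}} \conc \ldots \conc \split{\bpar{k}}$ to a polynomial-length deterministic checker in which the guessed values of $v_1,\ldots,v_k$ are read by $\reply{\bpar{j}}$ tests, that checker computing exactly the ``absent-or-satisfied'' condition $\AND{i} (\Not v_{k+i} \Or \OR{} \alpha_k(i))$ that your gadgets $C_i$ implement, and it likewise concludes from the absence of $\outbr.\setbr{\False}$ that the result is $\True$ iff some guess-thread sets the output register. The only difference is one of packaging: the paper obtains the checker by invoking Theorem~\ref{theorem-comput-boolform} on that formula and then syntactically replacing the reads of the guessed variables by reply instructions, whereas you compile the same clause structure by hand in the style of $\inseqcnf$ from Theorem~\ref{theorem-comput-boolfunc}.
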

\begin{proof}
Let $n \in \Nat$,
let $k \in \Nat$ be the unique $k$ such that
$\ndisj(k) \leq n < \ndisj(k+1)$, and,
for each $b_1,\ldots,b_n \in \Bool$,
let $\phi_{b_1,\ldots,b_n}$ be the formula
$\AND{i \in [1,\ndisj(k)] \,\mathrm{s.t.}\,
  b_i = \True} \OR{} \alpha_k(i)$.
We have that $\iiiSATC_n(b_1,\ldots,b_n) = \True$ iff
$\phi_{b_1,\ldots,b_n}$ is satisfiable.
Let $\psi$ be the basic Boolean formula
$\AND{i \in [1,n]} (\Not v_{k+i} \Or \OR{} \alpha_k(i))$.
We have that $\phi_{b_{k+1},\ldots,b_{k+n}}(b_1,\ldots,b_k)$ iff
$\psi(b_1,\ldots,b_{k+n})$.
Let $X \in \ISbrna$ be such that the basic instruction
$\outbr.\setbr{\False}$ does not occur in $X$, $X$ computes the Boolean
function induced by $\psi$, and $\psize(X)$ is polynomial in $n$.
It follows from Theorem~\ref{theorem-comput-boolform} that such an $X$
exists.
Assume that instructions $\inbr{i}.\getbr$, $\ptst{\inbr{i}.\getbr}$,
and $\ntst{\inbr{i}.\getbr}$ with $i > k$ do not occur in $X$.
It is obvious that this assumption can be made without loss of
generality.
Let $Y \in \SISbr$ be the instruction sequence obtained from $X$ by
replacing, for each $i \in [1,k]$, all occurrences of the primitive
instructions $\inbr{i}.\getbr$, $\ptst{\inbr{i}.\getbr}$, and
$\ntst{\inbr{i}.\getbr}$ by the primitive instructions
$\reply{\bpar{i}}$, $\ptst{\reply{\bpar{i}}}$, and
$\ntst{\reply{\bpar{i}}}$, respectively, and
let $Z = \split{\bpar{1}} \conc \ldots \conc \split{\bpar{k}} \conc Y$.
We have that $Z \in \SISbr$, $Z$ splitting computes $\iiiSATC_n$, and
$\psize(Z)$ is polynomial in $n$.
Hence, $\iiiSATC \in \PLSIS$.
\qed
\end{proof}

Below we will define \PLSIS-completeness.
We would like to call it the counterpart of \NPTpoly-completeness in the
current setting, but the notion of \NPTpoly-completeness looks to be
absent in the literature on complexity theory.
The closest to \NPTpoly-completeness that we could find is
$p$-completeness for pD, a notion introduced in~\cite{SV85a}.
Like \NPT-completeness, \PLSIS-completeness will be defined in terms of
a reducibility relation.
Because \iiiSATC\ is closely related to \iiiSAT\ and
$\iiiSATC \in \PLSIS$, we expect \iiiSATC\ to be \PLSIS-complete.

Let $l,m,n \in \Nat$, and let
$\funct{f}{\Bool^n}{\Bool}$ and $\funct{g}{\Bool^m}{\Bool}$.
Then $f$ is \emph{length $l$ reducible} to $g$, written $f \llred{l} g$,
if there exist $\funct{h_1,\ldots,h_m}{\Bool^n}{\Bool}$ such that:
\begin{iteml}
\item
there exist $X_1,\ldots,X_m \in \ISbr$ such that $X_1,\ldots,X_m$
compute $h_1,\ldots,h_m$ and $\psize(X_1),\ldots,\psize(X_m) \leq l$;
\item
for all $b_1,\ldots,b_n \in \Bool$,
$f(b_1,\ldots,b_n) = g(h_1(b_1,\ldots,b_n),\ldots,h_m(b_1,\ldots,b_n))$.
\end{iteml}

Let $\indfam{f_n}{n \in \Nat}$ and $\indfam{g_n}{n \in \Nat}$ be
Boolean function families.
Then $\indfam{f_n}{n \in \Nat}$ is
\emph{non-uniform polynomial-length reducible} to
$\indfam{g_n}{n \in \Nat}$,
written $\indfam{f_n}{n \in \Nat} \plred \indfam{g_n}{n \in \Nat}$,
if there exists a polynomial function $\funct{q}{\Nat}{\Nat}$ such that:
\begin{iteml}
\item
for all $n \in \Nat$, there exist $l,m \in \Nat$ with
$l,m \leq q(n)$ such that $f_n \llred{l} g_m$.
\end{iteml}

Let $\indfam{f_n}{n \in \Nat}$ be a Boolean function family.
Then $\indfam{f_n}{n \in \Nat}$ is \emph{\PLSIS-complete} if:
\begin{iteml}
\item
$\indfam{f_n}{n \in \Nat} \in \PLSIS$;
\item
for all $\indfam{g_n}{n \in \Nat} \in \PLSIS$,
$\indfam{g_n}{n \in \Nat} \plred \indfam{f_n}{n \in \Nat}$.
\end{iteml}

The most important properties of non-uniform polynomial-length
reducibility and \PLSIS-completeness as defined above are stated in the
following two propositions.
\begin{proposition}
\label{prop-plred-props}
\mbox{}
\begin{enuml}
\item
if $\indfam{f_n}{n \in \Nat} \plred \indfam{g_n}{n \in \Nat}$ and
$\indfam{g_n}{n \in \Nat} \in \PLIS$,
then $\indfam{f_n}{n \in \Nat} \in \PLIS$;
\item
$\plred$ is reflexive and transitive.
\end{enuml}
\end{proposition}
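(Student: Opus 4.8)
The plan is to route both clauses through a single composition construction for instruction sequences, which I would isolate as a lemma. Suppose $\funct{h_1,\ldots,h_m}{\Bool^n}{\Bool}$ are computed by $X_1,\ldots,X_m \in \ISbr$ with $\psize(X_i) \leq l$ for each $i$, and $\funct{g}{\Bool^m}{\Bool}$ is computed by $Z \in \ISbr$ with $\psize(Z) \leq s$. Then the function $\funct{f}{\Bool^n}{\Bool}$ given by $f(b_1,\ldots,b_n) = g(h_1(b_1,\ldots,b_n),\ldots,h_m(b_1,\ldots,b_n))$ is computed by some $X \in \ISbr$ with $\psize(X) \leq m \mul l + s$. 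Granting this lemma, clause~1 and the transitivity half of clause~2 follow by instantiating $g$ appropriately and tracking sizes, while reflexivity is handled separately by the coordinate projections.

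For the construction I would set $X = \widetilde{X}_1 \conc \cdots \conc \widetilde{X}_m \conc \widetilde{Z}$, where each block is a renamed copy of the corresponding sequence. In $\widetilde{X}_i$ I first rename the auxiliary registers occurring in $X_i$ to a fresh range disjoint from those of the other blocks and from $\auxbr{1},\ldots,\auxbr{m}$ (at most $l$ registers occur in $X_i$, and $\Fociaux$ is infinite, so such ranges exist), and then redirect the output by replacing each basic instruction $\outbr.\setbr{b}$ with $\auxbr{i}.\setbr{b}$; since $\outbr$ is write-only in $\ISbr$, this leaves the behaviour of the block intact but leaves $h_i(b_1,\ldots,b_n)$ in $\auxbr{i}$. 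In $\widetilde{Z}$ I similarly rename internal auxiliary registers to a disjoint range and redirect the inputs by replacing each $\inbr{i}.\getbr$ with $\auxbr{i}.\getbr$ for $i \in [1,m]$, so that $\widetilde{Z}$ reads exactly the values just deposited. To make the blocks run one after another I replace, in each $\widetilde{X}_i$ but not in $\widetilde{Z}$, every $\halt$ by a forward jump that lands just past the end of its block; this keeps the block length fixed, so the offsets of the forward jumps already present are undisturbed. All these transformations preserve length, so $\psize(X) = \sum_{i=1}^{m} \psize(X_i) + \psize(Z) \leq m \mul l + s$.

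The step I expect to be the main obstacle is verifying that $X$ really computes $f$, where two points need care. First, each block must begin with its own auxiliary registers at the initial value $\False$; because the renamings make the internal ranges pairwise disjoint and disjoint from $\auxbr{1},\ldots,\auxbr{m}$, and because jump instructions are forward-only so no block is ever re-entered, each block does start with its registers untouched, while the read-only registers $\inbr{1},\ldots,\inbr{n}$ retain the input throughout. Second, a forward jump in some $X_i$ overshooting the end of $X_i$ would, after concatenation, fall into a later block; but such a jump deadlocks the standalone $X_i$, so it is unreachable on every input (otherwise $X_i$ would fail to compute $h_i$) and stays harmless. Granting these two points, a routine induction on the block structure, in the spirit of the proofs of Theorems~\ref{theorem-comput-output} and~\ref{theorem-PLIS-incl-PLSIS}, confirms that $X$ computes $f$.

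Clause~1 is then immediate: given $\indfam{f_n}{n \in \Nat} \plred \indfam{g_n}{n \in \Nat}$ via a polynomial $q$ and $\indfam{g_n}{n \in \Nat} \in \PLIS$ via a polynomial $h$ (taken nondecreasing), for each $n$ pick $l,m \leq q(n)$ with $f_n \llred{l} g_m$, take witnesses $X_1,\ldots,X_m$ and a $Z$ computing $g_m$ with $\psize(Z) \leq h(m)$, and apply the lemma to get $X$ computing $f_n$ with $\psize(X) \leq m \mul l + h(m) \leq q(n)^2 + h(q(n))$, a single polynomial in $n$; hence $\indfam{f_n}{n \in \Nat} \in \PLIS$. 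For reflexivity, the $i$-th projection $h_i(b_1,\ldots,b_n) = b_i$ is computed by $\ptst{\inbr{i}.\getbr} \conc \outbr.\setbr{\True} \conc \halt$, so $f_n \llred{3} f_n$ (with $m = n$) and $\indfam{f_n}{n \in \Nat} \plred \indfam{f_n}{n \in \Nat}$ holds with, say, $q(n) = n + 3$. For transitivity, let $\indfam{f_n}{n \in \Nat} \plred \indfam{g_n}{n \in \Nat}$ via $q_1$ and $\indfam{g_n}{n \in \Nat} \plred \indfam{e_n}{n \in \Nat}$ via $q_2$ (both nondecreasing). Fixing $n$, choose $l_1,m_1 \leq q_1(n)$ with $f_n \llred{l_1} g_{m_1}$ and witnesses $X_1,\ldots,X_{m_1}$ for $\funct{h_1,\ldots,h_{m_1}}{\Bool^n}{\Bool}$, then $l_2,m_2 \leq q_2(m_1)$ with $g_{m_1} \llred{l_2} e_{m_2}$ and witnesses $Y_1,\ldots,Y_{m_2}$ for $\funct{p_1,\ldots,p_{m_2}}{\Bool^{m_1}}{\Bool}$. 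Putting $r_k(b_1,\ldots,b_n) = p_k(h_1(b_1,\ldots,b_n),\ldots,h_{m_1}(b_1,\ldots,b_n))$, one checks that $f_n(b) = e_{m_2}(r_1(b),\ldots,r_{m_2}(b))$ for every input $b = (b_1,\ldots,b_n)$, and applying the lemma once per $k$ (with $p_k$ in the role of $g$) computes each $r_k$ by a sequence of size at most $m_1 \mul l_1 + l_2$. Hence $f_n \llred{l_3} e_{m_2}$ with $l_3 = m_1 \mul l_1 + l_2$, and since $l_3$ and $m_2$ are both at most $q_1(n)^2 + q_2(q_1(n))$, the polynomial $q_3(n) = q_1(n)^2 + q_2(q_1(n))$ witnesses $\indfam{f_n}{n \in \Nat} \plred \indfam{e_n}{n \in \Nat}$.
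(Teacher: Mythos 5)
Your proof is correct, but it is considerably more explicit than the paper's, which disposes of both clauses with a single sentence: they ``follow immediately from the definition of $\plred$.'' What the paper treats as immediate is precisely the composition lemma you isolate: to get $\indfam{f_n}{n \in \Nat} \in \PLIS$ from $f_n = g_m \circ (h_1,\ldots,h_m)$, or to compose two reductions for transitivity, one must manufacture a single short instruction sequence out of the $m{+}1$ given ones, and that is exactly your sequential-composition construction (renaming auxiliary registers to pairwise disjoint ranges, redirecting the output of the $i$-th block to $\auxbr{i}$, redirecting the reads of the final block from $\inbr{i}$ to $\auxbr{i}$, and replacing $\halt$ by jumps to the next block). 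The two pitfalls you single out are the right ones: each block must start with untouched auxiliary registers, and any jump that would overshoot its block is unreachable, since in the standalone sequence it would cause deadlock and so contradict the fact that the block computes a total function; it is also correctly observed that the redirections are harmless only because $\outbr$ is write-only and the $\inbr{i}$ are read-only in $\ISbr$. Your treatment of reflexivity via the projections $\ptst{\inbr{i}.\getbr} \conc \outbr.\setbr{\True} \conc \halt$ and of transitivity with the bookkeeping $q_3(n) = q_1(n)^2 + q_2(q_1(n))$ is likewise sound (the normalization to nondecreasing polynomials is harmless). So in substance you and the paper take the same route; the difference is that the paper leaves the composition construction and the size bound $m \mul l + s$ entirely implicit, at the same level of rigor as its other proofs (``it is easy to prove by induction\dots''), whereas your version supplies the content that actually carries the argument, which is what a self-contained proof of this proposition requires.
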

\begin{proof}
Both properties follow immediately from the definition of $\plred$.
\qed
\end{proof}
\begin{proposition}
\label{prop-PLSIS-compl-props}
\mbox{}
\begin{enuml}
\item
if $\indfam{f_n}{n \in \Nat}$ is \PLSIS-complete and
$\indfam{f_n}{n \in \Nat} \in \PLIS$, then $\PLSIS = \PLIS$;
\item
if $\indfam{f_n}{n \in \Nat}$ is \PLSIS-complete,
$\indfam{g_n}{n \in \Nat} \in \PLSIS$ and
$\indfam{f_n}{n \in \Nat} \plred \indfam{g_n}{n \in \Nat}$,
then $\indfam{g_n}{n \in \Nat}$ is \PLSIS-complete.
\end{enuml}
\end{proposition}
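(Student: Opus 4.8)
The plan is to derive both statements directly from Proposition~\ref{prop-plred-props} together with the inclusion $\PLIS \subseteq \PLSIS$ supplied by Theorem~\ref{theorem-PLIS-incl-PLSIS}; no reasoning beyond unfolding the definitions of $\plred$ and of \PLSIS-completeness should be required. I would treat the two items separately, each as a short chain of implications built from the cited results.

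For the first item, I would show $\PLSIS = \PLIS$ by establishing the two inclusions in turn. The inclusion $\PLIS \subseteq \PLSIS$ is already available from Theorem~\ref{theorem-PLIS-incl-PLSIS}, so the only work lies in the reverse inclusion $\PLSIS \subseteq \PLIS$. To obtain it, I would take an arbitrary $\indfam{g_n}{n \in \Nat} \in \PLSIS$. Since $\indfam{f_n}{n \in \Nat}$ is \PLSIS-complete, the completeness clause yields $\indfam{g_n}{n \in \Nat} \plred \indfam{f_n}{n \in \Nat}$. As $\indfam{f_n}{n \in \Nat} \in \PLIS$ by hypothesis, Proposition~\ref{prop-plred-props}(1) then gives $\indfam{g_n}{n \in \Nat} \in \PLIS$. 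Since $\indfam{g_n}{n \in \Nat}$ was arbitrary, this establishes $\PLSIS \subseteq \PLIS$, and combining the two inclusions closes the case.

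For the second item, I would verify the two defining clauses of \PLSIS-completeness for $\indfam{g_n}{n \in \Nat}$. The membership clause $\indfam{g_n}{n \in \Nat} \in \PLSIS$ is given outright. For the reducibility clause, I would fix an arbitrary $\indfam{h_n}{n \in \Nat} \in \PLSIS$. The \PLSIS-completeness of $\indfam{f_n}{n \in \Nat}$ supplies $\indfam{h_n}{n \in \Nat} \plred \indfam{f_n}{n \in \Nat}$, and the hypothesis supplies $\indfam{f_n}{n \in \Nat} \plred \indfam{g_n}{n \in \Nat}$. Transitivity of $\plred$ (Proposition~\ref{prop-plred-props}(2)) then yields $\indfam{h_n}{n \in \Nat} \plred \indfam{g_n}{n \in \Nat}$, which is exactly what the reducibility clause demands.

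Since neither part requires more than an application of the cited results, there is essentially no hard step; the only points that require care are bookkeeping ones --- invoking the correct part of Proposition~\ref{prop-plred-props} in each case and keeping the direction of $\plred$ straight, so that in the second item the two reductions $\indfam{h_n}{n \in \Nat} \plred \indfam{f_n}{n \in \Nat}$ and $\indfam{f_n}{n \in \Nat} \plred \indfam{g_n}{n \in \Nat}$ are composed in the order in which transitivity actually applies.
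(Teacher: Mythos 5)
Your proof is correct and follows essentially the same route as the paper's, which likewise derives part (1) from the definition of \PLSIS-completeness together with Proposition~\ref{prop-plred-props}(1) and the inclusion $\PLIS \subseteq \PLSIS$, and part (2) from the definition of \PLSIS-completeness plus the transitivity of $\plred$. The only difference is that you spell out the bookkeeping that the paper compresses into a one-line appeal to the definitions.
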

\begin{proof}
The first property follows immediately from the definition of
\PLSIS-com\-pleteness, and the second property follows immediately from
the definition of \PLSIS-completeness and the transitivity of $\plred$.
\qed
\end{proof}
The properties stated in Proposition~\ref{prop-PLSIS-compl-props} make
\PLSIS-completeness as defined above adequate for our purposes.
In the following proposition, non-uniform polynomial-length reducibility
is related to polynomial-time Karp reducibility ($\ptred$).
\begin{proposition}
\label{prop-plred-psred}
Let $\indfam{f_n}{n \in \Nat}$ and $\indfam{g_n}{n \in \Nat}$ be the
Boolean function families, and let $f$ and $g$ be the unique functions
$\funct{f,g}{\seqof{\Bool}}{\Bool}$ such that for each $n \in \Nat$,
for each $w \in \Bool^n$, $f(w) = f_n(w)$ and $g(w) = g_n(w)$.
Then $f \ptred g$ only if
$\indfam{f_n}{n \in \Nat} \plred \indfam{g_n}{n \in \Nat}$.
\end{proposition}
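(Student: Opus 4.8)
The plan is to unfold the two reducibility notions and to realise a Karp reduction from $f$ to $g$ by taking, for each input length, the output bits of the reduction as the reducing functions. So first I would extract from $f \ptred g$ a polynomial-time computable function $\funct{r}{\seqof{\Bool}}{\seqof{\Bool}}$ with $f(w) = g(r(w))$ for all $w \in \seqof{\Bool}$, together with a polynomial $\funct{p}{\Nat}{\Nat}$ such that the length of $r(w)$ is at most $p(n)$ whenever $w \in \Bool^n$.

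Next, fixing $n \in \Nat$, I would put $m = p(n)$ and define, for each $i \in [1,m]$, the function $\funct{h_i}{\Bool^n}{\Bool}$ by letting $h_i(b_1,\ldots,b_n)$ be the $i$-th symbol of $r(b_1,\ldots,b_n)$ when $i$ does not exceed the length of $r(b_1,\ldots,b_n)$, and $\False$ otherwise. On inputs of length $n$ each $h_i$ is computed by a deterministic Turing machine in time polynomial in $n$ (simulate $r$ and read off one symbol), and this time bound does not depend on $i$; since $\PT \subseteq \PTpoly$ and $\PTpoly = \PLIS$ by Theorem~\ref{theorem-PLIS-is-PTpoly}, it follows that there are instruction sequences $X_1,\ldots,X_m \in \ISbr$ computing $h_1,\ldots,h_m$ whose lengths are bounded by one and the same polynomial in $n$, say $l$. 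Choosing a polynomial $\funct{q}{\Nat}{\Nat}$ dominating both $m = p(n)$ and this $l$ then gives $l,m \leq q(n)$, which is exactly what the definition of $\plred$ asks for, provided the defining equation $f_n(b_1,\ldots,b_n) = g_m(h_1(b_1,\ldots,b_n),\ldots,h_m(b_1,\ldots,b_n))$ of $f_n \llred{l} g_m$ holds.

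Verifying that equation is where the main obstacle lies. By construction the tuple $(h_1(w),\ldots,h_m(w))$ is $r(w)$ extended with occurrences of $\False$ up to length $m$, so $g_m(h_1(w),\ldots,h_m(w))$ equals $g$ of this padded string, whereas correctness demands $g(r(w))$; these coincide for free only when the length of $r(w)$ is the same for every $w \in \Bool^n$, i.e.\ when the reduction is length-regular. Since a Karp reduction cannot in general be made length-regular without altering $g$-values, the hard part is precisely to reconcile the varying lengths of $r(w)$ with the single target arity $m$ forced by $f_n \llred{l} g_m$. I would resolve this by appealing to the fact that, for the families at hand, appending $\False$ leaves the value unchanged, that is $g_{k+1}(c_1,\ldots,c_k,\False) = g_k(c_1,\ldots,c_k)$ for all $k$ and all $c_1,\ldots,c_k \in \Bool$ — the very convergence property exhibited for $\iiiSATC$ in Section~\ref{sect-hypothesis}. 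Under this invariance the padded and unpadded evaluations of $g$ agree, the defining equation follows, and all remaining points (membership of the $X_i$ in $\ISbr$ and their polynomial length, and the bounds $l,m \leq q(n)$) are routine consequences of Theorem~\ref{theorem-PLIS-is-PTpoly} and the polynomial time bound on $r$.
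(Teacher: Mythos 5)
Your route---extracting the Karp reduction $r$, taking its bit-functions as the $h_i$, and invoking $\PT \subseteq \PTpoly$ together with Theorem~\ref{theorem-PLIS-is-PTpoly}---is precisely the argument that the paper's own one-sentence proof appeals to, and you are right that it stalls at the length mismatch between $r(w)$ and the single target arity $m$. The genuine gap is in your repair: the invariance $g_{k+1}(c_1,\ldots,c_k,\False) = g_k(c_1,\ldots,c_k)$ is not a hypothesis of Proposition~\ref{prop-plred-psred}. The proposition quantifies over \emph{arbitrary} Boolean function families $\indfam{g_n}{n \in \Nat}$, and that invariance is a special feature of $\iiiSATC$, built in by its construction in Section~\ref{sect-hypothesis}; there are no privileged ``families at hand'' in the statement you are asked to prove. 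What your argument establishes is the proposition under the additional assumption that the target family is padding-invariant, which is a strictly weaker statement.

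Moreover, this is not a gap that a cleverer argument could close from the stated hypotheses, which makes your identification of the obstacle the most valuable part of the proposal. Take $g_m$ to be the constant $m$-ary function whose value is $\True$ iff $m$ is odd, and take $f_n(b_1,\ldots,b_n) = b_1$ for $n \geq 1$, with $f_0$ the constant $\False$. The map sending $w$ to a fixed string of length $1$ if $f(w) = \True$ and to a fixed string of length $2$ otherwise is a polynomial-time Karp reduction witnessing $f \ptred g$. Yet for every $m$ and every choice of $h_1,\ldots,h_m$, the function $g_m(h_1(w),\ldots,h_m(w))$ is constant in $w$, while $f_n$ is not constant for $n \geq 1$; so $f_n \llred{l} g_m$ fails for all $l,m$, and hence $\indfam{f_n}{n \in \Nat} \plred \indfam{g_n}{n \in \Nat}$ does not hold. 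The implication therefore genuinely needs a side condition---length-regularity of the reduction, or padding-invariance of $\indfam{g_n}{n \in \Nat}$ as you assumed---and the paper's proof, which claims immediacy from the definitions, $\PT \subseteq \PTpoly$, and Theorem~\ref{theorem-PLIS-is-PTpoly}, never confronts this point. Note that in the paper's sole application of the proposition, the reduction to $\iiiSATC$ in Theorem~\ref{theorem-3SATC-is-PLSIS-compl}, your additional assumption is satisfied, so your argument does suffice for that application.
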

\begin{proof}
This property follows immediately from the definitions of $\ptred$ and
$\plred$, the well-known fact that $\PT \subseteq \PTpoly$
(see e.g.~\cite{AB09a}, Remark~6.8) , and
Theorem~\ref{theorem-PLIS-is-PTpoly}.
\qed
\end{proof}
The property stated in Proposition~\ref{prop-plred-psred} allows for
results concerning polynomial-time Karp reducibility to be reused in the
current setting.

Now we turn to the anticipated \PLSIS-completeness of \iiiSATC.
\begin{theorem}
\label{theorem-3SATC-is-PLSIS-compl}
\iiiSATC\ is \PLSIS-complete.
\end{theorem}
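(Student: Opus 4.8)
The plan is to use the characterization of $\PLSIS$-completeness as membership together with hardness. Membership, i.e.\ $\iiiSATC \in \PLSIS$, is already settled by Theorem~\ref{theorem-3SATC-in-PLSIS}, so the whole task reduces to showing that every $\indfam{g_n}{n \in \Nat} \in \PLSIS$ satisfies $\indfam{g_n}{n \in \Nat} \plred \iiiSATC$. One is tempted to invoke Proposition~\ref{prop-plred-psred}, but that is not directly possible: a family in $\PLSIS$ need not correspond to a computable function $\funct{g}{\seqof{\Bool}}{\Bool}$ (just as $\PLIS$ contains families corresponding to uncomputable functions), so there need be no single $g$ with $g \ptred \iiiSATC$. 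The non-uniformity must therefore be fed in as advice.

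First I would isolate the existential content of splitting computation. Let $L$ be the set of all pairs $\tup{X,b}$ with $X \in \SISbr$ and $b \in \seqof{\Bool}$ such that the splitting computation of $X$ on input $b$ yields output $\True$. Since the output register starts at $\False$ and the instruction $\outbr.\setbr{\False}$ does not occur in members of $\SISbr$, the splitting computation yields $\True$ exactly when some branch of the induced thread tree executes $\outbr.\setbr{\True}$, i.e.\ exactly when there is an assignment $c$ of Boolean values to the parameters occurring in $X$ for which the corresponding deterministic branch sets the output. Hence $L \in \NPT$: a nondeterministic machine guesses such an assignment $c$ (of size at most $\psize(X)$) and then simulates the single branch deterministically in time polynomial in the size of $\tup{X,b}$. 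By Theorem~\ref{theorem-3SATC-is-NP-compl}, $\iiiSATC$ is $\NPT$-complete, so there is a polynomial-time computable $\rho$ with $\tup{X,b} \in L$ iff $\iiiSATC(\rho(\tup{X,b})) = \True$; after padding we may assume the output length $m$ of $\rho(\tup{X,b})$ depends only on the length of $\tup{X,b}$.

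Next I would plug in the advice. Fix $\indfam{g_n}{n \in \Nat} \in \PLSIS$ and, for each $n$, a witnessing $X_n \in \SISbr$ splitting computing $g_n$ with $\psize(X_n)$ polynomial in $n$. For $b \in \Bool^n$ we then have $g_n(b) = \iiiSATC_m(h_1(b),\ldots,h_m(b))$, where $m$ is the (now $n$-dependent only) output length of $\rho$ on inputs of the size of $\tup{X_n,b}$ and $h_j(b)$ is the $j$-th bit of $\rho(\tup{X_n,b})$; because $|\tup{X_n,b}|$ is polynomial in $n$ and $\rho$ is polynomial-time, $m$ is polynomial in $n$. Each $h_j$ is a function $\funct{h_j}{\Bool^n}{\Bool}$ which, with $X_n$ treated as advice of polynomial size, is computable in time polynomial in $n$ uniformly in $j$. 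By the $\PTpoly \subseteq \PLIS$ simulation of Turing machines that take advice used in the proof of Theorem~\ref{theorem-PLIS-is-PTpoly}, each such $h_j$ is therefore computed by some $X_j \in \ISbr$ with $\psize(X_j)$ bounded by one and the same polynomial $l$ in $n$. This yields $g_n \llred{l} \iiiSATC_m$ with $l$ and $m$ both bounded by a polynomial $q(n)$, hence $\indfam{g_n}{n \in \Nat} \plred \iiiSATC$; together with Theorem~\ref{theorem-3SATC-in-PLSIS} this gives the $\PLSIS$-completeness of $\iiiSATC$.

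The main obstacle I expect is the bookkeeping in the last step: one must check that the single polynomial $l$ bounding $\psize(X_j)$ can be chosen uniformly over the polynomially many indices $j \in [1,m]$, which requires the advice size and the running time of the bit-extraction computation for $\rho$ to be bounded uniformly in $j$, and that the padding making $m$ depend on $n$ alone is compatible with the $\iiiSATC_m$ input format. Care is also needed in justifying the $\NPT$ membership of $L$, in particular in arguing that guessing a single assignment to all parameters of $X$ and simulating one deterministic branch faithfully captures whether any branch of the splitting computation sets the output, given the deadlock conventions for uninstantiated $\reply{\bpar{j}}$ and for re-splitting an already instantiated parameter.
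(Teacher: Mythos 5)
Your proposal is correct in substance but follows a genuinely different route from the paper's proof. The paper proves hardness by a direct, instruction-sequence-level construction: after assuming (without loss of generality) that $\outbr.\setbr{\True}$ occurs exactly once in the witness $X$, say as $u_l$, it writes down for each input $b_1,\ldots,b_n$ the explicit reachability formula $v_1 \And v_l \And \AND{i \in [2,k]} (v_i \Iff \OR{j \in B(i)} v_j)$, whose satisfiability expresses the existence of an execution path through $X$ reaching $u_l$, and then composes the standard formula-to-\iiiCNF\ transformation with the \iiiCNF-to-string encoding from the proof of Theorem~\ref{theorem-3SATC-is-NP-compl}. You instead package the same existential content into a generic language of pairs $\tup{X,b}$, place it in \NPT\ by guess-and-simulate, and invoke the \NPT-completeness of \iiiSATC\ as a black box to obtain the reduction $\rho$. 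The paper's route is more self-contained and exhibits concretely how splitting execution maps onto satisfiability; your route is more modular and, notably, treats the non-uniformity more scrupulously: your remark that Proposition~\ref{prop-plred-psred} cannot be applied directly, because a \PLSIS\ family may correspond to an uncomputable function, is exactly right, and the paper's own concluding appeal to that proposition is loose in precisely this way --- what is really used there, as in your final step, is the per-$n$ argument that the output bits are polynomial-time computable with $X_n$ as advice, hence computable by polynomial-length instruction sequences from $\ISbr$ via Theorem~\ref{theorem-PLIS-is-PTpoly}, with a single polynomial bounding $l$ and $m$ as $\llred{l}$ requires.

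One point you flag but should actually nail down: the machine you describe decides the language ``some branch executes $\outbr.\setbr{\True}$'', which coincides with ``the splitting computation yields $\True$'' only when the interleaved thread terminates. If any branch deadlocks (a jump out of range, $\reply{\bp}$ on an uninstantiated parameter, or a re-split of an already instantiated one), the apply operator returns $\DivServ$ rather than $\BR_\True$, no matter what the other branches did. The fix is to define $L$ as the existential-branch language from the outset; for the witnesses $X_n$, which splitting compute total Boolean functions and therefore admit no deadlocking branch on any input, the two notions agree, and that is all the hardness argument uses. (The paper's equivalence between $f_n(b_1,\ldots,b_n) = \True$ and reachability of $u_l$ rests on the same implicit fact.) With that patch, and the routine uniformity-of-$l$ and padding bookkeeping you mention (padding is unproblematic since $\iiiSATC_{n+1}(b_1,\ldots,b_n,\False) = \iiiSATC_n(b_1,\ldots,b_n)$), your argument goes through.
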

\begin{proof}
By Theorem~\ref{theorem-3SATC-in-PLSIS}, we have that
$\iiiSATC \in \PLSIS$.
It remains to prove that for all $\indfam{f_n}{n \in \Nat} \in \PLSIS$,
$\indfam{f_n}{n \in \Nat} \plred \iiiSATC$.

Suppose that $\indfam{f_n}{n \in \Nat} \in \PLSIS$.
Let $n \in \Nat$, and let $X \in \SISbr$ be such that $X$ splitting
computes $f_n$ and $\psize(X)$ is polynomial in $n$.
Assume that $\outbr.\setbr{\True}$ occurs only once in $X$.
This assumption can be made without loss of generality: \linebreak[2]
multiple occurrences can always be eliminated by replacement by jump
instructions (on execution, instructions possibly following those
occurrences do not change the state of the Boolean register named
$\outbr$).
Suppose that $X = u_1 \conc \ldots \conc u_k$, and let $l \in [1,k]$ be
such that $u_l$ is either $\outbr.\setbr{\True}$,
$\ptst{\outbr.\setbr{\True}}$ or $\ntst{\outbr.\setbr{\True}}$.

We look for a transformation that gives, for each
$b_1,\ldots,b_n \in \Bool$, a Boolean formula $\phi_{b_1,\ldots,b_n}$
such that $f_n(b_1,\ldots,b_n) = \True$ iff $\phi_{b_1,\ldots,b_n}$ is
satisfiable.
Notice that, for fixed initial states of the Boolean registers named
$\inbr{1},\ldots,\inbr{n}$, it is possible that there exist several
execution paths through $X$ because of the split instructions that may
occur in $X$.
We have that $f_n(b_1,\ldots,b_n) = \True$ iff there exists an execution
path through $X$ that reaches $u_l$ if the initial states of the Boolean
registers named $\inbr{1},\ldots,\inbr{n}$ are $b_1,\ldots,b_n$,
respectively.
The existence of such an execution path corresponds to the
satisfiability of the Boolean formula
$v_1 \And v_l \And \AND{i \in [2,k]} (v_i \Iff \OR{j \in B(i)} v_j)$,
where, for each $i \in [2,k]$, $B(i)$ is the set of all $j \in [1,i]$
for which execution may proceed with $u_i$ after execution of $u_j$ if
the initial states of the Boolean registers named
$\inbr{1},\ldots,\inbr{n}$ are $b_1,\ldots,b_n$, respectively.
Let $\phi_{b_1,\ldots,b_n}$ be this Boolean formula.
Then $f_n(b_1,\ldots,b_n) = \True$ iff $\phi_{b_1,\ldots,b_n}$ is
satisfiable.

For some $m \in \Nat$, $\phi_{b_1,\ldots,b_n}$ still has to be
transformed into a $w_{b_1,\ldots,b_n} \in \Bool^m$ such that
$\phi_{b_1,\ldots,b_n}$ is satisfiable iff
$\iiiSATC_m(w_{b_1,\ldots,b_n}) = \True$.
We look upon this transformation as a composition of two
transformations: first $\phi_{b_1,\ldots,b_n}$ is transformed into a
\iiiCNF-formula $\psi_{b_1,\ldots,b_n}$ such that
$\phi_{b_1,\ldots,b_n}$ is satisfiable iff $\psi_{b_1,\ldots,b_n}$ is
satisfiable, and next, for some $m \in \Nat$, $\psi_{b_1,\ldots,b_n}$ is
transformed into a $w_{b_1,\ldots,b_n} \in \Bool^m$ such that
$\psi_{b_1,\ldots,b_n}$ is satisfiable iff
$\iiiSATC_m(w_{b_1,\ldots,b_n}) = \True$.

\sloppy
It is easy to see that the size of $\phi_{b_1,\ldots,b_n}$ is polynomial
in $n$ and that $\tup{b_1,\ldots,b_n}$ can be transformed into
$\phi_{b_1,\ldots,b_n}$ in time polynomial in $n$.
It is well-known that each Boolean formula $\psi$ can be transformed in
time polynomial in the size of $\psi$ into a \iiiCNF-formula $\psi'$,
with size and number of variables linear in the size of $\psi$, such
that $\psi$ is satisfiable iff $\psi'$ is satisfiable
(see e.g.~\cite{BDG88a}, Theorem~3.7).
Moreover, it is known from the proof of
Theorem~\ref{theorem-3SATC-is-NP-compl} that each \iiiCNF-formula $\phi$
can be transformed in time polynomial in the size of $\phi$ into a
$w \in \Bool^{\ndisj(l)}$, where $l$ is the number of variables in
$\phi$, such that $\iiiSAT(\phi) = \iiiSATC(w)$.
From these facts, and Proposition~\ref{prop-plred-psred}, it follows
easily that $\indfam{f_n}{n \in \Nat}$ is non-uniform polynomial-length
reducible to $\iiiSATC$.
\qed
\end{proof}

It happens that \PLSIS\ and \NPTpoly\ coincide.
\begin{theorem}
\label{theorem-PLSIS-is-NPTpoly}
$\PLSIS = \NPTpoly$.
\end{theorem}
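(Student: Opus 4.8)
The plan is to prove the two inclusions separately, following the same pattern as the proof of Theorem~\ref{theorem-PLIS-is-PTpoly}: the inclusion $\PLSIS \subseteq \NPTpoly$ by simulating a splitting instruction sequence on a nondeterministic Turing machine that takes advice, and the inclusion $\NPTpoly \subseteq \PLSIS$ by turning a nondeterministic verifier into a splitting instruction sequence in the style of the proof of Theorem~\ref{theorem-3SATC-in-PLSIS}.

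For $\PLSIS \subseteq \NPTpoly$, suppose $\indfam{f_n}{n \in \Nat} \in \PLSIS$ and, for each $n$, let $X_n \in \SISbr$ splitting compute $f_n$ with $\psize(X_n)$ polynomial in $n$. First I would make precise the semantic fact, implicit in the definition of splitting computation, that $f_n(b_1,\ldots,b_n) = \True$ exactly when some branch of the cyclic interleaving of $\extr{X_n}$ reaches an occurrence of $\outbr.\setbr{\True}$; this holds because $\outbr.\setbr{\False}$ does not occur in $\SISbr$ sequences and the output register starts in state $\False$. A nondeterministic Turing machine can then take a binary description of $X_n$ as advice and simulate a single branch: deterministic instructions are executed as in the proof of Theorem~\ref{theorem-PLIS-is-PTpoly}, a splitting instruction $\split{\bpar{i}}$ is simulated by a nondeterministic guess for $\bpar{i}$, and a replying instruction $\reply{\bpar{i}}$ reads the guessed value. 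Because every instruction is executed at most once (single-pass), each branch has length polynomial in $n$, so the simulation runs in polynomial time, and the machine accepts iff some branch reaches $\outbr.\setbr{\True}$. Hence $f \in \NPTpoly$.

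For $\NPTpoly \subseteq \PLSIS$, I would use the characterization of $\NPTpoly$ by nondeterministic verifiers, i.e.\ $f \in \NPTpoly$ iff there is a polynomial $p$ and a family of polynomial-size Boolean circuits $C_n$ such that $f_n(x) = \True$ iff $C_n(x,y) = \True$ for some $y \in \Bool^{p(n)}$. The idea is to guess not only the witness $y$ but also the value of every gate of $C_n$, and to verify consistency by a single basic Boolean formula $\Phi_n$ that is the conjunction of the output condition with one constant-size equivalence per gate relating its guessed value to its inputs. Then $f_n(x) = \True$ iff $\Phi_n(x,z) = \True$ for some assignment $z$ to the guessed variables, and $\Phi_n$ is a basic Boolean formula of polynomial size. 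By Theorem~\ref{theorem-comput-boolform} there is an $\ISbrna$ sequence, free of $\outbr.\setbr{\False}$ and using no auxiliary registers, that computes the Boolean function induced by $\Phi_n$; replacing the reads of the guessed variables by the corresponding $\reply{\bpar{j}}$ instructions and prefixing split instructions $\split{\bpar{1}} \conc \ldots \conc \split{\bpar{t}}$ for the guessed variables, exactly as in the proof of Theorem~\ref{theorem-3SATC-in-PLSIS}, yields an $X \in \SISbr$ of polynomial length that splitting computes $f_n$.

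The main obstacle is this second inclusion, and specifically the fact that $\SISbr$ sequences may use neither auxiliary registers nor the instruction $\outbr.\setbr{\False}$: a direct circuit evaluation would need auxiliary registers to store gate values. The resolution is to move this storage into the nondeterministic guessing, so that verification reduces to evaluating a polynomial-size basic Boolean formula, which Theorem~\ref{theorem-comput-boolform} handles without auxiliary registers and without $\outbr.\setbr{\False}$. An alternative, more economical looking route for this direction is to invoke completeness: since $\iiiSAT$ is $\NPT$-complete, every $f \in \NPTpoly$ is non-uniformly polynomial-time reducible to $\iiiSAT$, hence $\indfam{f_n}{n \in \Nat} \plred \iiiSATC$, after which one would combine $\iiiSATC \in \PLSIS$ (Theorem~\ref{theorem-3SATC-in-PLSIS}) with closure of $\PLSIS$ under $\plred$; establishing that closure, however, runs into the same auxiliary-register difficulty and amounts to essentially the construction just described.
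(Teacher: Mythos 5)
Your proof is correct, and one of its two halves takes a genuinely different route from the paper. The direction $\PLSIS \subseteq \NPTpoly$ is essentially the paper's argument: the paper packages the nondeterminism into a checking function $g \in \PTpoly$, computed by a machine that takes a description of $X_n$ as advice and reads the choices at the splits off the certificate, while you use a nondeterministic machine with the same advice that guesses those choices; these are interchangeable formulations of $\NPTpoly$, and both rest on the single-pass property to bound the simulation time. The direction $\NPTpoly \subseteq \PLSIS$ is where you diverge. The paper starts from a checking function $g \in \PTpoly$, obtains polynomial-length $\ISbr$ sequences for it via Theorem~\ref{theorem-PLIS-is-PTpoly}, and then invokes the transformation from the proof of Theorem~\ref{theorem-PLIS-incl-PLSIS} to eliminate auxiliary registers in favour of splits (and, implicitly, to turn the certificate inputs into splits as well). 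You instead sidestep auxiliary registers with a Tseitin-style encoding: guess the verifying circuit's gate values together with the witness, so that verification reduces to evaluating a polynomial-size basic Boolean formula, which Theorem~\ref{theorem-comput-boolform} handles in $\ISbrna$ without auxiliary registers and without $\outbr.\setbr{\False}$, after which the reply-and-prefixed-splits construction of Theorem~\ref{theorem-3SATC-in-PLSIS} applies verbatim. What each buys: the paper's route maximally reuses its earlier machinery (Theorems~\ref{theorem-comput-output}, \ref{theorem-PLIS-is-PTpoly} and~\ref{theorem-PLIS-incl-PLSIS}), whereas yours is independent of Theorem~\ref{theorem-PLIS-incl-PLSIS} and makes fully explicit the two points the paper's citation leaves implicit, namely how the $\SISbr$ restrictions are met and where the splits for the certificate bits come from. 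Your closing observation is also on target: the paper never establishes closure of $\PLSIS$ under $\plred$, and it proves Theorem~\ref{theorem-3SATC-is-PLSIS-compl} by a direct construction, presumably for exactly the reason you identify.
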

\begin{proof}
It follows easily from the definitions concerned that $f \in \NPTpoly$
iff there exist a $k \in \Nat$ and a $g \in \PTpoly$ such that, for all
$w \in \seqof{\Bool}$:
\begin{ldispl}
f(w) = \True \Iff
\Exists{c \in \seqof{\Bool}}{|c| \leq {|w|}^k \And g(w,c) = \True}\;.
\end{ldispl}%
Below, we will refer to such a $g$ as a \emph{checking function} for
$f$.
We will first prove the inclusion $\NPTpoly \subseteq \PLSIS$ and then
the inclusion $\PLSIS \subseteq \NPTpoly$.

$\NPTpoly \subseteq \PLSIS$:
Suppose that $f \in \NPTpoly$.
Then there exists a checking function for $f$.
Let $g$ be a checking function for $f$, and
let $\indfam{g_n}{n \in \Nat}$ be the Boolean function family
corresponding to $g$.
Because $g \in \PTpoly$, we have by Theorem~\ref{theorem-PLIS-is-PTpoly}
that $\indfam{g_n}{n \in \Nat} \in \PLIS$.
This implies that, for all $n \in \Nat$, there exists an $X \in \ISbr$
such that $X$ computes $g_n$ and $\psize(X)$ is polynomial in $n$.
For each $n \in \Nat$, let $X_n$ be such an $X$.
Moreover, let $\indfam{f_n}{n \in \Nat}$ be the Boolean function family
corresponding to $f$.
For each $n \in \Nat$, there exists an $m \in \Nat$ such that a
$Z \in \SISbr$ can be obtained from $X_m$ in the way followed in the
proof of Theorem~\ref{theorem-PLIS-incl-PLSIS} such that $Z$ splitting
computes $f_n$ and $\psize(Z)$ is polynomial in $n$.
Hence, each Boolean function family in $\NPTpoly$ is also in $\PLSIS$.

$\PLSIS \subseteq \NPTpoly$:
Suppose that $\indfam{f_n}{n \in \Nat}$ in $\PLSIS$.
Then, for all $n \in \Nat$, there exists an $X \in \SISbr$ such that $X$
splitting computes $f_n$ and $\psize(X)$ is polynomial in $n$.
For each $n \in \Nat$, let $X_n$ be such an $X$.
Moreover, let $\funct{f}{\seqof{\Bool}}{\Bool}$ be the function
corresponding to $\indfam{f_n}{n \in \Nat}$.
Then a checking function $g$ for $f$ can be computed by a Turing machine
as follows: on an input of size $n$, it takes a binary description of
$X_n$ as advice and then simulates the execution of $X_n$ treating the
additional input as a description of the choices to make at each split.
It is easy to see that, under the assumption that instructions
$\split{\bpar{i}}$, $\ptst{\split{\bpar{i}}}$,
$\ntst{\split{\bpar{i}}}$, $\reply{\bpar{i}}$,
$\ptst{\reply{\bpar{i}}}$, $\ntst{\reply{\bpar{i}}}$
and $\fjmp{i}$ with $i > \psize(X_n)$ do not occur in $X_n$, the size of
the description of $X_n$ and the number of steps that it takes to
simulate the execution of $X_n$ are both polynomial in $n$.
It is obvious that we can make the assumption without loss of
generality.
Hence, each Boolean function family in $\PLSIS$ is also in $\NPTpoly$.
\qed
\end{proof}

A known result about classical complexity classes turns out to be a
corollary of
Theorems~\ref{theorem-PLIS-is-PTpoly}, \ref{theorem-3SATC-is-NP-compl},
\ref{theorem-3SATC-is-PLSIS-compl} and~\ref{theorem-PLSIS-is-NPTpoly}.
\begin{corollary}
\label{corollary-equiv-hypotheses}
$\NPT \not\subseteq \PTpoly$ is equivalent to\,
$\NPTpoly \not\subseteq \PTpoly$.
\end{corollary}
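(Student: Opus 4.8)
The plan is to collapse both non-containment statements onto a single pivot condition, namely $\iiiSATC \in \PTpoly$, and to show that each side of the claimed equivalence holds precisely when this pivot fails. Throughout I identify, as elsewhere in the paper, the Boolean function family $\iiiSATC = \indfam{\iiiSATC_n}{n \in \Nat}$ with the corresponding function $\funct{\iiiSATC}{\seqof{\Bool}}{\Bool}$, so that membership in $\PLIS$ and in $\PTpoly$ may be used interchangeably via Theorem~\ref{theorem-PLIS-is-PTpoly}.

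First I would treat the uniform side. Since $\iiiSATC$ is \NPT-complete (Theorem~\ref{theorem-3SATC-is-NP-compl}), it lies in \NPT\ and every language in \NPT\ is polynomial-time Karp reducible to it. Because \PTpoly\ is closed under polynomial-time Karp reductions (one composes the reduction, which has polynomial-size circuits, with the polynomial-size circuits deciding the target), this yields $\NPT \subseteq \PTpoly$ iff $\iiiSATC \in \PTpoly$: the forward implication is immediate from $\iiiSATC \in \NPT$, and the reverse follows from completeness together with the stated closure property. Negating, $\NPT \not\subseteq \PTpoly$ iff $\iiiSATC \notin \PTpoly$.

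Next I would treat the non-uniform side entirely inside the machinery of the paper. Since $\iiiSATC$ is \PLSIS-complete (Theorem~\ref{theorem-3SATC-is-PLSIS-compl}), Proposition~\ref{prop-PLSIS-compl-props}(1) gives that $\iiiSATC \in \PLIS$ forces $\PLSIS = \PLIS$; conversely $\iiiSATC \in \PLSIS$ makes the reverse trivial, and $\PLIS \subseteq \PLSIS$ always holds by Theorem~\ref{theorem-PLIS-incl-PLSIS}. Hence $\PLSIS \subseteq \PLIS$ iff $\iiiSATC \in \PLIS$. Translating through the coincidence results $\PLSIS = \NPTpoly$ (Theorem~\ref{theorem-PLSIS-is-NPTpoly}) and $\PLIS = \PTpoly$ (Theorem~\ref{theorem-PLIS-is-PTpoly}), this reads $\NPTpoly \subseteq \PTpoly$ iff $\iiiSATC \in \PTpoly$. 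Negating, $\NPTpoly \not\subseteq \PTpoly$ iff $\iiiSATC \notin \PTpoly$.

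Combining the two pivots gives $\NPT \not\subseteq \PTpoly$ iff $\iiiSATC \notin \PTpoly$ iff $\NPTpoly \not\subseteq \PTpoly$, which is the claim. I expect no genuine obstacle here, as this is a corollary; the only points needing care are the bookkeeping of the family/function identification and the invocation of the one external fact, closure of \PTpoly\ under Karp reductions, on the uniform side. One could instead route the uniform side through Theorem~\ref{theorem-equiv-hypotheses}, which already packages $\iiiSATC \notin \PLIS$ equivalent to $\iiiSAT \notin \PTpoly$, but the direct argument above keeps both sides symmetric about the single pivot $\iiiSATC \in \PTpoly$.
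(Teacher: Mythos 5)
Your proposal is correct and follows essentially the same route as the paper: the paper gives no written-out proof, merely noting that the corollary follows from Theorems~\ref{theorem-PLIS-is-PTpoly}, \ref{theorem-3SATC-is-NP-compl}, \ref{theorem-3SATC-is-PLSIS-compl} and~\ref{theorem-PLSIS-is-NPTpoly}, and your argument is precisely the natural instantiation of that derivation, pivoting both non-containments on whether $\iiiSATC \in \PTpoly$ (equivalently, via Theorem~\ref{theorem-PLIS-is-PTpoly}, whether $\iiiSATC \in \PLIS$). Your added ingredients --- Proposition~\ref{prop-PLSIS-compl-props}(1), Theorem~\ref{theorem-PLIS-incl-PLSIS}, and the standard closure of \PTpoly\ under polynomial-time Karp reductions --- are exactly the bookkeeping the paper leaves implicit.
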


Notice that it is justified by Theorem~\ref{theorem-PLSIS-is-NPTpoly} to
regard the definition of \PLSIS-completeness given in this paper as a
definition of \NPTpoly-completeness in the setting of single-pass
instruction sequences and consequently to read
Theorem~\ref{theorem-3SATC-is-PLSIS-compl} as \iiiSATC\ is
\NPTpoly-complete.

\section{Super-polynomial Feature Elimination Complexity Hypotheses}
\label{sect-feature-elim}

In this section, we introduce three complexity hypotheses which concern
restrictions on the instruction sequences with which Boolean functions
are computed.

By Theorem~\ref{theorem-PLIS-incl-PLSIS}, we have that
$\PLIS \subseteq \PLSIS$.
We hypothesize that $\PLSIS \not\subseteq \PLIS$.
We can think of \PLIS\ as roughly obtained from \PLSIS\ by restricting
the computing instruction sequences to non-splitting  instruction
sequences.
This motivates the formulation of the hypothesis that
$\PLSIS \not\subseteq \PLIS$ as a feature elimination complexity
hypothesis.

The
\emph{first super-polynomial feature elimination complexity hypothesis}
is the following hypothesis:
\begin{hypothesis}
\label{hypothesis-feat-elim-1}
Let $\funct{\rho}{\SISbr}{\ISbr}$ be such that, for each $X \in \SISbr$,
$\rho(X)$ computes the same Boolean function as $X$.
Then $\psize(\rho(X))$ is not polynomially bounded in $\psize(X)$.
\end{hypothesis}

We can also think of complexity classes obtained from \PLIS\ by
restricting the computing instruction sequences further.
They can, for instance, be restricted to instruction sequences in which:
\begin{iteml}
\item
the primitive instructions $f.m$, $\ptst{f.m}$ and $\ntst{f.m}$ with
$f \in \Fociaux$ do not occur;
\item
for some fixed $k \in \Nat$, the jump instructions $\fjmp{l}$ with
$l > k$ do not occur;
\item
the primitive instructions $\outbr.\setbr{\False}$,
$\ptst{\outbr.\setbr{\False}}$ and $\ntst{\outbr.\setbr{\False}}$ do not
occur;
\item
multiple termination instructions do not occur.
\end{iteml}
Below we introduce two hypotheses that concern the first two of these
restrictions.

The
\emph{second super-polynomial feature elimination complexity hypothesis}
is the following hypothesis:
\begin{hypothesis}
\label{hypothesis-feat-elim-2}
Let $\funct{\rho}{\ISbr}{\ISbrna}$ be such that, for each $X \in \ISbr$,
$\rho(X)$ computes the same Boolean function as $X$.
Then $\psize(\rho(X))$ is not polynomially bounded in $\psize(X)$.
\end{hypothesis}
The
\emph{third super-polynomial feature elimination complexity hypothesis}
is the following hypothesis:
\begin{hypothesis}
\label{hypothesis-feat-elim-3}
Let $k \in \Nat$, and let $\funct{\rho}{\ISbrna}{\ISbrna}$ be such that,
for each $X \in \ISbrna$, $\rho(X)$ computes the same Boolean function as
$X$ and, for each jump instruction $\fjmp{l}$ occurring in $\rho(X)$,
$l \leq k$.
Then $\psize(\rho(X))$ is not polynomially bounded in $\psize(X)$.
\end{hypothesis}

These hypotheses motivate the introduction of subclasses of $\PLIS$.
For each $k,l \in \Nat$, \PLRIS{k}{l}\ is the class of all Boolean
function families $\indfam{f_n}{n \in \Nat}$ that satisfy:
\begin{quote}
there exists a polynomial function $\funct{h}{\Nat}{\Nat}$ such that
for all $n \in \Nat$ there exists an $X \in \ISbr$ such that:
\begin{iteml}
\item
$X$ computes $f_n$ and $\psize(X) \leq h(n)$;
\item
instructions $f.m$, $\ptst{f.m}$ and $\ntst{f.m}$ with $f = \auxbr{i}$
for some $i > k$ do not occur in $X$;
\item
instructions $\fjmp{i}$ with $i > l$ do not occur in $X$.
\end{iteml}
\end{quote}
Moreover, for each $k,l \in \Nat$,    \PLRIS{k}{*}\ is the class
$\Union{l \in \Nat} \PLRIS{k}{l}$, and \PLRIS{*}{l}\ is the class
$\Union{k \in \Nat} \PLRIS{k}{l}$.

The hypotheses formulated above, can also be expresses in terms of these
subclasses of \PLIS:
Hypotheses~\ref{hypothesis-feat-elim-1}, \ref{hypothesis-feat-elim-2},
and~\ref{hypothesis-feat-elim-3} are equivalent to
$\PLSIS \not\subseteq \PLIS$, $\PLIS \not\subseteq \PLRIS{0}{*}$, and
$\PLRIS{0}{*} \not\subseteq \PLRIS{0}{k}$ for all $k \in \Nat$,
respectively.

\section{Conclusions}
\label{sect-concl}

We have developed theory concerning non-uniform complexity based on the
simple idea that each $n$-ary Boolean function can be computed by a
single-pass instruction sequence that contains only instructions to read
and write the contents of Boolean registers, forward jump instructions,
and a termination instruction.

We have defined the non-uniform complexity classes \PLIS\ and \PLSIS,
counterparts of the classical non-uniform complexity classes \PTpoly\
and \NPTpoly, and the notion of \PLSIS-completeness using a non-uniform
reducibility relation.
We have shown that \PLIS\ and \PLSIS coincide with \PTpoly\ and
\NPTpoly.
This makes it clear that there are close connections between non-uniform
complexity theory based on single-pass instruction sequences and
non-uniform complexity theory based on Turing machines with advice or
Boolean circuits.
We have also shown that \iiiSATC, a problem closely related to \iiiSAT,
is both \NPT-complete and \PLSIS-complete.

Moreover, we have formulated a counterpart of the well-known complexity
theoretic conjecture that $\NPT \not\subseteq \PTpoly$ and three
complexity hypotheses which concern restrictions on the instruction
sequences used for computation.
The latter three hypotheses are intuitively appealing in the setting of
single-pass instruction sequences.
The first of these has a natural counterpart in the setting of Turing
machines with advice, but not in the setting of Boolean circuits.
The second and third of these appear to have no natural counterparts in
the settings of both Turing machines with advice and Boolean circuits.

The approaches to computational complexity based on loop
programs~\cite{MR67a}, straight-line programs~\cite{GLF77a}, and
branching programs~\cite{BDFP86a} appear to be the closest related to
the approach followed in this paper.

The notion of loop program is far from abstract or general: a loop
program consists of assignment statements and possibly nested loop
statements of a special kind.
To our knowledge, this notion is only used in the work presented
in~\cite{MR67a}.
That work is mainly concerned with upper bounds on the running time of
loop programs that can be determined syntactically.

The notion of straight-line program is relatively close to the notion of
single-pass instruction sequence: a straight-line program is a sequence
of steps, where in each step a language is generated by selecting an
element from an alphabet or by taking the union, intersection or
concatenation of languages generated in previous steps.
In other words, a straight-line program can be looked upon as a
single-pass instruction sequence without test and jump instructions,
with basic instructions which are rather distant from those usually
found.
In~\cite{GLF77a}, a complexity measure for straight-line programs is
introduced which is closely related to Boolean circuit size.
To our knowledge, the notion of straight-line program is only used in
the work presented in~\cite{GLF77a,BDG85a}.

The notion of branching program is actually a generalization of the
notion of decision tree from trees to graphs, so the term branching
program seems rather far-fetched.
However, width two branching programs bear a slight resemblance to
threads as considered in basic thread algebra.
Branching programs are related to non-uniform space complexity like
Boolean circuits are related to non-uniform time complexity.
Like the notion of Boolean circuit, the notion of branching program
looks to be lasting in complexity theory (see e.g.~\cite{Weg00a}).

\subsubsection*{Acknowledgements}
We thank anonymous referees of a previous submission of this paper for
remarks that have led to significant improvements of several proof
outlines.

\appendix

\section{Beyond Instruction Sequence Congruence}
\label{app-beyond-is-congr}

Instruction sequence equivalence is a congruence and the axioms of \PGA\
are complete for this congruence.
In this appendix, we show that there are interesting coarser congruences
for which additional axioms for can be devised.

It follows from the defining equations of thread extraction that
instruction sequences that are the same after removal of chains of
forward jumps in favour of single jumps exhibit the same behaviour.
Such instruction sequences are called structurally congruent.
The additional axioms for structural congruence in the case of \PGAfin\
are given in Table~\ref{axioms-struct-congr}.%
\begin{table}[!tb]
\caption{Axioms for structural congruence}
\label{axioms-struct-congr}
\begin{eqntbl}
\begin{eqncol}
\fjmp{n+1} \conc u_1 \conc \ldots \conc u_n \conc \fjmp{0} =
\fjmp{0} \conc u_1 \conc \ldots \conc u_n \conc \fjmp{0}     \\
\fjmp{n+1} \conc u_1 \conc \ldots \conc u_n \conc \fjmp{m} =
\fjmp{m+n+1} \conc u_1 \conc \ldots \conc u_n \conc \fjmp{m}
\end{eqncol}
\end{eqntbl}
\end{table}
In this table, $n$ and $m$ stand for arbitrary numbers from $\Nat$ and
$u_1$, \ldots, $u_n$ stand for arbitrary primitive instructions
from~$\PInstr$.

If we take
$\set{f.\getbr \where f \in \Fociin \union \Fociaux} \union
 \set{f.\setbr{b} \where f \in \Fociaux \union \set{\outbr} \And
                         b \in \set{\True,\False}}$
for the set $\BInstr$ of basic instructions, then certain instruction
sequences can be identified because they exhibit the same behaviour on
the intended interaction with Boolean register services.
Such instruction sequences are called behaviourally congruent.
The additional axioms for behavioural congruence in this case are given
in Table~\ref{axioms-behav-congr}.%
\begin{table}[!tb]
\caption{Axioms for behavioural congruence}
\label{axioms-behav-congr}
\begin{eqntbl}
\begin{eqncol}
\ptst{f.\setbr{\True}} = f.\setbr{\True}                     \\
\ntst{f.\setbr{\False}} = f.\setbr{\False}                   \\
\ntst{f.\setbr{\True}} \conc f.\setbr{\True} =
\fjmp{1} \conc f.\setbr{\True}                               \\
\ptst{f.\setbr{\False}} \conc f.\setbr{\False} =
\fjmp{1} \conc f.\setbr{\False}                              \\
\ntst{f.\setbr{\True}} \conc \fjmp{n+2} \conc \fjmp{n+2} \conc
u_1 \conc \ldots \conc u_n \conc f.\setbr{\True} =
{} \\ \qquad\;\,
\fjmp{1} \conc \fjmp{n+2} \conc \fjmp{n+2} \conc
u_1 \conc \ldots \conc u_n \conc f.\setbr{\True}             \\
\ptst{f.\setbr{\False}} \conc \fjmp{n+2} \conc \fjmp{n+2} \conc
u_1 \conc \ldots \conc u_n \conc f.\setbr{\False} =
{} \\ \qquad\;\,
\fjmp{1} \conc \fjmp{n+2} \conc \fjmp{n+2} \conc
u_1 \conc \ldots \conc u_n \conc f.\setbr{\False}
\end{eqncol}
\end{eqntbl}
\end{table}
In this table,
$f$ stands for an arbitrary focus from $\Fociaux \union \set{\outbr}$,
$n$ stands for an arbitrary number from $\Nat$, and
$u_1$, \ldots, $u_n$ stand for arbitrary primitive instructions from
$\PInstr$.

\section{Explicit Substitution for Linear-size Thread Extraction}
\label{app-expl-subst}

In this appendix, we show that the combinatorial explosions mentioned at
the end of Section~\ref{sect-PGA} can be eliminated if we add explicit
substitution to \BTA.
We write $\Var$ for the countably infinite set of variables of sort
$\Thr$.

The extension of \BTA\ with explicit substitution has the constants and
operators of \BTA\ and in addition:
\begin{iteml}
\item
for each $x \in \Var$, the \emph{substitution} operator
$ \funct{\subst{\ph}{x}}{\Thr}{\Thr}$.
\end{iteml}
The additional axioms are given in Table~\ref{axioms-subst}.%
\begin{table}[!tb]
\caption{Axioms for substitution operators}
\label{axioms-subst}
\begin{eqntbl}
\begin{saxcol}
\subst{p}{X} X = p                                    & & \axiom{ES1} \\
\subst{p}{X} Y = Y                & \mif X \not\equiv Y & \axiom{ES2} \\
\subst{p}{X} \Stop = \Stop                            & & \axiom{ES3} \\
\subst{p}{X} \DeadEnd = \DeadEnd                      & & \axiom{ES4} \\
\subst{p}{X} (\pcc{q}{a}{r}) =
\pcc{(\subst{p}{X} q)}{a}{(\subst{p}{X} r)}           & & \axiom{ES5}
\end{saxcol}
\end{eqntbl}
\end{table}
In this table, $X$ and $Y$ stand for arbitrary variables from $\Var$,
$p$, $q$ and $r$ stand for arbitrary terms of this extension of \BTA\
with explicit substitution, and $a$ stands for an arbitrary action from
$\BActTau$.

The \emph{size} of a term of the extension of \BTA\ with explicit
substitution is defined as follows:
\begin{ldispl}
\begin{eqncol}
\tsize(X) = 1\;, \\
\tsize(\Stop) = 1\;, \\
\tsize(\DeadEnd) = 1\;,
\end{eqncol}
\qquad\qquad
\begin{eqncol}
\tsize(\pcc{p}{a}{q}) = \tsize(p) + \tsize(q) + 1\;, \\
\tsize(\subst{p}{X} q) = \tsize(p) + \tsize(q) + 1\;.
\end{eqncol}
\end{ldispl}%

The following theorem states that linear-size thread extraction is
possible if explicit substitution is added to \BTA.
\begin{theorem}
\label{theorem-tsize}
There exists a function $\rho$ from the set of all closed \PGAfin\ terms
to the set of all closed terms of the extension of \BTA\ with explicit
substitution such that for all closed \PGAfin\ terms $P$,
$\extr{P} = \rho(P)$ and $\tsize(\rho(P)) \leq 4 \mul \psize(P) + 1$.
\end{theorem}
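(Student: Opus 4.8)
The plan is to exploit the substitution operator to share the repeated subthreads that cause the blow-up. The explosion in thread extraction comes solely from the test equations, where $\extr{\ptst{a} \conc x}$ (and the $\ntst{a}$ variant) mentions both $\extr{x}$ and $\extr{\fjmp{2} \conc x}$, so that the behaviour of the common tail is duplicated. My idea is to give each tail of the instruction sequence a name and to define that name exactly once. I would write $P$ in the form $u_1 \conc \ldots \conc u_n$, so that $\psize(P) = n$, and for each position $i$ let $t_i$ denote $\extr{u_i \conc \ldots \conc u_n}$, the behaviour from position $i$ onwards; thus $\extr{P} = t_1$. I introduce a fresh variable $X_i$ as a placeholder for $t_i$ and, for each $i$, a small \emph{body} term $s_i$ encoding a single step of thread extraction in terms of these placeholders: $s_i = \Stop$ if $u_i = \halt$; $s_i = \pcc{X_{i+1}}{a}{X_{i+1}}$ if $u_i = a$; $s_i = \pcc{X_{i+1}}{a}{X_{i+2}}$ if $u_i = \ptst{a}$; $s_i = \pcc{X_{i+2}}{a}{X_{i+1}}$ if $u_i = \ntst{a}$; and $s_i = X_{i+l}$ if $u_i = \fjmp{l}$. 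Wherever such a reference points past position $n$ (or when $l=0$), I replace the placeholder by $\DeadEnd$, matching the boundary cases of thread extraction. The decisive structural fact is that every primitive instruction refers only to \emph{later} positions, so each $s_i$ mentions only variables $X_j$ with $j > i$.

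I would then define
\[
\rho(P) = \subst{s_n}{X_n}(\subst{s_{n-1}}{X_{n-1}}(\cdots(\subst{s_2}{X_2}\, s_1)\cdots))\;,
\]
a single flat chain of substitutions that binds each placeholder exactly once, in increasing order of index. (No variable $X_1$ is needed, since nothing refers back to the first position.) Because references run forward and the substitutions are resolved innermost first, i.e.\ in the order $X_2, X_3, \ldots, X_n$, each step $\subst{s_j}{X_j}$ eliminates all remaining occurrences of $X_j$ while only introducing occurrences of strictly higher-indexed variables; hence after the whole chain no free variable survives and $\rho(P)$ is a closed term.

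The size bound is the easy part. Each body satisfies $\tsize(s_i) \leq 3$ (the maximum being the postconditional bodies, of size $1+1+1$), and since $\tsize(\subst{p}{X} q) = \tsize(p) + \tsize(q) + 1$ the chain telescopes: $\tsize(\rho(P)) = \sum_{i=1}^{n} \tsize(s_i) + (n-1) \leq 3n + (n-1) = 4n - 1 \leq 4 \mul \psize(P) + 1$.

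The main obstacle is the correctness claim $\extr{P} = \rho(P)$, where the substitution bookkeeping must be carried out carefully. I would argue through the partial chains $C_k = \subst{s_k}{X_k}(\cdots(\subst{s_2}{X_2}\, s_1)\cdots)$ by induction on $k$, establishing the invariant that substituting the genuine tails $t_j$ for the surviving placeholders $X_j$ (all with $j > k$, by the forward-reference property) in $C_k$ yields $t_1$. The base case is the defining equation of thread extraction at position $1$; the inductive step rewrites $C_{k+1} = \subst{s_{k+1}}{X_{k+1}} C_k$ using a composition-of-substitutions lemma, itself a routine consequence of axioms ES1--ES5 together with the forward-reference property (which guarantees the relevant variables cannot clash), and the identity that substituting the genuine tails into $s_{k+1}$ returns $t_{k+1}$, again an instance of a defining equation of thread extraction. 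Taking $k = n$, where no placeholder survives, gives $C_n = \rho(P) = t_1 = \extr{P}$. The points needing real care are precisely the treatment of references past position $n$ and the verification that forward-referencing does rule out capture in the composition lemma; everything else is bookkeeping.
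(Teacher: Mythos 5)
Your proposal is correct and takes essentially the same route as the paper's proof: one placeholder variable per position, the same per-instruction bodies ($\Stop$ for $\halt$, $\DeadEnd$ for $\fjmp{0}$, a later placeholder for a jump, and postconditional compositions over the next one or two placeholders for plain and test instructions), a nested chain of explicit substitutions resolving the placeholders in increasing order of index, and the same telescoping size count yielding the $4 \mul \psize(P) + 1$ bound. If anything, your handling of the boundary is slightly more careful than the paper's: you replace every reference past position $n$ by $\DeadEnd$, whereas the paper's $\rho'_i$ treats only the last instruction specially via $\extr{u_{k+1}}$, so that, e.g., for $\ptst{a} \conc \halt$ it leaves a dangling variable $x_3$ where $\extr{\ptst{a} \conc \halt}$ has $\DeadEnd$.
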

\begin{proof}
For each $i \in \Nat$, let $x_i \in \Var$.
Take $\rho$ as follows ($k,l > 0$):
\begin{ldispl}
\rho(u_1) = \extr{u_1}\;, \\
\rho(u_1 \conc \ldots \conc u_{k+1}) =
\subst{\extr{u_{k+1}}}{x_{k+1}}(\subst{\rho'_k(u_k)}{x_k} \ldots
                        (\subst{\rho'_1(u_1)}{x_1} x_1) \ldots )\;,
\end{ldispl}%
where, for each $i \in [1,k]$:
\begin{ldispl}
\begin{aeqns}
\rho'_i(\halt)    & = & \Stop\;, \\
\rho'_i(\fjmp{0}) & = & \DeadEnd\;, \\
\rho'_i(\fjmp{l}) & = & x_{i+l}\;,
\end{aeqns}
\qquad\qquad
\begin{aeqns}
\rho'_i(a)        & = & \pcc{x_{i+1}}{a}{x_{i+1}}\;, \\
\rho'_i(\ptst{a}) & = & \pcc{x_{i+1}}{a}{x_{i+2}}\;, \\
\rho'_i(\ntst{a}) & = & \pcc{x_{i+2}}{a}{x_{i+1}}\;.
\end{aeqns}
\end{ldispl}%
It is easy to prove by induction on $\psize(P)$ that for all closed
\PGAfin\ terms $P$,
$\extr{P} = \rho(P)$ and $\tsize(\rho(P)) \leq 4 \mul \psize(P) + 1$.
\qed
\end{proof}

\bibliographystyle{splncs03}
\bibliography{TA}

\end{document}